\pgfplotsset{compat=1.18}
\newtheorem{theorem}{Theorem}[section]
\newtheorem{proposition}[theorem]{Proposition}
\newtheorem{lemma}[theorem]{Lemma}
\theoremstyle{definition}
\newtheorem{example}[theorem]{Example}
\tikzset{
            voter/.style={circle,draw,minimum size=0.5cm,inner sep=0, fill = black!05, font=\footnotesize},
          }
\title{Selecting Interlacing Committees}
\author[1]{Chris Dong}
\author[2]{Martin Bullinger}
\author[2]{Tomasz W\k{a}s}
\author[3]{Larry Birnbaum}
\author[3]{Edith Elkind}
\affil[1]{\small School of Computation, Information and Technology, Technical University of Munich, Germany}
\affil[2]{ \small Department of Computer Science, University of Oxford, UK}
\affil[3]{ \small School of Engineering, Northwestern University, USA\protect\\ \vspace*{0.1cm} chris.dong@tum.de, \{martin.bullinger,tomasz.was\}@cs.ox.ac.uk, \{l-birnbaum,edith.elkind\}@northwestern.edu}
\date{}
\newcommand{\Pairs}{\textsc{Pairs}}
\newcommand{\Connections}{\textsc{Cons}}
\newcommand{\AV}{\textsc{AV}}
\newcommand{\CC}{\textsc{CC}}
\newcommand{\EJR}{\textsc{EJR}}
\newcommand{\allpairs}{V^{(2)}}
\newcommand{\orderof}{\mathcal O}
\newcommand{\score}{\mathbf{score}}
\newcommand{\concomp}[2]{K_{#2}(#1)}
\newcommand{\conpair}{\Connections}
\newcommand{\dyn}{\mathbf{dp}}
\newcommand{\opt}{\mathbf{opt}}
\definecolor{myred}{HTML}{db3f3d}
\begin{document}

\maketitle

\begin{abstract}
    Polarization is a major concern for a well-functioning society. Often, mass polarization of a society is driven by polarizing political representation, even when the latter is easily preventable. The existing computational social choice methods for the task of committee selection are not designed to address this issue. We enrich the standard approach to committee selection by defining two quantitative measures that evaluate how well a given committee interconnects the voters. Maximizing these measures aims at avoiding polarizing committees. While the corresponding maximization problems are \NP-complete in general, we obtain efficient algorithms for profiles in the voter-candidate interval domain. Moreover, we analyze the compatibility of our goals with other representation objectives, such as excellence, diversity, and proportionality. We identify trade-offs between approximation guarantees, and describe algorithms that achieve simultaneous constant-factor approximations.
\end{abstract}


\section{Introduction}
In recent years, the increasing prevalence of polarization has been a global concern, discussed not just by social scientists, but by society at large, and accompanied by extensive media coverage~\citep{FAP11a,LeMa16a}.
Polarization is commonly defined as the division of a group into clusters of completely different opinions or ideologies \citep{FiAb08a}.
It may result in
greater ideological extremes and a reduced willingness to compromise or engage with differing views.
As such, polarization is a major roadblock for the modern society, which has to work towards a consensus when resolving global challenges, such as fighting poverty, climate change, or pandemics (see \cite{levin2021dynamics} and the references therein).

Importantly, polarization can occur as a phenomenon concerning an entire society or only at the level of political representation, e.g., when considering the distribution of opinions among the delegates in a parliament.
The former is often referred to as \emph{mass polarization}, while the latter is known as \emph{elite polarization} \citep[see, e.g.,][]{AbSa08a}.

Academic literature broadly agrees that the phenomenon of elite polarization is on the rise.
For example, when depicting the members of the US Congress in terms of their ideology on a scale ranging from the most liberal to the most conservative, one can observe a significant shift when comparing the 87th Congress in the 1960s and the 111th Congress around 2010, see Figure 2.1 in the book by \citet{Fior17a}.
However, whether the society as a whole is polarized as well is less clear.
\citet{FAP11a} argue that there is no conclusive evidence for mass polarization, even when considering highly sensitive topics such as abortion.
For instance, they provide evidence that the elite polarization among delegates is already much higher than the polarization among party identifiers \citep[][Table~2.1]{FAP11a}.
They argue that the media play an important role in creating an inaccurate picture of mass polarization~\citep{FAP11a}.
Indeed, the media can have a significant effect on the perception of and conclusions drawn from elite polarization~\citep{LeMa16a}.

This view is opposed by \citet{AbSa08a}, who analyze data from the American National Election Studies.
They provide extensive evidence that mass polarization has increased significantly since the 1970s.
Moreover, their results suggest mass polarization based on geography (i.e., different ideologies across US states) or religious beliefs.

Against this background, we aim to offer a novel perspective on the intertwined phenomena of mass polarization at the broad level of a society as a whole and elite polarization at the level of the society's political, parliamentary representation.
We highlight how an election can lead to a parliament that is far more polarized than the society it represents, and we propose quantitative measures that evaluate a set of representatives according to how well it interlaces the electorate.
We believe that our ideas can be developed to prevent
societies with broadly moderate opinions from being represented by
unnecessarily polarized parliaments.

We approach polarized democratic representation through the lens of social choice theory.
In this line of research, parliamentary elections have been conceptualized as so-called multiwinner voting rules.
Their formal study, especially in the approval-based setting, in which each voter's ballot specifies a set of approved candidates, has received extensive attention in recent years \citep{FSST17a,LaSk22b}.

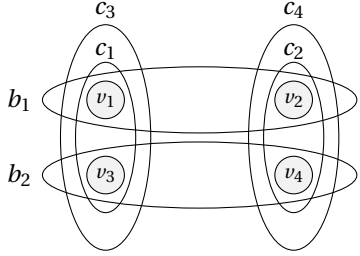
\begin{figure}
    \centering

    \begin{tikzpicture}
        \node[voter] (v1) at (0,1) {$v_1$};
        \node[voter] (v2) at (2.5,1) {$v_2$};
        \node[voter] (v3) at (0,0) {$v_3$};
        \node[voter] (v4) at (2.5,0) {$v_4$};

            \draw[scale = 1.1] ($(v1)!.5!(v2)$) ellipse (1.9cm and .4cm);
            \draw[scale = 1.1] ($(v3)!.5!(v4)$) ellipse (1.9cm and .4cm);
            \node at ($(v1)+(-1.15,0)$) {$b_1$};
            \node at ($(v3)+(-1.15,0)$) {$b_2$};

            \draw[scale = 1] ($(v1)!.5!(v3)$) ellipse (.4cm and 1cm);
            \draw[scale = 1.5] ($(v1)!.5!(v3)$) ellipse (.4cm and 1cm);
            \node at ($(v1)+(0,.65)$) {$c_1$};
            \node at ($(v1)+(0,1.2)$) {$c_3$};
            \draw[scale = 1] ($(v2)!.5!(v4)$) ellipse (.4cm and 1cm);
            \draw[scale = 1.5] ($(v2)!.5!(v4)$) ellipse (.4cm and 1cm);
            \node at ($(v2)+(0,.65)$) {$c_2$};
            \node at ($(v2)+(0,1.2)$) {$c_4$};
    \end{tikzpicture}

    \caption{A preference profile with four voters $v_1,\dots, v_4$ is depicted as hypergraph, where the voters are nodes and the candidates $b_i, c_j$ are hyperedges connecting the voters approving them. In this profile, typical multiwinner voting rules
    do not distinguish between selecting $\{c_1,c_2,c_3,c_4\}$ and $\{c_1,c_2,b_1,b_2\}$.}
    \label{fig:motivation1}
    \end{figure}

\begin{example}
\label{ex:motivation}
As a motivating example, consider the voting scenario illustrated in \Cref{fig:motivation1}.
There are four voters, indicated by the gray circles, as well as six candidates.
Each candidate is represented by an ellipse that covers the voters approving this candidate.
For instance, candidate~$b_1$ is approved by voters~$v_1$ and~$v_2$, whereas candidates~$c_1$ and~$c_3$ are both approved by the same set of voters, namely~$v_1$ and~$v_3$. In practice, this is likely to happen when $c_1$ and $c_3$ represent very similar ideologies.

Assume that we want to select a committee consisting of $4$ candidates.
Two reasonable choices would be to select $W = \{c_1,c_2,c_3,c_4\}$ or $W' = \{c_1,c_2,b_1,b_2\}$.
Both selections lead to committees in which each voter approves exactly two selected candidates.
Moreover, multiwinner voting rules typically considered in the literature, such as Thiele rules and their sequential variants \citep{Thie95a}, Phragm\'{e}n's rule \citep{Phra99a}, or the more recently introduced method of equal shares \citep{PeSk20a}, do not distinguish between these two choices.
There is, however, a difference.
While $W$ divides the electorate into two perfectly separated subsets of voters, $W'$ connects all voters.
From the perspective of polarization, $W$ looks polarizing while $W'$ bridges all voters.
Thus, we need novel voting rules that can tease out this distinction.
In our paper, we aim to provide a principled approach that favors committees in the spirit of $W'$.\footnote{Of course, while we try to highlight the phenomenon at hand with a simple example, our construction extends to elections with many voters or candidates: e.g., each voter in the example might represent a quarter of a large electorate.}
\hfill$\lhd$
\end{example}

We define two simple objectives that aim to measure how well a committee interlaces the voters.
First, we consider maximizing the number of {\em pairs} of voters approving a common candidate (the \Pairs{} objective).
While optimizing this objective leads to the selection of $W'$ in \Cref{ex:motivation}, it can still result in voters being split into large disconnected clusters (cf.~\Cref{example:connections}).
The reason is that \Pairs{} only counts direct, but not indirect links.
Hence, as a second objective we count the number of pairs of voters that are {\em connected} by a sequence of candidates (the \Connections{} objective).

While both objectives immediately give rise to voting rules---select a committee maximizing \Pairs{} or \Connections{}---we primarily view them as measures of polarization.
Whenever they are high, polarization in the selected committee is low.
Thus, we investigate the feasibility of maximizing our objectives, both on their own and in combination with the goals of diversity and proportionality.

We first consider the computational problem of maximizing \Pairs{} or \Connections{} in isolation
(\Cref{sec:comp}).
Unfortunately, for unrestricted preferences this problem is \NP-hard.
However, we obtain a polynomial-time algorithm for the structured domain of voter-candidate interval (VCI) preferences \citep{GodBatSkoFal-2021-2DApprovals},
where voters and candidates are represented by intervals on the real line and a voter approves a candidate if and only if their intervals intersect.
Such preferences are reasonable in parliamentary elections where candidates can often be ordered on a left-right spectrum and voters approve candidates that are close to them on this spectrum.

In \Cref{sec:trade-offs}, we investigate whether one can select interlacing committees while achieving other desiderata.
We first consider \emph{excellence}, as measured by the \emph{approval voting} (\AV{}) score, i.e., the total number of approvals received by committee members.
There is a straightforward way to obtain what is essentially an $\alpha$-approximation of the \Pairs{} objective together with a $(1 - \alpha)$-approximation of the \AV{} score: one can simply use an $\alpha$-fraction of the committee for the former and a $(1 - \alpha)$-fraction for the latter.
Unfortunately, it turns out that this simple algorithm is essentially optimal:
We prove that if a voting rule provides an $\alpha$-approximation of the \Pairs{} objective and a $\beta$-approximation of the \AV{} score, then necessarily $\alpha + \beta \le 1$.
Next, we look at \emph{diversity}, as captured by the
\emph{Chamberlin--Courant} (\CC{}) score,
which is the number of voters who approve at least one candidate in the committee.
The \CC{} score is closely related to the \Pairs{} objective: the former measures the coverage of voters, while the latter measures the coverage of pairs of voters.
Hence, it is quite surprising that we obtain the same trade-off as for \Pairs{} and \AV{}.
Further, we study the compatibility with \emph{proportionality}, as captured by the extended justified representation axiom (\EJR{}).
Again, we show the same tight trade-off:
If a voting rule provides an $\alpha$-approximation of the \Pairs{} objective and a $\beta$-approximation of EJR, then $\alpha + \beta \le 1$.

It is more challenging to combine the \Connections{} objective with \AV{}, \CC{}, \EJR{}, or even \Pairs{}.
This is due to an interesting qualitative difference between \Pairs{} and \Connections{}.
While a constant fraction of the best candidates achieves a constant approximation of \Pairs{},
for \Connections{} this is not the case.
Hence, we obtain worse trade-offs:
If a voting rule provides an $\alpha^2$-approximation of \Connections{} and a $\beta$-approximation of \AV{}, \CC{}, \EJR{}, or \Pairs{}, then $\alpha + \beta \le 1$.
Note that since $\alpha < 1$, it holds that $\alpha^2 < \alpha$.
Hence, for instance, $\alpha^2 = \frac 13$ and $\beta = \frac 12$ is already impossible.
Moreover, for \Connections{} and \AV{} specifically,
the trade-off that we obtain is even more subtle, which suggests that finding a matching lower bound might be challenging.
Nevertheless, we make first steps towards this goal, by
showing that under suitable domain restrictions there always exists a committee that achieves a $\frac 14$-approximation of \Connections{} and a $\frac 12$-approximation of \AV{}, \CC{}, \EJR{}, or \Pairs{}, which matches our upper bound.

\section{Related Work}

In the existing literature,
multiwinner voting rules usually aim
to guarantee
the selection of the best candidates based on their individual quality
\citep{BarCoe-2008-CommitteeMonotonicity,ElkFalSkoSli-2017-Multiwinner},
representation of diverse opinions
\citep{ChaCou-1983-ChamberlinCourant,ElkIsm-2015-CCforOWA},
or proportional treatment of cohesive voter groups
\citep{Thie95a,Phra99a,Mon-1995-Monroe,PeSk20a}.
An overview of the most common
approval-based multiwinner voting rules
is given in the book by~\citet{LaSk22b}.
To the best of our knowledge,
no rules were proposed so far
with the explicit goal of
reducing polarization or
connecting voters.

A line of research in multiwinner voting looks at the possibility
of combining various objectives
as well as their inherent trade-offs,
similar to our study in \Cref{sec:trade-offs}.
\citet{LacSko-2020-MultiwinnerApproximations} provide worst-case bounds on \AV{} and \CC{} scores of committees output by popular voting rules.
For ordinal preferences, \citet{KocKolElkFalTal-2019-MultigoalMultiwinner} analyze the complexity of finding committees that offer an optimal combination of approximations of two objectives.
Moreover, a series of works look at \AV{} and \CC{} scores that can be guaranteed by committees that satisfy proportionality axioms \citep{BriPet-2024-MultigoalMultiwinner,ElkFalIgaManSchEtal-2024-PriceOfJR,FaiVilMeiGal-2022-MultigoalPB}.

A number of authors study the relationship
between an electoral system (or, more narrowly, a voting rule) and the way the candidates
choose to strategically place themselves on the political spectrum
\citep{Cox-1985-EquilibriumApproval,MyeWeb-1993-VotingEquilibria,BolMatTroXef-2019-VotingEquilibria,KurBar-2024-PolarizingProp}.
Such an analysis can indicate whether a rule
prevents, or reinforces, polarization.
Our approach differs in that we analyze the direct effect of a voting rule on
the polarization caused by a chosen committee,
while the aforementioned works analyze
how preferences evolve based on a given rule.

\citet{DelJanKacSzu-2024-ConflictingPair} pursue
a goal that can be seen as opposite to ours:
selecting a most polarizing committee of size $2$;
they focus on ordinal preferences.
In a similar vein, \citet{ColGraHidMacNav-2023-Polarization} proposed measures of how \emph{divisive}, or polarizing, a single candidate is.

\section{Model}
We start by introducing key notation and proposing two ways of measuring how well a committee interconnects the voters.
For a positive integer $k \in \mathbb{N}$, define $[k]:=\{1,\dots,k\}$.

\subsection{Approval-Based Multiwinner Voting}
We consider the standard setting of approval-based multiwinner voting \citep{LaSk22b}. Given a set of $m$ candidates $C$,
an {\em election instance} ${\mathcal E} = (V, A, k)$
consists of a set of $n$ voters $V$,
an approval profile $A= (A_v)_{v\in V}$ with $A_v\subseteq C$ for all $v\in V$,
and a target committee size $k\in [m]$.
For a voter $v\in V$, the set $A_v$ captures the candidates approved by~$v$.
Throughout the paper, we view a profile $A$ as a hypergraph with vertex set $V$,
and, for each $c \in C$, a hyperedge $V_c = \{v\in V \colon c\in A_v\}$.
In the remainder of this section, we consider an election instance ${\mathcal E} = (V, A, k)$ over a candidate set~$C$.

Besides the general setting, we also consider elections with spatial one-dimensional preferences, i.e., elections where all voters and all candidates can be mapped to intervals on the real line so that a voter approves a candidate if and only if their respective intervals intersect.
Formally, following \citet{GodBatSkoFal-2021-2DApprovals}, we say that an election $(V, A, k)$ belongs to the
\emph{voter-candidate interval (VCI) domain}
if there exist a collection of positions $\{x_c\}_{c\in C}\cup \{x_v\}_{v\in V}\subseteq {\mathbb R}$ and a collection of nonnegative radii
$\{r_c\}_{c\in C}\cup \{r_v\}_{v\in V}\subseteq {\mathbb R}^+\cup\{0\}$
such that for all $v\in V$, $c\in C$ it holds that $c\in A_v$ if and only if $\lvert x_c - x_v \rvert \le r_c+r_v$.

The VCI domain is the most general domain of one-dimensional approval preferences considered in the literature. In particular, it generalizes the voter interval (VI) and candidate interval (CI) domains, defined as follows~\citep{EL15}.
An election belongs to the \emph{voter interval (VI) domain}
if there is an ordering of the voters $v_1,\dots, v_n$ such that each candidate is approved by some interval of this ordering, i.e., for each $c\in C$ there exist $i, j\in [n]$ such that $V_c =\{v_i,\dots, v_j\}$.
Similarly, an election belongs to the \emph{candidate interval (CI) domain} if there is an ordering of the candidates $c_1,\dots,c_m$ such that
for each $v\in V$ there exist $i, j\in [m]$ such that $A_v = \{c_i,\dots, c_j\}$.
It is easy to see that the VI and CI domains are contained in the VCI domain.\footnote{For instance, given an election ${\mathcal E}=(V, A, k)$ in VI, as witnessed by voter ordering $v_1, \dots, v_n$, we can set $x_{v_i}=i$ and $r_{v_i}=0$ for each $i\in [n]$. To position the candidates, for each $c\in C$ we compute $c^-=\min\{i: c\in A_{v_i}\}$ and $c^+=\max\{i: c\in A_{v_i}\}$ and set $x_c=(c^-+c^+)/2$, $r_c=(c^+-c^-)/2$. Clearly, these positions and radii certify that $\mathcal E$ belongs to the VCI domain. For CI, the construction is similar.}

A \emph{committee} for an instance $(V, A, k)$ is a subset $W \subseteq C$ with $\lvert W \rvert \le k$; we say that $W$ is {\em feasible} if $|W|=k$.
A \emph{(multiwinner) voting rule} $f$ takes as input an instance $(V,A,k)$
and outputs a feasible committee $f(V,A,k)$.

\subsection{Classic Committee Selection}

A popular classification of multiwinner voting rules
is in terms of
the main objective in electing the committee, with three most commonly studied objectives being {\em excellence}, {\em diversity}, and {\em proportionality} \citep{FSST17a}.

Both excellence and diversity are defined quantitatively:
each of these objectives is associated with a function
that assigns a numerical score to each committee,
with higher scores associated with better performance.
Formally, given an instance ${\mathcal E} = (V,A,k)$ and a committee $W$ with $|W|\le k$,
we define
\begin{align*}
\AV{}(W,{\mathcal E}) &:= \sum_{v \in V} |A_v \cap W|, \\
\CC{}(W,{\mathcal E}) &:= \lvert \{v \in V : A_v \cap W \neq \emptyset\} \rvert.
\end{align*}

For both objectives (as well as for the two novel objectives defined in Section~\ref{sec:interlace}) we omit $\mathcal E$ from the notation when it is clear from the context.
The quantities \AV{} and \CC{} are referred to as, respectively, the {\em approval score} and the {\em Chamberlin--Courant score} of committee $W$ in election $\mathcal E$. Intuitively,
\AV{} counts the number of approvals received by the members of $W$ and is viewed as a measure of excellence, while \CC{} counts the number of voters represented by $W$, i.e., voters who approve at least one member of $W$, and is viewed as a measure of diversity. The voting rule that outputs a feasible committee maximizing \AV{} (respectively, \CC{}) is known as the \emph{approval voting rule}
(respectively, the \emph{Chamberlin--Courant rule}\footnote{Originally, \citet{ChaCou-1983-ChamberlinCourant} proposed their rule for ordinal preferences. However, the approval variant of this rule is commonly studied in the computational social choice literature.}).

Consider a function $S$ that assigns scores to committees in a given election (e.g., $S = \AV{}$ or $S = \CC{}$).
Given $\alpha\in [0, 1]$, we say that a committee $W^*$ {\em satisfies
{\em $\alpha$-$S$}} for an election ${\mathcal E} = (V,A,k)$
if
\[
    S(W^*,{\mathcal E}) \ge
    \alpha \cdot \max_{\substack{W \subseteq C,\\ |W|=k}}
        S(W,{\mathcal E})\text.
\]

Moreover, we say that a voting rule $f$ {\em satisfies $\alpha$-$S$} if for every election $\mathcal E$ it holds that $f(\mathcal E)$ satisfies {\em $\alpha$-$S$} for $\mathcal E$.
For instance, the Chamberlin--Courant rule satisfies $1$-\CC.

A function $f:2^X\to\mathbb R$ is said to be {\em submodular} if for every pair of sets $S, T$ with $S\subset T\subset X$ and every $x\in X\setminus T$ it holds that $f(S\cup \{x\})-f(S)\ge f(T\cup\{x\})-f(T)$. It is immediate that, for a fixed election $\mathcal{E}$ over a candidate set $C$, the functions $\AV(W, \mathcal E)$ and $\CC{}(W, {\mathcal E})$ are submodular functions from $2^C$ to $\mathbb N$.
Our proofs will use the following basic fact about submodular functions
(see, e.g., the seminal work of \citet{nemhauser1978analysis}; for completeness, we provide a simple proof in the appendix).

\begin{restatable}{proposition}{submod}\label{prop:submodular}
    Let $f:2^X\to\mathbb R$ be a submodular function. For every pair of positive integers $\ell<k\le |X|$ and a set $S$ of size $k$ there exists a subset $S'\subseteq S$ of size $\ell$ with $\frac{f(S')}{\ell}\ge \frac{f(S)}{k}$.
\end{restatable}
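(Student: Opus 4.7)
The plan is a probabilistic method: I would show that the \emph{average} value of $f(S')/\ell$ over uniformly random $\ell$-subsets $S'\subseteq S$ is already at least $f(S)/k$, from which the existence of the required $S'$ follows immediately. The cleanest way to do this is to prove a one-step version and then iterate.

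First I would prove the one-step claim: if $|S|=k$ and $x$ is drawn uniformly from $S$, then $\mathbb{E}[f(S\setminus\{x\})]\ge \tfrac{k-1}{k}\,f(S)$. Equivalently, I need
\[
\sum_{x\in S}\bigl[f(S)-f(S\setminus\{x\})\bigr]\;\le\; f(S).
\]
To establish this, I would enumerate $S=\{x_1,\dots,x_k\}$ in any order and telescope:
\[
f(S)-f(\emptyset)=\sum_{i=1}^{k}\Bigl[f(\{x_1,\dots,x_i\})-f(\{x_1,\dots,x_{i-1}\})\Bigr].
\]
Each summand is a marginal contribution of $x_i$ to a subset of $S\setminus\{x_i\}$; by submodularity it is at least $f(S)-f(S\setminus\{x_i\})$. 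Summing gives $\sum_{i}[f(S)-f(S\setminus\{x_i\})]\le f(S)-f(\emptyset)$. The two applications of $f$ in the statement of the proposition (namely $\AV$ and $\CC$) both satisfy $f(\emptyset)=0$, so $f(\emptyset)\ge 0$ suffices to conclude the displayed inequality.

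Next I would iterate: starting from $S_k=S$, let $S_{j-1}$ be obtained from $S_j$ by deleting a uniformly random element of $S_j$. Conditioning on $S_j$, the one-step claim applied to $S_j$ (of size $j$) gives $\mathbb{E}[f(S_{j-1})\mid S_j]\ge \tfrac{j-1}{j}\,f(S_j)$, and therefore
\[
\mathbb{E}[f(S_\ell)]\ge \frac{\ell}{\ell+1}\cdot\frac{\ell+1}{\ell+2}\cdots\frac{k-1}{k}\cdot f(S)=\frac{\ell}{k}\,f(S).
\]
By the obvious symmetry of the sequential random deletion process, $S_\ell$ is uniformly distributed on $\ell$-subsets of $S$, so there exists some $S'\subseteq S$ of size $\ell$ with $f(S')\ge \mathbb{E}[f(S_\ell)]\ge \tfrac{\ell}{k}f(S)$, giving $f(S')/\ell\ge f(S)/k$.

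The only real obstacle is the one-step inequality, and even there the argument is just a telescoping combined with the definition of submodularity; the rest is bookkeeping. I would also briefly flag the implicit nonnegativity hypothesis $f(\emptyset)\ge 0$, which is harmless in the applications to $\AV$ and $\CC$ in \Cref{sec:trade-offs}, since both functions vanish on the empty committee.
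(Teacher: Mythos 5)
Your proof is correct, but it takes a genuinely different route from the paper's. The paper argues deterministically: it fixes an ordering $x_1,\dots,x_k$ of $S$, telescopes $f(S)=\sigma_1+\dots+\sigma_k$ with $\sigma_i=f(\{x_1,\dots,x_i\})-f(\{x_1,\dots,x_{i-1}\})$, lets $S'$ consist of the $\ell$ elements whose increments $\sigma_i$ are largest (so that these increments already sum to at least $\tfrac{\ell}{k}f(S)$), and then applies submodularity once to show that the telescoping increments of $S'$ dominate the corresponding $\sigma_i$ termwise, whence $f(S')\ge\tfrac{\ell}{k}f(S)$. You instead prove the averaged statement $\mathbb{E}[f(S')]\ge\tfrac{\ell}{k}f(S)$ over uniformly random $\ell$-subsets, via a one-step deletion inequality iterated $k-\ell$ times; your one-step inequality rests on exactly the same telescoping-plus-submodularity computation that drives the paper's proof. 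Both arguments implicitly require $f(\emptyset)\ge 0$ (the paper's opening identity $f(S)=\sigma_1+\dots+\sigma_k$ silently assumes $f(\emptyset)=0$), and you are right to flag this; it is harmless here since \AV{}, \CC{}, and \Pairs{} all vanish on the empty committee. Your version yields a slightly stronger conclusion (a uniformly random $\ell$-subset achieves the bound in expectation, so in particular many subsets do), at the cost of the extra bookkeeping of the random deletion process; the paper's version is shorter and exhibits a concrete witness $S'$.
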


In contrast to excellence and diversity, proportionality is typically captured by representation axioms. A prominent axiom of this type
is \emph{extended justified representation} (\EJR{})
\citep{AziBriConElkFreEtal-2017-EJR}; intuitively, it states that
sufficiently large groups of voters with similar preferences
should be appropriately represented in the selected committee.
We consider an approximate version in which the size of groups challenging their representation is scaled down by the approximation factor \citep[see, e.g.,][]{DHLS22a}.

    Given an election $(V, A, k)$ over $C$ and $\alpha\in (0, 1]$, a
    committee $W \subseteq C$
    is said to {\em satisfy $\alpha$-\EJR{}}
    if for every $\ell \in [k]$ and every subset $S\subseteq V$ such that
    $\alpha\cdot |S| \ge \frac{\ell}{k}\cdot|V|$ and
    $\left|\bigcap_{i \in S} A_i\right| \ge \ell$
    there exists at least one voter $i \in S$
    such that $|W \cap A_i | \ge \ell$.
    We say that a rule $f$ {\em satisfies $\alpha$-\EJR{}}
    if for every election $\mathcal E$ it holds that $f({\mathcal E})$ satisfies $\alpha$-\EJR{}.
    By setting $\alpha$ to~$1$, we obtain the standard EJR axiom.

\subsection{Interlacing Committee Selection}\label{sec:interlace}

We now define two new objectives,
which assess committees based on how well they interlace voters.

Our first objective is the number of \emph{pairs} of voters
that jointly approve a selected candidate.
Given an election ${\mathcal E}=(V, A, k)$, let $\allpairs := \{\{u,v\} \subseteq V \colon u \neq v\}$ be the set of all voter pairs. Then for each $W$ with $|W|\le k$ we set
\[
    \Pairs{}(W,{\mathcal E}) := \lvert
        \lbrace \{u,v\} \in \allpairs \colon
        A_u \cap A_v \cap W \neq \emptyset \rbrace
    \rvert.
\]

Note that for every instance ${\mathcal E} = (V,A,k)$
one can define the \emph{associated pair instance}
${\mathcal E}^{(2)} = (\allpairs, A^{(2)}, k)$, where
$A^{(2)}_{\{u,v\}} = A_u\cap A_v$
for every $\{u,v\}\in \allpairs$.
For each instance $\mathcal E$ and committee $W \subseteq C$ we have
$\Pairs{}(W,{\mathcal E}) = \CC(W,{\mathcal E}^{(2)})$.
Moreover, $\Pairs(W, {\mathcal E})$ is a submodular function from $2^C$ to $\mathbb N$.

While the \Pairs{} objective only considers
direct links between voters,
our second objective takes into account indirect connections as well.
Given an instance ${\mathcal E}=(A, V, k)$ and a subset of candidates $W \subseteq C$, we say that two voters $u, v \in V$
are \emph{connected by $W$} (and write $u \sim_W v$)
if there is a sequence of voters
$u = v_0, v_1, \dots, v_{s} = v$
with $A_{v_{i-1}} \cap A_{v_{i}} \cap W\neq \emptyset$
for every $i \in [s]$.
To evaluate a committee $W$, we
count pairs of voters
connected by $W$.
Formally,
\[
    \Connections{}(W,{\mathcal E}) :=
    \left\lvert \{\{u,v\}\in \allpairs \colon u\sim_W v\} \right\rvert.
\]

Since both \Pairs{} and \Connections{} assign scores to committees, we also consider their approximate versions, i.e., $\alpha$-\Pairs{} and $\alpha$-\Connections{}.

Our interest in \Connections{} is motivated by the following example.

\begin{example}
\label{example:connections}
    Consider a profile with six voters $v_1,\dots,v_6$,
    six \emph{cycle} candidates $c_1,\dots,c_6$,
    and two \emph{diagonal} candidates $d_1$ and $d_2$, whose hypergraph is depicted in \Cref{fig:example:connections}.
    Each cycle candidate is approved by two consecutive voters:
    for $i=1, \dots, 5$ candidate
    $c_i$ is approved by $v_i$ and $v_{i+1}$, while $c_6$ is approved by $v_1$ and $v_6$.
    Also, $d_1$ is approved by $v_2$ and $v_6$
    and $d_2$ by $v_3$ and $v_5$. Let $k=6$.

    Consider the following two committees: $W = \{c_1,c_2,c_3,c_4,c_5,c_6\}$ contains all cycle candidates,
    whereas in $W' = \{c_1,c_3,c_4,c_6,d_1,d_6\}$
    two cycle candidates are exchanged for the diagonal candidates
    ($W'$ is marked by thick red hyperedges
    in \Cref{fig:example:connections}).
    Common voting rules, including
    the approval voting rule and the Chamberlin--Courant rule,
    do not distinguish between $W$ and $W'$, as
    each voter approves exactly two candidates in either committee.
    Moreover, the rule that maximizes \Pairs{}
    is also unable to distinguish them,
    as both $W$ and $W'$ cover exactly 6 pairs of voters.
    However, intuitively, $W'$ seems more polarizing: under $W'$, there are two disconnected groups of voters,
    each supporting (though not fully) their own set of candidates.

    In contrast, a rule
    that maximizes \Connections{} is sensitive to the differences between the two committees.
    Under $W$, all $15$ pairs of voters are connected,
    while $W'$ only achieves $6$ connections.
    \hfill$\lhd$
\end{example}

Note that, in contrast to \AV{}, \CC{} and \Pairs{}, the \Connections{} objective is not submodular: e.g., in Example~\ref{example:connections} adding $c_2$ to $\{c_1\}$ creates two additional connected pairs, while adding $c_2$ to $\{c_1, c_3\}$ creates four additional connected pairs.

    \begin{figure}
        \centering

        \begin{tikzpicture}
        \def\xs{1cm}
        \def\ys{1cm}

            \node[voter] (v1) at (-2*\xs,0) {$v_1$};
            \node[voter] (v2) at (-0.8*\xs,\ys) {$v_2$};
            \node[voter] (v3) at (0.8*\xs,\ys) {$v_3$};
            \node[voter] (v4) at (2*\xs,0) {$v_4$};
            \node[voter] (v5) at (0.8*\xs,-1*\ys) {$v_5$};
            \node[voter] (v6) at (-0.8*\xs,-1*\ys) {$v_6$};
            \draw[myred, rotate = 40, scale = 1, line width = 1] ($(v1)!.5!(v2)$) ellipse (1.2cm and 0.45cm);
            \draw[scale = 1, line width = 0.5] ($(v2)!.5!(v3)$) ellipse (1.2cm and 0.4cm);
            \draw[myred, rotate = -40, scale = 1, line width = 1] ($(v3)!.5!(v4)$) ellipse (1.2cm and 0.45cm);
            \draw[myred, rotate = 40, scale = 1, line width = 1] ($(v4)!.5!(v5)$) ellipse (1.2cm and 0.45cm);
            \draw[scale = 1, line width = 0.5] ($(v5)!.5!(v6)$) ellipse (1.2cm and 0.4cm);
            \draw[myred, rotate = -40, scale = 1, line width = 1] ($(v6)!.5!(v1)$) ellipse (1.2cm and 0.45cm);
            \draw[myred, scale = 1, line width = 1] ($(v2)!.5!(v6)$) ellipse (0.4cm and 1.4cm);
            \draw[myred, scale = 1, line width = 1] ($(v3)!.5!(v5)$) ellipse (0.4cm and 1.4cm);
            \node (c1) at ($(v1)!.5!(v2)+(130:.65)$) {$c_1$};
            \node (c2) at ($(v2)!.5!(v3)+(90:.6)$) {$c_2$};
            \node (c3) at ($(v3)!.5!(v4)+(50:.65)$) {$c_3$};
            \node (c4) at ($(v4)!.5!(v5)+(310:.65)$) {$c_4$};
            \node (c5) at ($(v5)!.5!(v6)+(270:.6)$) {$c_5$};
            \node (c6) at ($(v6)!.5!(v1)+(230:.65)$) {$c_6$};
            \node (d1) at ($(v2)!.5!(v6)$) {$d_1$};
            \node (d2) at ($(v3)!.5!(v5)$) {$d_2$};

        \end{tikzpicture}
        \caption{Illustration of \Cref{example:connections}.
        When the target committee size is $6$, every size-$6$ subset of candidates maximizes the \Pairs{} objective.
        However, \Connections{} is higher for the committee $\{c_1,\dots, c_6\}$ than for the disconnected committee $\{c_1,c_3,c_4,d_1,d_2\}$ (indicated by thick red lines).}
        \label{fig:example:connections}
    \end{figure}
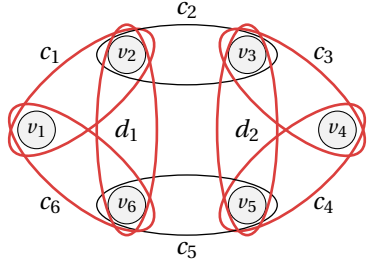

\section{Computation of The New Objectives}\label{sec:comp}
In this section, we show that maximizing \Pairs{} and \Connections{} is NP-hard in general, but
tractable on well-structured domains.

\subsection{General Preferences}\label{sec:hard}
Our hardness proofs are based on the \NP-complete problem \textsc{Exact Cover by $3$-Sets (X3C)} \citep{GaJo79a}.
An instance of \textsc{X3C} is a pair $(R,\mathcal S)$, where $R$ is a ground set of size $3\rho$ and $\mathcal S$ is a collection of $3$-element subsets of $R$; it is a Yes-instance if and only if there exists a subset $\mathcal S'\subseteq \mathcal S$ with $|{\mathcal S}'|=\rho$ that covers~$R$.

The proof idea for the \Pairs{} objective is to represent every element in the ground set of an \textsc{X3C} instance by a pair of voters.

\begin{restatable}{theorem}{PairsHardness}
\label{thm:pairs:hardness}
    It is \NP-complete to decide, given an election ${\mathcal E}=(V,A,k)$ and a threshold $q\in\mathbb N$, whether there exists a committee $W$ of size at most $k$ such that $\Pairs{}(W,{\mathcal E})\ge q$.
\end{restatable}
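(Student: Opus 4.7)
The plan is to reduce from \textsc{X3C}: given an instance $(R, \mathcal S)$ with $|R|=3\rho$, I would build an election ${\mathcal E}=(V,A,k)$ over a candidate set $C$ as follows. For each element $r\in R$ create two voters $u_r, w_r$, so $V=\{u_r,w_r:r\in R\}$ and $|V|=6\rho$. For each set $S\in {\mathcal S}$ introduce a single candidate $c_S\in C$ approved exactly by the six voters $\{u_r,w_r:r\in S\}$. Set $k=\rho$ and choose the threshold $q=15\rho$. The \Pairs{} value is computable in polynomial time from a candidate subset, so membership in \NP{} is immediate; the remainder of the proof is the correctness of the reduction.

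The key structural observation is that for any committee $W\subseteq \{c_S:S\in{\mathcal S}\}$, the auxiliary graph $G_W$ on $V$ whose edges are the pairs counted by \Pairs{} is exactly the \emph{union of the cliques} $K_{c_S}$ for $c_S\in W$, each on six vertices with $\binom{6}{2}=15$ edges. Hence
\[
\Pairs(W,{\mathcal E})=|E(G_W)|\le 15\,|W|\le 15\rho,
\]
with equality if and only if $|W|=\rho$ \emph{and} the chosen cliques are pairwise edge-disjoint. Two cliques $K_{c_S}$ and $K_{c_{S'}}$ share a vertex iff $S\cap S'\neq \emptyset$, and in that case they actually share at least the edge $\{u_r,w_r\}$ for any $r\in S\cap S'$. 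Therefore the cliques are pairwise edge-disjoint iff the corresponding sets $S_1,\dots,S_\rho$ are pairwise disjoint; since each $|S_i|=3$ and $|R|=3\rho$, this is equivalent to $\{S_1,\dots,S_\rho\}$ being an exact cover of $R$.

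Given this, the two directions fall out cleanly. If $\{S_1,\dots,S_\rho\}$ is an exact cover, then $W=\{c_{S_1},\dots,c_{S_\rho}\}$ achieves $\Pairs(W,{\mathcal E})=15\rho\ge q$. Conversely, if some committee $W$ with $|W|\le \rho$ satisfies $\Pairs(W,{\mathcal E})\ge 15\rho$, then the inequality chain above must be tight, so $|W|=\rho$ and the selected sets are pairwise disjoint, which (by the size/cardinality match) forces them to partition $R$.

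The step requiring the most care is the tightness of the edge-counting bound: I must make sure that \emph{any} overlap between two chosen sets produces at least one genuinely double-counted edge (not merely a shared vertex), which is precisely why the construction uses a \emph{pair} of voters per element rather than a single voter. With one voter per element, a singleton intersection $S\cap S'=\{r\}$ would share a vertex but no edge, and the reduction would collapse; the paired-voter gadget is exactly what converts set overlap into edge overlap and so makes the clean equation $\Pairs=15\rho \iff$ exact cover possible.
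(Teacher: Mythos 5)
Your reduction is exactly the one in the paper: two voters per ground-set element, one candidate per triple approved by the six corresponding voters, $k=\rho$, threshold $q=15\rho$, with correctness resting on the observation that overlapping triples force a shared pair (the edge $\{u_r,w_r\}$), so equality $\Pairs=15\rho$ holds iff the chosen triples are pairwise disjoint and hence an exact cover. The proposal is correct and matches the paper's argument, including the key role of the paired-voter gadget.
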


\begin{proof}
    Membership in {\NP} is immediate: for a given committee, its size and the number of pairs of voters approving a common candidate can be checked in polynomial time.

    To show {\NP}-hardness, we present a reduction from {\sc X3C}.
    Given an instance $(R,\mathcal S)$ of \textsc{X3C} with $|R|=3\rho$, we construct an instance of our problem as follows. We create one candidate for each set in $\mathcal S$ and two voters for each element of the ground set, i.e., we set
    $C = \{c_S\colon S\in \mathcal S\}$ and
    $V = \{v_r,v'_r\colon r\in R\}$.
    For each $S\in \mathcal S$, candidate $c_S$ is approved by voters $\{v_r,v'_r\colon r\in S\}$.
    We set the target committee size $k$ to $\rho$ and the threshold $q$ to $15\rho$.
    We will show that we can cover $q$ pairs of voters if and only if the source instance is a Yes-instance of \textsc{X3C}.

    Suppose first there exists a feasible committee $W$ that covers $q$ pairs of voters. Each $c\in W$ is approved by exactly $6$ voters, so it can cover at most $\binom{6}{2} = 15$ pairs of voters. Moreover, the candidates' support sets are either disjoint or overlap in at least two voters. As $q=15k$, this means that candidates in $W$
    have pairwise disjoint support sets. Since $|W|=k$, it follows that
    $\{S\in {\mathcal S}: c_S\in W\}$ forms a cover of $R$, i.e., our instance of {\sc X3C} is a Yes-instance.

    Conversely, assume that there exists a subset $\mathcal S'\subseteq \mathcal S$ of size~$k$ that covers $R$.
    Consider the committee $W = \{c_S\colon S\in \mathcal S'\}$.
    Then, $|W| = |\mathcal S'| = \rho = k$.
    Moreover, since, all of the sets in $\mathcal S'$ are pairwise disjoint, the support sets of the candidates in $W$ are pairwise disjoint and
    contain exactly $6$ voters each.
    Hence, there are $k\cdot \binom{6}{2} = 15\rho = q$ pairs of voters who approve a common candidate.
\end{proof}

A similar hardness result holds for \Connections.
The proof idea is to introduce an auxiliary voter that is the focal point in connecting all voters.

\begin{restatable}{theorem}{HardnessConn}
\label{thm:connected:hardness}
    It is \NP-complete to decide, given an election ${\mathcal E}=(V,A,k)$ and a threshold $q\in\mathbb N$, whether there exists a committee $W$ of size $k$ such that
    $\Connections(W, {\mathcal E})\ge q$. The hardness result holds even if $q=\binom{n}{2}$,
    i.e., if the goal is to connect all $n$ voters.
\end{restatable}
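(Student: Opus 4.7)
The plan is a reduction from \textsc{X3C} mirroring the structure of \Cref{thm:pairs:hardness}, but with an auxiliary voter that serves as a hub so that the only way to connect the whole electorate with few candidates is via an exact cover.

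Membership in \NP{} is straightforward: given a committee $W$, build the graph on $V$ with an edge between $u$ and $v$ whenever $A_u \cap A_v \cap W \neq \emptyset$, compute its connected components via BFS, and count the contribution $\binom{|C|}{2}$ of each component $C$; this runs in polynomial time.

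For hardness, given an \textsc{X3C} instance $(R,\mathcal S)$ with $|R| = 3\rho$, I would create a voter $v_r$ for each $r \in R$, a single auxiliary voter $v^*$, and one candidate $c_S$ per $S \in \mathcal S$, setting $A_{v_r} = \{c_S : r \in S\}$ and $A_{v^*} = \{c_S : S \in \mathcal S\}$. Put differently, the hyperedge corresponding to $c_S$ is $\{v^*\} \cup \{v_r : r \in S\}$, so $v^*$ lies in the support of every candidate. Set $k = \rho$ and $q = \binom{n}{2}$ with $n = 3\rho + 1$.

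The correctness argument has two directions. If $\mathcal S' \subseteq \mathcal S$ is an exact cover of size $\rho$, then $W = \{c_S : S \in \mathcal S'\}$ is a feasible committee of size $k$; every $v_r$ shares $c_S$ with $v^*$ for the unique $S \in \mathcal S'$ containing $r$, so all $n$ voters are connected through $v^*$ and $\Connections(W, \mathcal E) = \binom{n}{2} = q$. Conversely, if $W$ is any feasible committee with $\Connections(W, \mathcal E) = \binom{n}{2}$, then in particular each voter $v_r$ must be connected to some other voter, hence $v_r$ must approve at least one member of $W$, i.e.\ $r$ must be covered by $\{S : c_S \in W\}$. Since $|W| = \rho$ and each $S \in \mathcal S$ has size $3$, covering all $3\rho$ elements of $R$ forces these sets to be pairwise disjoint, yielding an exact cover.

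The only subtlety worth watching is that the auxiliary voter $v^*$ is in the support of \emph{every} candidate, so it cannot itself be an obstacle to connectivity; the binding constraint is coverage of $R$ by the $\rho$ chosen sets, and the cardinality argument $3\rho = |R|$ then forces exactness. This also keeps the threshold at the maximal value $\binom{n}{2}$, as required by the strengthened statement.
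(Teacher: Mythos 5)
Your proposal is correct and follows essentially the same route as the paper's own proof: the same reduction from \textsc{X3C} with one candidate per set, one voter per ground element, and a single auxiliary hub voter approving every candidate, with $k=\rho$ and $q=\binom{n}{2}$. The only difference is cosmetic — you additionally note that disjointness of the chosen sets is forced by the cardinality count, which is a harmless extra observation given that the paper's formulation of \textsc{X3C} only requires a size-$\rho$ cover.
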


\begin{proof}
    Membership in {\NP} is immediate: given an election instance $(V, A, k)$, a committee $W$ and a target $q$, we (1) construct a graph $G$ with vertex set $V$
    where there is an edge between $v$ and $v'$ if and only if there is a candidate $c\in W$
    that is approved by both $v$ and $v'$; (2)
    identify the connected components of $G$, which we denote by $G_1, \dots, G_t$; (3) return ``yes'' if and only if $\sum_{s\in [t]}\frac{|G_s|(|G_s|-1)}{2}\ge q$, where $|G_s|$ denotes the number of vertices of $G_s$.

    To show {\NP}-hardness, we present a reduction from {\sc X3C}.
    Given an instance $(R,\mathcal S)$ of \textsc{X3C} with $|R|=3\rho$, we construct an instance of our problem as follows. We create a candidate for each set in $\mathcal S$ and a voter for each element of the ground set $R$, as well as one additional voter, i.e., we set
    $C = \{c_S\colon S\in\mathcal S\}$,
    $V = \{v\}\cup \{v_r\colon r\in R\}$.
    For each $S\in \mathcal S$, $c_S$ is approved by $\{v\}\cup \{v_r\colon r\in S\}$.
    We want to select a committee $W\subseteq C$ of size $k = \rho$ and set the threshold $q$ to $\binom{n}{2}$, where $n = |V|$.

    Consider a collection ${\mathcal S}'\subseteq \mathcal S$ of size $k$ and the respective committee $W=\{c_S: S\in {\mathcal S'}\}$. If ${\mathcal S}$ covers $R$, each voter in $V$ approves a candidate in $W$, and $v$ approves all candidates,
    so all voters are connected via $v$. Conversely, if all pairs of voters are connected, then each voter must approve some candidate in $W$ and hence ${\mathcal S}'$ covers $R$.
    This completes the proof.
\end{proof}

\subsection{One-Dimensional Preferences}
We will now complement our hardness results by arguing that,
on elections in the VCI domain, we can maximize \Pairs{} and \Connections{} in polynomial time.

We start by observing that, for the objectives we consider, a VCI instance can be transformed into a CI instance without changing the value of these objectives. To this end, we define a notion of dominance among candidates and prove that, in the absence of dominated candidates, every VCI instance is a CI instance.

\subsubsection{From VCI to CI}
Given an election ${\mathcal E}=(V, A, k)$ over a candidate set $C$, we say that a candidate $c'\in C$ is {\em dominated} by a candidate $c\in C$ if every voter approving~$c'$ also approves~$c$,
and some voter approves $c$ but not $c'$, i.e.,
$V_{c'}$ is a proper subset of~$V_c$.

It turns out that if an election in the VCI domain contains no dominated candidates, it belongs to the much simpler to analyze CI domain; this observation, which
is implicit in the work of \citet[][Lemma 4.7]{ElkFalIgaManSchEtal-2024-PriceOfJR},
may be of independent interest.
Indeed, removal of dominated candidates from a winning committee does not affect the \Pairs{} and \Connections{} objectives, so we can simply remove all dominated candidates from the input instance.

\begin{restatable}{proposition}{VCItoCI}\label{prop:VCI-to-CI}
    Let $\mathcal E$ be an instance in the VCI domain.
    If~$\mathcal E$ contains no dominated candidates, then it belongs to the CI domain.
\end{restatable}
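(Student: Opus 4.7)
My plan is to reduce the claim to its cleanest case and then read off the ordering directly from the endpoints of the candidate intervals. Two candidates $c,c'$ with $V_c=V_{c'}$ are interchangeable from the point of view of the CI property, since any voter that approves one approves the other. So I group candidates into equivalence classes by voter support and pick one representative from each class. It suffices to show that the sub-instance on representatives lies in the CI domain: any CI ordering of the representatives can be extended by inserting each non-representative adjacent to its representative, and since every voter approves either all or none of an equivalence class, contiguity is preserved.

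The heart of the argument is a structural lemma: in a VCI instance without dominated candidates, if $c,c'$ are representatives (so $V_c\neq V_{c'}$) then neither candidate interval properly contains the other. Indeed, if $I_{c'}:=[x_{c'}-r_{c'},x_{c'}+r_{c'}]\subseteq I_c$, then any voter $v$ with $I_v\cap I_{c'}\neq\emptyset$ also satisfies $I_v\cap I_c\neq\emptyset$, hence $V_{c'}\subseteq V_c$. Because $V_{c'}\neq V_c$ (distinct representatives), the inclusion is proper, so $c$ dominates $c'$, contradicting the hypothesis. In particular, two representatives cannot share the same left endpoint $\ell_c:=x_c-r_c$ (otherwise one would contain the other), and more generally sorting the representatives by $\ell_c$ sorts them by their right endpoints $R_c:=x_c+r_c$ as well, since $\ell_{c_1}<\ell_{c_2}$ combined with $R_{c_1}\geq R_{c_2}$ would force $I_{c_2}\subsetneq I_{c_1}$.

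With the representatives ordered $c_{(1)},c_{(2)},\dots$ so that both $\ell_{c_{(i)}}$ and $R_{c_{(i)}}$ are strictly increasing, I verify consecutive ones directly. A voter $v$ approves $c_{(i)}$ iff $\ell_{c_{(i)}}\leq x_v+r_v$ and $R_{c_{(i)}}\geq x_v-r_v$. The first condition cuts out a prefix $\{1,\dots,i^*\}$ of the ordering and the second a suffix $\{i^{**},\dots\}$, so the set of indices approved by $v$ is the (possibly empty) interval $\{i^{**},\dots,i^*\}$. This completes the CI property for the representatives; extending by duplicates as described above yields the CI ordering for $\mathcal E$.

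The only real obstacle is the role of candidates that share a voter support. Without the grouping step, absence of domination does not directly forbid proper interval containment (two nested intervals can perfectly well have the same support set), and one cannot hope to order such pairs strictly by endpoints. The grouping absorbs exactly this ambiguity and isolates the geometric content of non-domination, after which the remainder is a short prefix/suffix calculation.
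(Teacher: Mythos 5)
Your proof is correct, but it follows a genuinely different route from the paper's. The paper sorts all candidates by their centers $x_c$ and argues by contradiction: a voter violating contiguity for $h<i<j$ forces the interval of $c_i$ to be properly contained in that of $c_j$, so $c_i$ is dominated. You instead quotient out candidates with identical supports, prove as a standalone lemma that non-domination forbids any nesting among the representatives' intervals, order the representatives by left endpoint (observing this simultaneously orders the right endpoints), and then verify the CI property directly as the intersection of a prefix (from $\ell_c\le x_v+r_v$) and a suffix (from $R_c\ge x_v-r_v$). Your construction is more explicit about the underlying structure --- it exhibits the witnessing order and isolates exactly where non-domination is used (to exclude nested intervals) --- and you are right that the equivalence-class step is needed for \emph{your} argument, since two candidates with equal supports can have nested intervals and then cannot be consistently ordered by endpoints. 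The paper's version is shorter because it never needs endpoints to be strictly ordered: it only uses $x_i\le x_j$ and derives the nesting from the assumed CI violation, so ties and duplicate supports require no special treatment. Both proofs are sound; yours trades a bit of bookkeeping for a constructive and arguably more transparent argument.
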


\begin{proof}
    Consider an election ${\mathcal E}=(V, A, k)$ over the candidate set $C$ that belongs to the VCI domain, as witnessed by positions
    $\{x_c\}_{c\in C}\cup \{x_v\}_{v\in V}$ and radii
    $\{r_c\}_{c\in C}\cup \{r_v\}_{v\in V}$.
    Renumber the candidates so that
    $x_{c_1} \le x_{c_2}\le\dots\le x_{c_m}$.

    Suppose for the sake of contradiction that
    this ordering of the candidates does not witness that $\mathcal E$ belongs to CI.
    Then, there exists a voter $v\in V$ and $h<i<j$ such that $v$ approves $c_h$ and $c_j$, but not $c_i$.
    For readability, we will refer to the positions and radii of $c_h$, $c_i$ and $c_j$ as $x_h, x_i, x_j$ and $r_h, r_i, r_j$, respectively. Since $v$ does not approve $c_i$, we have $x_v\neq x_i$; we can then assume without loss of generality that $x_v< x_i\le x_j$.
    To obtain a contradiction, we will show that $c_i$ is dominated by $c_j$.

    To this end, we will argue that $[x_i-r_i, x_i+r_i]\subseteq [x_j-r_j, x_j+r_j]$.
    Indeed, $c_i\not\in A_v$ implies
    $x_i- r_i > x_v +r_v$, whereas
    $c_j\in A_v$ implies
    $x_j- r_j \le x_v +r_v$.
    Combining these inequalities,
    we obtain $x_i-r_i>x_j-r_j$.
    It then follows that
    $x_i+r_i<x_i+(x_i-x_j+r_j)\le x_j+r_j$,
    where the last inequality follows from $x_i\le x_j$.
    Thus, the interval of $c_i$ is subsumed by that of $c_j$, and hence every voter who approves $c_i$ also approves $c_j$. Moreover, $v$ approves $c_j$, but not $c_i$. We have shown that $c_i$ is dominated, concluding the proof.
\end{proof}

In what follows, we state our results for the VCI domain, but assume that the input election belongs to the CI domain, and we are explicitly given the respective candidate order. It will also be convenient to assume that this order is $c_1, \dots, c_m$. This requires two preprocessing steps: first, we eliminate all dominated candidates (which, by \Cref{prop:VCI-to-CI}, results in a CI election), and second, we compute an ordering of the candidates witnessing that our instance belongs to the CI domain.
Both steps can be implemented in polynomial time \citep[for the second step, see, e.g.,][]{EL15}.

\subsubsection{Efficient Algorithms}We are ready to present polynomial-time algorithms for \Pairs{} and \Connections{} on the VCI domain.
Since \Pairs{} is identical to \CC{} on the associated pair instance, we can compute \Pairs{} by leveraging an existing algorithm for \CC{} in the CI domain~\citep{BSU13,EL15}.

\begin{restatable}{proposition}{VCIpairs}\label{prop:pairs-VCI}
    In the VCI domain, a committee that maximizes \Pairs{} can be computed in polynomial time.
\end{restatable}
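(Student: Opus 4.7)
The plan is to reduce the VCI case to CI via Proposition~\ref{prop:VCI-to-CI}, then exploit the identity $\Pairs(W,\mathcal{E})=\CC(W,\mathcal{E}^{(2)})$ and appeal to the known polynomial-time algorithm for \CC{} on CI elections~\citep{BSU13,EL15}. Concretely, given a VCI election $\mathcal{E}=(V,A,k)$, I would first iteratively remove all dominated candidates; this step does not change the value of any committee under \Pairs{} (since a dominated $c'$ can always be replaced by its dominator $c$ without decreasing any voter's intersection with the committee), and by \Cref{prop:VCI-to-CI} the remaining election is in CI. In polynomial time we can compute an ordering $c_1,\dots,c_m$ witnessing CI membership.

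The core step is to show that the associated pair instance $\mathcal{E}^{(2)}=(\allpairs, A^{(2)}, k)$ is itself in CI, witnessed by the \emph{same} candidate ordering $c_1,\dots,c_m$. This is where the CI domain shines: for each voter $u$, the approval set $A_u$ is an interval $\{c_{i_u},\dots,c_{j_u}\}$; consequently for any pair $\{u,v\}\in \allpairs$ we have $A^{(2)}_{\{u,v\}}=A_u\cap A_v=\{c_{\max(i_u,i_v)},\dots,c_{\min(j_u,j_v)}\}$, which is again an interval (possibly empty) in the same ordering. Hence $\mathcal{E}^{(2)}$ belongs to CI, has $\binom{n}{2}$ voters and the same $m$ candidates, and the CI witness is available without further computation.

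Having established that $\mathcal{E}^{(2)}$ is a CI instance of size polynomial in $n$ and $m$, I would invoke the known dynamic-programming algorithm of \citet{BSU13,EL15} that computes a size-$k$ committee maximizing \CC{} on a CI instance in polynomial time. Running it on $\mathcal{E}^{(2)}$ yields a committee $W^\ast$ maximizing $\CC(W,\mathcal{E}^{(2)})=\Pairs(W,\mathcal{E})$ among all size-$k$ subsets of $C$, which is exactly what we want.

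The only real obstacle I anticipate is the bookkeeping for the preprocessing step: one must check that removing dominated candidates is safe for \Pairs{} (it is, because an optimal committee can always be modified to exclude dominated candidates without loss) and that the resulting CI witness can indeed be extracted in polynomial time. Both are standard, so once the CI-closure of intersections is observed, the result follows essentially for free from the existing \CC{}-in-CI machinery.
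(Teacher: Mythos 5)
Your proposal is correct and follows essentially the same route as the paper: preprocess by removing dominated candidates to land in CI (via \Cref{prop:VCI-to-CI}), observe that $A_u\cap A_v$ is an interval in the same candidate ordering so that $\mathcal{E}^{(2)}$ is itself a CI instance, and then run the known polynomial-time \CC{} algorithm on $\mathcal{E}^{(2)}$. The extra care you take in justifying that eliminating dominated candidates preserves the \Pairs{} value is the same justification the paper gives in the surrounding discussion.
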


\begin{proof}
    Fix an election instance $\mathcal E$.
    As argued earlier, we can assume that $\mathcal E$ is in the CI domain with respect to candidate ordering $c_1,\dots, c_m$.
    Recall that
    $\Pairs{}(W,{\mathcal E}) = \CC(W,{\mathcal E}^{(2)})$.
    Now, note that for all $\{u,v\}\in \allpairs$, it holds by definition that $A_{\{u,v\}} = A_u\cap A_v$ is the intersection of two intervals of that ordering and hence itself an interval. Thus, ${\mathcal E}^{(2)}$ is in the CI domain with respect to the same candidate ordering.
    For instances in the CI domain, \CC{} can be maximized in polynomial time~\citep{BSU13,EL15}.
\end{proof}

In the VCI domain, we can also compute a committee that maximizes \Connections{} in polynomial time;
however, the argument is significantly more complicated.
Again, we assume that the input profile belongs to the CI domain, as witnessed by the candidate ordering
$c_1,\dots,c_m$.
A natural idea, then, is to use dynamic programming to compute, for each $b\in [k]$ and $i\in [m]$, an optimal subcommittee of size $b$ with rightmost candidate $c_i$.
For $b=1$, the computation is straightforward, and for $b=k$, one of the resulting $m$ committees globally maximizes \Connections.
However, computing the value of adding~$c_i$ to a committee of size $b-1$ that has $c_j$, $j<i$, as its rightmost candidate is a challenging task: this is because the number of connections that $c_i$ adds depends on the size of the connected component associated with $c_j$. To handle this, we add a third dimension to the dynamic program: the number of voters $x\in [n]$ in the connected component of the last selected candidate.
The resulting dynamic program has $\orderof(mnk)$ cells, and each cell can be filled in polynomial time given the values of the already-filled-in cells.
We present the proof in \Cref{app:DP}.

\begin{restatable}{theorem}{VCIconn}\label{thm:vci-conn}
In the VCI domain, a committee that maximizes \Connections{} can be computed in polynomial time.
\end{restatable}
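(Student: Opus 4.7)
The plan is to exploit the clean block structure of committee-induced voter graphs in the CI domain. Using \Cref{prop:VCI-to-CI} together with the observation that removing dominated candidates does not change \Connections, I may assume the election lies in the CI domain with explicit candidate ordering $c_1, \dots, c_m$; I write each voter's approval set as an interval $[a_v, b_v]$ of candidate indices. The key structural observation to prove first is: for any committee $W = \{c_{i_1}, \dots, c_{i_b}\}$ with $i_1 < \dots < i_b$, the connected components induced by $W$ correspond to maximal \emph{blocks} of consecutive members $c_{i_j}, \dots, c_{i_l}$ of $W$ in which every adjacent pair $c_{i_q}, c_{i_{q+1}}$ shares an approving voter. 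This holds because in CI, whenever a voter approves two committee candidates, it approves every committee candidate between them, so any chain bridging two adjacent members of $W$ must already contain a single voter approving both. Consequently, $\Connections(W) = \sum_B \binom{n_B}{2}$, where the sum runs over blocks and $n_B$ counts the voters approving at least one candidate of block $B$; voters approving no candidate of $W$ form singletons and contribute $0$.

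Next, I would define the three-dimensional DP $\dyn(i, b, x)$ as the maximum, over committees of size $b$ drawn from $\{c_1, \dots, c_i\}$ that include $c_i$ and whose current (rightmost) block ending at $c_i$ contains exactly $x$ voters, of the sum of $\binom{n_B}{2}$ over all \emph{already-closed} blocks. The base sets $\dyn(i, 1, |V_{c_i}|) = 0$ for every $i$; all other $b = 1$ cells are $-\infty$. I precompute, for each pair $i < i'$, the indicator $\sigma(i, i')$ equal to $1$ if some voter $v$ satisfies $a_v \le i$ and $b_v \ge i'$ (and $0$ otherwise), and the count $\Delta(i, i') = |\{v : i < a_v \le i' \le b_v\}|$; both take polynomial time. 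A transition from $(i, b, x)$ chooses the next committee candidate $c_{i'}$ with $i' > i$: if $\sigma(i, i') = 1$, the block extends, giving the bound $\dyn(i', b+1, x + \Delta(i, i')) \ge \dyn(i, b, x)$; if $\sigma(i, i') = 0$, the current block closes and a fresh one opens at $c_{i'}$, giving $\dyn(i', b+1, |V_{c_{i'}}|) \ge \dyn(i, b, x) + \binom{x}{2}$. The final optimum is $\max_{i, x} \bigl[\dyn(i, k, x) + \binom{x}{2}\bigr]$, which adds the contribution of the still-open block.

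The main obstacle is verifying correctness of the extension update: one must show that the voters in the enlarged block are precisely the old block's voter set together with $\{v : i < a_v \le i' \le b_v\}$, and that these two sets are disjoint. Disjointness holds because $a_v > i$ prevents $v$ from approving any previous block candidate (all of whose indices are at most $i$). For the converse inclusion, any voter approving both $c_{i'}$ and an earlier block candidate $c_{j_q}$ with $j_q \le i$ must, by CI, have approval interval $[a_v, b_v] \supseteq [j_q, i']$, so it already approves $c_i$ and was counted in the old block's voter set; the analogous argument justifies that in the new-block case $V_{c_{i'}}$ is disjoint from the previously closed block's voters. Once this is in place, the DP has $\orderof(m k n)$ cells, each updated in $\orderof(m)$ time, yielding an $\orderof(m^2 k n)$ algorithm; a standard backtracking pass recovers an optimal committee.
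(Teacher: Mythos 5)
Your proposal is correct and follows essentially the same route as the paper's proof: reduce to the CI domain via \Cref{prop:VCI-to-CI}, observe that connected components correspond to contiguous blocks of committee members whose adjacent pairs share a voter, and run a dynamic program indexed by the rightmost selected candidate, the committee size, and the number of voters in its current connected component, with the transition governed by whether the new candidate shares a voter with the previous rightmost one. The only difference is cosmetic bookkeeping---you defer each block's $\binom{n_B}{2}$ contribution until the block closes rather than accumulating connections incrementally as the paper does---which does not change the argument.
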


\section{Combining Objectives}
\label{sec:trade-offs}

While interlacing objectives can be viewed in isolation, in many cases, standard objectives of excellence, diversity, or proportionality continue to be important for the selection of a committee.
In this section, we investigate to what extent we can select committees that simultaneously perform well with respect to both interlacing and standard objectives.
Many of our results will show that there are inherent trade-offs between the objectives.
For this purpose, given two objectives, we construct instances whose hypergraph representation can be partitioned into two components, each corresponding to one objective.By design, candidates selected in one component mainly contribute to this component's objective while having a negligible effect on the other objective.

\subsection[Pairs Objective]{\Pairs{} Objective}
\label{sec:trade-offs:pairs}

First, we consider combining the \Pairs{} objective
with individual excellence of the committee members, as measured by \AV{}.
For every $\alpha \in [0,1]$ and
every election ${\mathcal E} = (V, A, k)$,
there is a simple way to obtain a simultaneous
$\lceil \alpha k \rceil / k $-approximation of \Pairs{}
and $\lfloor (1- \alpha) k \rfloor / k $-approximation of \AV{}.
Indeed, we can split the $k$ positions on the committee into two parts
of size $k_1 =\lceil \alpha k \rceil$ and
$k_2 = \lfloor (1- \alpha) k \rfloor$, respectively, and then
select $k_1$ candidates so as to maximize \Pairs{}
and $k_2$ candidates so as to maximize \AV{}
(if some candidate is selected both times, we replace their second copy by an arbitrary
unselected candidate).
Since \Pairs{} and \AV{} are submodular functions, \Cref{prop:submodular} implies that
this procedure provides the desired guarantees.
We can use the same technique to combine \Pairs{} with the goal of diverse representation, as measured by \CC{} (recall that \CC{} is submodular, too).
Note that \citet{LacSko-2020-MultiwinnerApproximations} propose a similar method for combining \AV{} and \CC{}.

\begin{proposition}
    \label{prop:av_cc_pair_app:lb}
    For every $\alpha \in [0,1]$ and election $\mathcal E$,
    there exist committees that satisfy
    \begin{enumerate}
        \item $\lceil \alpha k \rceil / k $-\Pairs{}
    and $\lfloor (1- \alpha) k \rfloor / k $-\AV,
    \item $\lceil \alpha k \rceil / k $-\Pairs{}
    and $\lfloor (1- \alpha) k \rfloor / k $-\CC.
    \end{enumerate}
\end{proposition}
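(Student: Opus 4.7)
The plan is to formalize the splitting argument sketched in the paragraph preceding the proposition. Fix $\alpha \in [0,1]$ and an election $\mathcal E = (V,A,k)$, and set $k_1 = \lceil \alpha k \rceil$, $k_2 = \lfloor (1-\alpha) k \rfloor$, so that $k_1 + k_2 \le k$. Let $W^*_{\Pairs}$ be a size-$k$ committee maximizing $\Pairs(\cdot, \mathcal E)$, and let $W^*_{\AV}$ (resp.\ $W^*_{\CC}$) be a size-$k$ committee maximizing $\AV(\cdot, \mathcal E)$ (resp.\ $\CC(\cdot, \mathcal E)$). I will construct, for part (1), a committee $W$ that contains a size-$k_1$ subset of $W^*_{\Pairs}$ with high \Pairs{} value and a size-$k_2$ subset of $W^*_{\AV}$ with high \AV{} value; part (2) is analogous with \CC{} replacing \AV{}.

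The key ingredient is \Cref{prop:submodular}, which I apply twice. Since $\Pairs(\cdot,\mathcal E)$ is submodular (as noted in \Cref{sec:interlace}), there is a subset $W_1 \subseteq W^*_{\Pairs}$ of size $k_1$ with
\[
\Pairs(W_1,\mathcal E) \ge \frac{k_1}{k} \cdot \Pairs(W^*_{\Pairs},\mathcal E) = \frac{\lceil \alpha k\rceil}{k}\cdot \max_{|W|=k}\Pairs(W,\mathcal E).
\]
Similarly, $\AV(\cdot,\mathcal E)$ is submodular, so there is a subset $W_2 \subseteq W^*_{\AV}$ of size $k_2$ with $\AV(W_2,\mathcal E) \ge \tfrac{k_2}{k}\cdot \AV(W^*_{\AV},\mathcal E)$. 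I then form $W := W_1 \cup W_2$ and, if $|W| < k$, pad it arbitrarily to a feasible committee of size $k$; padding can only increase both \Pairs{} and \AV{} because both are monotone nondecreasing set functions.

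The only subtlety is the overlap $W_1 \cap W_2$: it is possible that $|W_1 \cup W_2| < k_1 + k_2$, in which case the "second copy" of each shared candidate is replaced by an arbitrary unselected candidate as the authors indicate. Since \Pairs{} and \AV{} (and \CC{}) are monotone, such replacements do not decrease either objective value relative to the disjoint union, so the bounds from the previous paragraph continue to hold for the final committee $W$. Part (2) is obtained by the identical argument, using the submodularity and monotonicity of $\CC(\cdot,\mathcal E)$ instead of $\AV$. I do not anticipate any real obstacle here; the only place to be slightly careful is to note monotonicity explicitly when handling overlap and padding, since \Cref{prop:submodular} by itself gives only a size-$\ell$ subset, not a size-$k$ committee.
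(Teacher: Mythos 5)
Your proposal is correct and matches the paper's own argument: the paper proves this proposition exactly by the splitting construction in the paragraph preceding it, invoking \Cref{prop:submodular} on the submodular functions \Pairs{}, \AV{}, and \CC{} and handling overlaps by substituting arbitrary unselected candidates. Your only addition is to spell out the monotonicity step explicitly, which is a fair (if minor) point of care rather than a different route.
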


Perhaps surprisingly, it turns out that, for both combinations, this is the best we can hope for.

\begin{proposition}
    \label{prop:av_pair_app:ub}
    For every $\alpha, \beta \in [0, 1]$,
    if a voting rule satisfies
    $\alpha$-\Pairs{}
    and $\beta$-\AV{},
    then $\alpha + \beta \le 1$.
\end{proposition}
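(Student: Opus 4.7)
The plan is to argue by contradiction: assume some voting rule satisfies both $\alpha$-\Pairs{} and $\beta$-\AV{} with $\alpha+\beta>1$, and exhibit a single election on which these two guarantees are mutually incompatible. The construction will consist of two disjoint ``gadgets'' (no shared voters and no shared candidates), one tuned so that its candidates contribute to \AV{} but not to \Pairs{}, and the other tuned so that its candidates contribute to \Pairs{} but only negligibly to \AV{}. Making the AV gadget's weight a free parameter $M$ will let us pin down the trade-off up to an arbitrarily small error.

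Concretely, I would take a parameter $M\in\mathbb{N}$ (to be sent to infinity at the end) and build $\mathcal E=(V,A,k)$ as follows. Part A has $k$ candidates $a_1,\dots,a_k$; for each $i\in[k]$ introduce $M$ fresh voters whose only approved candidate is $a_i$. Part B has $k$ candidates $b_1,\dots,b_k$; for each $i\in[k]$ introduce $2$ fresh voters whose only approved candidate is $b_i$. Then any feasible committee $W$ has the form of $j$ candidates drawn from Part A and $k-j$ from Part B for some $j\in\{0,1,\dots,k\}$, and one checks that
\[
    \AV(W,\mathcal E)=jM+2(k-j),\qquad \Pairs(W,\mathcal E)=k-j.
\]
For $M>2$, the maximum \AV{} is $kM$ (take $j=k$) and the maximum \Pairs{} is $k$ (take $j=0$). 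The $\alpha$-\Pairs{} guarantee forces $k-j\ge \alpha k$, i.e.\ $j\le(1-\alpha)k$, while the $\beta$-\AV{} guarantee forces $jM+2(k-j)\ge\beta k M$, i.e.\ $j/k\ge\beta-2/M$. Combining,
\[
    \beta-\tfrac{2}{M}\le \tfrac{j}{k}\le 1-\alpha,
\]
and letting $M\to\infty$ yields $\alpha+\beta\le 1$, contradicting the assumption.

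The main point requiring care, and the only real obstacle, is ruling out ``mixed'' strategies that might exploit unintended interactions between the two gadgets: one has to be sure that no single candidate contributes to both objectives in a nontrivial way. This is secured by making the voter sets of Part A and Part B disjoint and making each voter approve exactly one candidate, so the contributions to \AV{} and to \Pairs{} split additively across the parts and each Part B candidate contributes at most $2$ to \AV{}. The parameter $M$ then cleanly separates the two objectives. A small bookkeeping point is that $j$ must be an integer, but since $M$ is a free parameter independent of $k$, the error terms $2/M$ can be driven below any desired threshold without having to take $k$ large, so the argument works uniformly for every $k\ge 1$.
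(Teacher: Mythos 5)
There is a genuine gap: your formula $\Pairs(W,\mathcal E)=k-j$ is incorrect, and the error is fatal to the construction. Each Part~A candidate $a_i$ is approved by $M$ voters, and every pair of those $M$ voters jointly approves $a_i$, so selecting $a_i$ covers $\binom{M}{2}$ pairs. Hence $\Pairs(W,\mathcal E)=j\binom{M}{2}+(k-j)$, which is \emph{increasing} in $j$ once $M\ge 3$. Taking all of Part~A therefore simultaneously maximizes \AV{} (value $kM$) and \Pairs{} (value $k\binom{M}{2}$), so your instance imposes no constraint at all on $\alpha+\beta$ and the contradiction never materializes. The underlying issue is structural: with pairwise-disjoint supporter sets, any candidate contributing $M$ to \AV{} automatically contributes $\Theta(M^2)$ to \Pairs{}, so you cannot build an ``\AV{}-only'' gadget this way.

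The paper's construction resolves exactly this difficulty by making the \AV{} gadget consist of $x^3+1$ \emph{central} candidates that are all approved by the \emph{same} set of $x^2$ central voters. Since the committee size exceeds the number of block candidates, at least one central candidate is always selected and covers all $\binom{x^2}{2}$ central pairs; every \emph{additional} central candidate then contributes $x^2$ to \AV{} but has zero marginal contribution to \Pairs{}, because the pairs it would cover are already covered. The block candidates (disjoint supports of size $x$) play the role of your Part~B, contributing $x$ to \AV{} and $\binom{x}{2}$ to \Pairs{} each, and the exponents are tuned so that blocks dominate \Pairs{} while central candidates dominate \AV{}. If you want to repair your proof, you need this kind of nested or duplicated support structure in the \AV{} gadget; the disjoint two-gadget additivity idea and the limiting argument you describe are otherwise in the right spirit.
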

    \begin{figure}
        \centering

        \begin{tikzpicture}

            \tikzset{
            voter/.style={circle,draw,minimum size=0.1cm,inner sep=0, fill = black!05, font=\footnotesize},
          }

            \draw [decorate,thick,decoration={brace,amplitude=5pt,mirror,raise=4ex}]
          (0.35,0.3) -- (0.35,-2.3) node[midway,xshift=-3.4em,yshift=1.5pt]{$x^3$};

            \draw [decorate,thick,decoration={brace,amplitude=5pt,mirror,raise=4ex}]
      (1.1,-0.2) -- (-0.1,-0.2) node[midway,yshift=3em]{$x$};

            \node[voter] (v1) at (0,0) {};
            \node[voter] (v2) at (0.2,0) {};
            \node[voter] (v3) at (0.4,0) {};
            \node at (0.7, 0) {$...$};
            \node[voter] (v4) at (1,0) {};
            \draw[scale = 1] ($(v1)!.5!(v4)$) ellipse (0.6cm and 0.3cm);
            \node at (1.4, 0) {$b_1$};

            \node[voter] (v1) at (0,-0.8) {};
            \node[voter] (v2) at (0.2,-0.8) {};
            \node[voter] (v3) at (0.4,-0.8) {};
            \node at (0.7, -0.8) {$...$};
            \node[voter] (v4) at (1,-0.8) {};
            \draw[scale = 1] ($(v1)!.5!(v4)$) ellipse (0.6cm and 0.3cm);
            \node at (1.4, -0.8) {$b_2$};

            \node at (0.5, -1.4) {$\vdots$};

            \node[voter] (v1) at (0,-2) {};
            \node[voter] (v2) at (0.2,-2) {};
            \node[voter] (v3) at (0.4,-2) {};
            \node at (0.7, -2) {$...$};
            \node[voter] (v4) at (1,-2) {};
            \draw[scale = 1] ($(v1)!.5!(v4)$) ellipse (0.6cm and 0.3cm);
            \node at (1.45, -2) {$b_{x^3}$};

            \draw [decorate,thick,decoration={brace,amplitude=5pt,mirror,raise=4ex}]
          (3.45,-0.3) -- (3.45,-1.7) node[midway,xshift=-3em,yshift=1.5pt]{$x^2$};

            \node[voter] (v1) at (3.5,-0.4) {};
            \node[voter] (v2) at (3.5,-0.6) {};
            \node[voter] (v3) at (3.5,-0.8) {};
            \node[voter] (v4) at (3.5,-1.6) {};
            \node at ($(v3)!.5!(v4)$) {$\vdots$};

            \node[ellipse,draw,minimum width=0.5cm,minimum height=1.5cm] (c1) at (3.5,-1) {};

            \node[ellipse,draw,minimum width=1.2cm,minimum height=1.75cm] (c2) at (3.65,-1) {};

            \node[ellipse,draw,minimum width=2.3cm,minimum height=2.2cm] (c3) at (4,-1) {};
            \node at (3.97, -1) {$c_1$};
            \node at (4.67, -1) {$c_2 \  \cdots$};
            \node at (5.65, -1.01) {$c_{x^3+1}$};
        \end{tikzpicture}

        \caption{Illustration of the profile constructed in the proof of \Cref{prop:av_pair_app:ub}.
        The block voters are on the left and the central voters are on the right.
        Each block candidate is approved by $x$ block voters, whereas each central candidate is approved by all central voters.}
        \label{fig:prop:av_pair_app:ub}
    \end{figure}
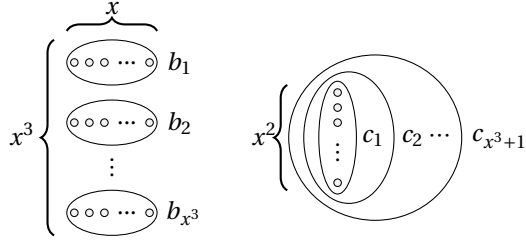

\begin{proof}
    We will construct a family of instances that allows us to bound the sum of approximation ratios.
    For a given constant $x \in \mathbb{N}$,
    consider the election $\mathcal{E} = (V, A, k)$
    defined as follows
    (see \Cref{fig:prop:av_pair_app:ub} for an illustration).
    The set $C$ consists of $x^3$ \emph{block candidates}
    $b_1, \dots, b_{x^3}$,
    and $x^3+1$ \emph{central candidates} $c_1, \dots, c_{x^3+1}$, so that $|C|=2x^3+1$.
    The set $V$ consist of $x^3$ groups of {\em block voters} $(B_i)_{i\in [x^3]}$ of size $x$ each,
    and a single group of $x^2$ \emph{central voters}.
    For $i\in [x^3]$, each voter in block $B_i$ approves candidate $b_i$ only, whereas each central voter approves all central candidates $c_1, \dots, c_{x^3+1}$.
    The target committee size is set to $k=x^3 +1$.

    Since there are fewer than $k$ block candidates, every committee $W$ contains at least one central candidate, who is approved by all $x^2$ central voters,
    and covers all $(x^2-1)x^2/2$ pairs of central voters.
    Then, every additional central candidate contributes $x^2$ to the AV objective and $0$ to the \Pairs{} objective, whereas every additional block candidate contributes
    $x$ to the AV objective and $x(x-1)/2$ to the \Pairs{} objective.

    Assume that for some
    $\gamma \in \{0, 1/x^3, 2/x^3, \dots, 1\}$
    our rule selects a committee $W_\gamma \subseteq C$ with $\gamma x^3 + 1$ central candidates
    and $(1-\gamma)x^3$ block candidates.
    Then,
    we obtain the following \AV{} and \Pairs{} scores:
    \begin{align*}
        \AV{}(W_\gamma, \mathcal{E}) &= (\gamma x^3+1)x^2+(1-\gamma) x^4 = \gamma x^5 + \mathcal{O}(x^4),
            \mbox{ and}\\
        \Pairs{}(W_\gamma, \mathcal{E}) &= \frac{x^4 - x^2}{2} +(1-\gamma)\frac{x^5 - x^4}{2} = \frac{1-\gamma}{2} x^5 + \mathcal{O}(x^4),
    \end{align*}
    where the ${\mathcal O}(\cdot)$ upper bound holds for all values of $\gamma$.
    Observe that the maximum \AV{} score is obtained when we take $\gamma=1$,
    and the maximum \Pairs{} score is obtained when $\gamma = 0$.
    Also,
    \[
        \frac{\AV{}(W_\gamma, \mathcal{E})}{\AV{}(W_1, \mathcal{E})} + \frac{\Pairs{}(W_\gamma, \mathcal{E})}{\Pairs{}(W_0, \mathcal{E})} \le 1 + \orderof(1/x).
    \]

    Hence, when $x$ tends to infinity,
    the sum of approximation ratios for \AV{} and \Pairs{} gets arbitrarily close
    to~$1$.
\end{proof}

One may expect the \Pairs{} and \CC{} objectives to be more aligned than \Pairs{} and \AV{}.
Indeed, recall that \Pairs{} is the same as computing \CC{} on the associated pair instance.
However, surprisingly,
the worst-case trade-off for these objectives is the same as for \Pairs{} and \AV.

\begin{restatable}{proposition}{PairsVsCC}
    \label{prop:cc_pair_app:ub}
    For every $\alpha, \beta \in [0, 1]$,
    if a voting rule satisfies $\alpha$-\Pairs{}
    and $\beta$-\CC,
    then $\alpha + \beta \le 1$.
\end{restatable}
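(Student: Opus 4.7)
The plan is to mirror the two-section construction used in \Cref{prop:av_pair_app:ub}, but with a crucial modification. A naive adaptation using disjoint blocks fails: a block candidate approved by $x$ fresh voters contributes $\binom{x}{2}$ to \Pairs{} and $x$ to \CC{}, so the committee consisting of all block candidates simultaneously maximizes both objectives and no trade-off is exhibited. To genuinely separate \Pairs{} from \CC{}, I will replace the disjoint blocks by \emph{pair candidates} sitting on a small shared pool of voters, so that \CC{} saturates at the pool size while \Pairs{} continues to grow linearly with each new pair candidate selected.

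Concretely, for a parameter $x\in\mathbb N$, set $k=\binom{x}{2}$ and construct an election $\mathcal E_x=(V,A,k)$ as follows. The voter set is $V=P\sqcup U$, where $P=\{p_1,\dots,p_x\}$ is the shared pool and $U=\{u_1,\dots,u_k\}$ is a disjoint set of private voters. The candidate set consists of the \emph{pair candidates} $B=\{b_{i,j}:1\le i<j\le x\}$, where $b_{i,j}$ is approved exactly by $\{p_i,p_j\}$, and the \emph{single-voter candidates} $E=\{e_1,\dots,e_k\}$, where $e_i$ is approved exactly by $u_i$; note $|B|=k$. Then $\Pairs{}^*=k$, attained by $W=B$, and $\CC{}^*=k+\lfloor x/2\rfloor$, attained by selecting a maximum matching of $P$ among the pair candidates plus enough single-voter candidates to fill the committee.

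The heart of the argument is a per-committee bound. For any committee $W$, write $j_1=|W\cap B|$ and $j_2=k-j_1$. Because pair candidates correspond bijectively to the pairs in $P$, we have $\Pairs{}(W)=j_1$. For \CC{}, section two contributes $j_2$ fresh voters in $U$, while the $j_1$ pair candidates cover at most $\min(2j_1,x)$ voters of $P$, so $\CC{}(W)\le\min(2j_1,x)+k-j_1$. Therefore
\[
    \frac{\Pairs{}(W)}{\Pairs{}^*}+\frac{\CC{}(W)}{\CC{}^*}\;\le\;\frac{j_1}{k}+\frac{\min(2j_1,x)+k-j_1}{k+\lfloor x/2\rfloor}.
\]
A brief case split—the right-hand side is monotone in $j_1$ in each of the regimes $j_1\le x/2$ and $j_1\ge x/2$—shows this is at most $1+O(1/x)$ uniformly in $j_1$ (the worst case is $j_1=k$, giving $1+x/(k+\lfloor x/2\rfloor)=1+O(1/x)$). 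Consequently, any rule satisfying $\alpha$-\Pairs{} and $\beta$-\CC{} has $\alpha+\beta\le 1+O(1/x)$ on $\mathcal E_x$; letting $x\to\infty$ forces $\alpha+\beta\le 1$.

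The main obstacle is the initial conceptual step of recognizing that the disjoint-block template from \Cref{prop:av_pair_app:ub} is inadequate for \Pairs{} versus \CC{}, and engineering the pairwise-overlapping pool design that forces \CC{} to saturate while \Pairs{} continues to accumulate. Once the instance is in place, the remaining analysis is elementary and parallel to the \Pairs{}/\AV{} trade-off.
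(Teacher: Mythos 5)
Your proof is correct, but it uses a genuinely different construction from the paper's. The paper keeps the disjoint blocks from the \Pairs{}/\AV{} instance (\Cref{prop:av_pair_app:ub}) as the \CC{}-friendly side and makes the other side \Pairs{}-friendly by attaching ``arm'' candidates to a shared central pool of $x^3$ voters: after the first arm candidate, each further one adds only a single new voter to the coverage but $x^3$ new pairs, so \CC{} stalls while \Pairs{} grows. You instead invert the roles: your \Pairs{}-friendly candidates are the $\binom{x}{2}$ pair candidates on a pool of $x$ voters, each contributing exactly one pair and (past saturation) no new voters, while your \CC{}-friendly candidates are singletons contributing one fresh voter and zero pairs. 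Both designs implement the same underlying principle --- decouple ``new pairs per candidate'' from ``new voters per candidate'' by reusing a small shared voter pool --- and both yield the tight $1+\orderof(1/x)$ bound. Your version has the advantage that $\Pairs{}(W)=|W\cap B|$ is an exact identity rather than an asymptotic estimate, which makes the per-committee bound and the monotonicity case split particularly clean; the paper's version has the advantage that the same instance (with the block count rescaled) is reused verbatim to derive the \EJR{} trade-off in \Cref{prop:EJR_vs_Pairs:ub}, where the blocks are needed to trigger the representation requirement. One small presentational point: you should note explicitly that the rule's output $W$ satisfies $\alpha\le \Pairs{}(W)/\Pairs{}^*$ and $\beta\le \CC{}(W)/\CC{}^*$ before summing, but this is immediate from the definitions of $\alpha$-\Pairs{} and $\beta$-\CC{}.
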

    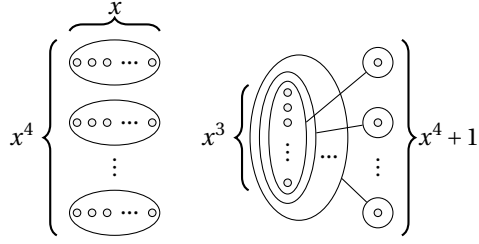
\begin{figure}
        \centering

        \begin{tikzpicture}

            \tikzset{
            voter/.style={circle,draw,minimum size=0.1cm,inner sep=0, fill = black!05, font=\footnotesize},
          }

            \draw [decorate,thick,decoration={brace,amplitude=5pt,mirror,raise=4ex}]
          (0.35,0.3) -- (0.35,-2.3) node[midway,xshift=-3.5em,yshift=1.5pt]{$x^4$};

            \draw [decorate,thick,decoration={brace,amplitude=5pt,mirror,raise=4ex}]
      (1.1,-0.2) -- (-0.1,-0.2) node[midway,yshift=3em]{$x$};

            \node[voter] (v1) at (0,0) {};
            \node[voter] (v2) at (0.2,0) {};
            \node[voter] (v3) at (0.4,0) {};
            \node at (0.7, 0) {$...$};
            \node[voter] (v4) at (1,0) {};
            \draw[scale = 1] ($(v1)!.5!(v4)$) ellipse (0.6cm and 0.3cm);

            \node[voter] (v1) at (0,-0.8) {};
            \node[voter] (v2) at (0.2,-0.8) {};
            \node[voter] (v3) at (0.4,-0.8) {};
            \node at (0.7, -0.8) {$...$};
            \node[voter] (v4) at (1,-0.8) {};
            \draw[scale = 1] ($(v1)!.5!(v4)$) ellipse (0.6cm and 0.3cm);

            \node at (0.5, -1.4) {$\vdots$};

            \node[voter] (v1) at (0,-2) {};
            \node[voter] (v2) at (0.2,-2) {};
            \node[voter] (v3) at (0.4,-2) {};
            \node at (0.7, -2) {$...$};
            \node[voter] (v4) at (1,-2) {};
            \draw[scale = 1] ($(v1)!.5!(v4)$) ellipse (0.6cm and 0.3cm);

            \draw [decorate,thick,decoration={brace,amplitude=5pt,mirror,raise=4ex}]
          (2.9,-0.3) -- (2.9,-1.7) node[midway,xshift=-3.5em,yshift=1.5pt]{$x^3$};

            \node[voter] (v1) at (2.8,-0.4) {};
            \node[voter] (v2) at (2.8,-0.6) {};
            \node[voter] (v3) at (2.8,-0.8) {};
            \node at (2.8, -1.2) {$\vdots$};
            \node[voter] (v4) at (2.8,-1.6) {};

            \node[ellipse,draw,minimum width=0.5cm,minimum height=1.5cm] (c1) at (2.8,-1) {};
            \node[voter] (v1) at (4, 0) {};
            \node[ellipse,draw,minimum width=0.4cm,minimum height=0.4cm] (c1_) at (4, -0) {};

            \node[ellipse,draw,minimum width=0.75cm,minimum height=1.75cm] (c2) at (2.8,-1) {};
            \node[voter] (v2) at (4, -0.8) {};
            \node[ellipse,draw,minimum width=0.4cm,minimum height=0.4cm] (c2_) at (4, -0.8) {};

            \node at (4, -1.4) {$\vdots$};

            \node[ellipse,draw,minimum width=1.3cm,minimum height=2.2cm] (c3) at (2.95,-1) {};
            \node at (3.35, -1.25) {$...$};
            \node[voter] (v2) at (4, -2) {};
            \node[ellipse,draw,minimum width=0.4cm,minimum height=0.4cm] (c3_) at (4, -2) {};

            \path[draw]
            (c1) edge (c1_)
            (c2) edge (c2_)
            (c3) edge (c3_)
            ;

            \draw [decorate,thick,decoration={brace,amplitude=5pt,mirror,raise=4ex}]
          (3.7,-2.3) -- (3.7,0.3) node[midway,xshift=4em,yshift=1.5pt]{$x^4 + 1$};
        \end{tikzpicture}

        \caption{An illustration of the profile constructed in the proof of \Cref{prop:cc_pair_app:ub}.
        Block voters are on the left, central voters in the middle, and arm voters on the right.
        Each block candidate is approved by $x$ block voters, whereas each central candidate is approved by all central voters and one arm voter.}
        \label{fig:prop:cc_pair_app:ub}
    \end{figure}

\begin{proof}
    For a given constant $x \in \mathbb{N}$,
    consider an election instance $\mathcal{E} = (V, A, k)$
    defined as follows
    (see \Cref{fig:prop:cc_pair_app:ub} for an illustration).

    The candidate set consists of $x^4$ \emph{block candidates}
    $b_1,\dots,b_{x^4}$
    and $x^4 + 1$ \emph{arm candidates}
    $a_1,\dots,a_{x^4+1}$,
    i.e., $2x^4 + 1$ candidates in total.
    The set $V$ consists of $x^5$ {\em block voters}, split into $x^4$ blocks $B_1, \dots, B_{x^4}$
    of size $x$ each,
    $x^4 + 1$ \emph{arm voters},
    and $x^3$ \emph{central voters},
    i.e., $x^5 + x^4 + x^3 + 1$ voters in total.

    For each $i \in [x^4]$ the voters in block
    $B_i$ approve the block candidate $b_i$.
    For each $i\in [x^4+1]$, the $i$th arm voter
    approves the $i$th arm candidate $a_i$,
    and all central voters approve all arm candidates.

    The target committee size is set to $k=x^4 +1$.
    As there are fewer than $k$ block candidates, every committee of size $k$ contains at least one arm candidate. Thus,
    by symmetry, without loss of generality,
    we can assume that for some
    $\gamma \in \{0, 1/x^4, 2/x^4, \dots, 1\}$
    we select a committee $W_\gamma \subseteq C$
    that consists of arm candidates $a_1, a_2, \dots, a_{\gamma x^4 + 1}$
    and block candidates $b_1, b_2, \dots, b_{(1-\gamma)x^4}$.
    Every selected block candidate covers
    $x$ voters and $x(x-1)/2$ pairs of voters.
    Moreover, the first selected arm candidate covers
    $x^3+1$ voters and $x^3(x^3+1)/2$ pairs of voters, whereas
    every subsequent arm candidate covers
    $1$ voter and $x^3$ pairs of voters.
    Thus, when we select $\gamma x^4 + 1$ arm candidates and
    $(1-\gamma)x^4$ block candidates,
    we obtain the following \CC{} and \Pairs{} scores:
    \begin{align*}
        \CC{}(W_\gamma,\mathcal{E}) &= (1-\gamma)x^5 + \gamma x^4 + x^3 + 1,
            \mbox{ and}\\
        \Pairs{}(W_\gamma,\mathcal{E}) &= \gamma x^7
            + \frac{2 - \gamma}{2}x^6
            - \frac{1 - \gamma}{2}x^5
            + \frac{1}{2} x^3.
    \end{align*}
    Observe that the maximum number of covered voters is obtained when we take $\gamma=0$, whereas the maximum number of covered pairs of voters is obtained when $\gamma = 1$.
    Also, we have
    \[
        \frac{\CC{}(W_\gamma,\mathcal{E})}{\CC{}(W_0,\mathcal{E})} +
        \frac{\Pairs{}(W_\gamma,\mathcal{E})}{\Pairs{}(W_1,\mathcal{E})} = 1 + \orderof(1/x).
    \]

    Hence, when $x$ tends to infinity,
    the sum of approximation ratios for \CC{} and \Pairs{} gets arbitrarily close
    to~$1$.
\end{proof}

Finally, we investigate
how to combine the \Pairs{} objective
with proportional representation,
as captured by the \EJR{} axiom.
Again, we can use the committee-splitting technique to
show that for every election $\mathcal{E}=(V,A,k)$
there is a committee that satisfies
$\lceil \alpha k \rceil/k$-\Pairs{} and
$(1-\alpha)$-\EJR{}.
For this, we first need to show that
we can guarantee $(1-\alpha)$-\EJR{} with a $(1-\alpha)$-fraction of the committee seats.
To this end, we employ a variant of the \emph{method of equal shares} (MES) by \citet{PeSk20a}.
Briefly, this rule gives each voter $k/n$ units of money; it then sequentially selects candidates that are best for voters that still have money, and subtracts money from the supporters of the selected candidates.
By adapting the proof that MES satisfies \EJR{} \citep{PeSk20a},
we show that, by executing MES while scaling the voters' budgets by $\alpha$ (we will refer to this rule as $\alpha$-MES),
we obtain $\alpha$-\EJR{} for the original instance; we believe that this result
is of independent interest.\footnote{
A similar observation was made by
\citet{DoPe25a},
but they require
$\lceil \alpha k \rceil$ candidates, which in our case
would allow only for a rounded-down \Pairs{} guarantee.
}
We defer a formal definition of $(\alpha$-)MES and the proof of \Cref{lemma:equal-shares} to \Cref{app:MES}.

\begin{restatable}{lemma}{MESalpha}
        \label{lemma:equal-shares}
For every $\alpha\in(0, 1)$,
given an election $\mathcal E = (V, A, k)$,
the rule $\alpha$-MES runs in polynomial time and returns a committee of size $\lfloor \alpha k\rfloor$ that satisfies $\alpha$-\EJR{}.
\end{restatable}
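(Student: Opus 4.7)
The plan is to define $\alpha$-MES as the natural budget-scaled variant of the method of equal shares of \citet{PeSk20a}: each voter $v$ starts with an initial budget of $\alpha k/n$, candidates are then sold sequentially at the smallest unit price $\rho$ for which $\sum_{v:c\in A_v}\min(\rho,b_v)\ge 1$ (with supporters paying $\min(\rho,b_v)$), and the loop terminates once no candidate is affordable; if fewer than $\lfloor\alpha k\rfloor$ candidates have been selected, pad the committee with arbitrary unselected candidates up to that size. Polynomial running time carries over verbatim from the standard MES implementation. The size bound holds because the total initial budget is $\alpha k$ while every selected candidate consumes exactly one unit of budget, so the main loop can select at most $\lfloor\alpha k\rfloor$ candidates and the padding step makes the count exact.

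The substance lies in the $\alpha$-\EJR{} guarantee for the returned committee $W$ on the original instance $(V,A,k)$. The key observation is a dictionary between the axiom and the budget: the $\alpha$-\EJR{} precondition $\alpha|S|\ge \ell n/k$ rewrites as $|S|\cdot(\alpha k/n)\ge \ell$, i.e., the total initial budget of the group $S$ in $\alpha$-MES is at least $\ell$. This is \emph{exactly} the hypothesis driving \citet{PeSk20a}'s proof that ordinary MES satisfies plain \EJR{}, in which the parallel precondition $|S|\ge \ell n/k$ rewrites as $|S|\cdot (k/n)\ge \ell$. I would therefore argue by contradiction in direct analogy: suppose $\ell\in[k]$ and $S\subseteq V$ witness a violation, with $|W\cap A_v|<\ell$ for every $v\in S$, and pick any common candidate $c^*\in\bigcap_{v\in S}A_v\setminus W$ (if all commonly approved candidates lie in $W$ we are immediately done). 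Termination of the main loop gives $\sum_{v:c^*\in A_v}b_v^{\text{final}}<1$, hence $\sum_{v\in S}b_v^{\text{final}}<1$, so the total spending of $S$ exceeds $|S|\cdot(\alpha k/n)-1\ge \ell-1$. Invoking the Peters-Skowron per-voter payment accounting---which amplifies this bulk spending lower bound into the existence of some individual $v\in S$ with $|W\cap A_v|\ge \ell$---produces the desired contradiction. Finally, the padding step only enlarges $W$, which can only increase each $|W\cap A_v|$ and hence preserves $\alpha$-\EJR{}.

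The main obstacle is verifying that \citet{PeSk20a}'s argument is genuinely \emph{homogeneous} in the per-voter budget; this is the subtle step that turns ``total spending of $S$ is almost $\ell$'' into ``some voter in $S$ has at least $\ell$ approved winners,'' and it is not achieved by the naive ``each payment is at most the budget'' bound. A direct inspection of the Peters-Skowron EJR proof shows that every inequality it uses couples the budget $\beta$ only through ratios to the unit candidate cost ($=1$) or through the group-level product $|S|\cdot\beta$ compared to the integer threshold $\ell$. Consequently, substituting $\beta=\alpha k/n$ for $\beta=k/n$ throughout leaves the argument intact, and the lemma follows.
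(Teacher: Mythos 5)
Your proposal is correct and follows essentially the same route as the paper: scale the per-voter budget to $\alpha k/n$, bound the committee size by the total budget, and run the Peters--Skowron contradiction argument, whose key step (each voter in the deviating group $S$ ever pays at most $\beta/\ell$ per candidate, so the group retains enough budget to afford a commonly approved non-winner) is indeed homogeneous in the per-voter budget $\beta$, with the $\alpha$-\EJR{} precondition supplying exactly $\lvert S\rvert\cdot\beta\ge\ell$. The paper's proof simply writes out that first-violation payment bound explicitly rather than citing homogeneity, and (minor difference) settles for a committee of size at most $\lfloor\alpha k\rfloor$ where you pad to exact size, which is harmless since \EJR{} is preserved under adding candidates.
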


We remark that the committee obtained as described in \Cref{lemma:equal-shares} satisfies an even stronger notion of proportionality, namely $\alpha$-\EJR+ \citep{BrPe23a}.
Using \Cref{lemma:equal-shares}, we now easily obtain the desired guarantees.

\begin{proposition}
    \label{prop:pair_ejr_app:lb}
    For every $\alpha \in [0,1]$ and election $\mathcal E$,
    there exists a committee that satisfies
    $\alpha$-\Pairs{} and $(1-\alpha)$-\EJR.
\end{proposition}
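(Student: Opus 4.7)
My plan is to instantiate the committee-splitting technique announced before the proposition: allocate $\lceil \alpha k \rceil$ seats to maximizing \Pairs{} and the remaining $\lfloor (1-\alpha) k \rfloor$ seats to a proportional sub-committee produced by a scaled variant of the method of equal shares, and then take the union of the two sub-committees. The key arithmetic observation that makes this work is that $\lceil \alpha k \rceil + \lfloor (1-\alpha) k \rfloor = k$ for every $\alpha \in [0,1]$, so the resulting union fits into a size-$k$ committee.

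For the \Pairs{} side, I would let $W^*$ be a committee of size $k$ that maximizes $\Pairs(\cdot,\mathcal E)$. Since $\Pairs$ is submodular (as noted right after its definition), \Cref{prop:submodular} applied with $\ell = \lceil \alpha k \rceil$ yields a subset $W_1 \subseteq W^*$ of that size satisfying
\[
\Pairs(W_1,\mathcal E) \;\geq\; \frac{\lceil \alpha k \rceil}{k}\,\Pairs(W^*,\mathcal E) \;\geq\; \alpha\cdot \Pairs(W^*,\mathcal E).
\]
For the proportionality side, I would invoke \Cref{lemma:equal-shares} with parameter $1-\alpha$ on the same instance $\mathcal E$ to obtain a committee $W_2$ of size exactly $\lfloor (1-\alpha) k \rfloor$ that satisfies $(1-\alpha)$-\EJR{}.

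Finally, I would set $W = W_1 \cup W_2$, whose size is at most $\lceil \alpha k \rceil + \lfloor (1-\alpha) k \rfloor = k$. Since $\Pairs$ is monotone in the committee, $\Pairs(W,\mathcal E) \geq \Pairs(W_1,\mathcal E) \geq \alpha \cdot \Pairs(W^*,\mathcal E)$, so $W$ satisfies $\alpha$-\Pairs. Likewise, $(1-\alpha)$-\EJR{} is preserved under supersets of the committee (because $|W \cap A_i| \geq |W_2 \cap A_i|$ for every voter $i$), so $W$ inherits $(1-\alpha)$-\EJR{} from $W_2$.

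The only genuine subtlety is that \Cref{lemma:equal-shares} must deliver $(1-\alpha)$-\EJR{} using only $\lfloor (1-\alpha) k \rfloor$ seats rather than $\lceil (1-\alpha) k \rceil$; were it the latter, the seat counts would fail to add up to $k$ and the \Pairs{} guarantee would have to be rounded down, as in the related result of \citet{DoPe25a}. The actual difficulty is therefore confined to that lemma (relegated to \Cref{app:MES}); once it is in hand, the argument above is simply bookkeeping on top of monotonicity and submodularity.
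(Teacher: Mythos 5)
Your proposal is correct and follows essentially the same route as the paper's proof: it invokes \Cref{lemma:equal-shares} to get $(1-\alpha)$-\EJR{} with $\lfloor (1-\alpha)k\rfloor$ seats and \Cref{prop:submodular} (via submodularity of \Pairs{}, equivalently \CC{} on the pair instance) for the remaining $\lceil \alpha k\rceil$ seats, then combines the two sub-committees. The only difference is that you spell out the monotonicity of \Pairs{} and the preservation of \EJR{} under supersets, which the paper leaves implicit.
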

\begin{proof}
    Consider an election $\mathcal E$.
    By \Cref{lemma:equal-shares}, we can satisfy $(1 - \alpha)$-\EJR{} using $\lfloor (1 - \alpha) k\rfloor$ candidates.
    With the remaining $k -\lfloor (1 - \alpha) k\rfloor = \lceil \alpha k\rceil$ candidates, by \Cref{prop:submodular} we can guarantee $\alpha$-\CC{} on the associated pair instance $\mathcal E^{(2)}$. This is equivalent to satisfying $\alpha$-\Pairs{} on $\mathcal E$, concluding the proof.
\end{proof}

As before, we provide a matching upper bound.
We note that our proof works
even if, instead of \EJR{}, we consider
the much weaker axiom of \emph{justified representation} (JR) \citep{AziBriConElkFreEtal-2017-EJR}.

\begin{restatable}{proposition}{EJRvsPairsTradeoff}\label{prop:EJR_vs_Pairs:ub}
    For every $\alpha, \beta \in [0, 1]$,
    if a voting rule satisfies
    $\alpha$-\Pairs{}
    and $\beta$-EJR,
    then $\alpha + \beta \le 1$.
\end{restatable}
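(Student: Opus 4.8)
The plan is to establish the bound by a worst-case construction, as in the proofs of \Cref{prop:av_pair_app:ub,prop:cc_pair_app:ub}. Since $\beta$-\EJR{} implies $\beta$-JR (the case $\ell=1$), it suffices to derive $\alpha+\beta\le 1$ under the weaker hypothesis that the rule satisfies $\alpha$-\Pairs{} and $\beta$-JR. I would exhibit a family of elections, parametrized by a growth parameter $x$ and tuned to the target $\beta$, built from two weakly-interacting components that compete for the $k$ committee seats: a \emph{pair-rich} component whose candidates generate the bulk of the optimal \Pairs{} value, and a \emph{representation} component consisting of cohesive voter groups that demand representation under $\beta$-JR but contribute negligibly to \Pairs{}. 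Writing a candidate committee as using a $\gamma$-fraction of its seats on the representation component, I would compute its \Pairs{}-ratio (aiming for $\approx 1-\gamma$) and the largest $\beta$ for which it still satisfies $\beta$-JR (aiming for $\approx\gamma$), and then let $x\to\infty$ so that the sum of the two approximation ratios tends to $1$, giving $\alpha+\beta\le 1$. The matching achievability in \Cref{prop:pair_ejr_app:lb} certifies that the bound obtained this way is tight.

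Concretely, the key steps are: (i) calibrate the sizes of the cohesive groups in the representation component to the justified-representation quota $\tfrac{|V|}{\beta k}$, so that precisely these groups qualify as ``demanding'' at level $\beta$; (ii) arrange the pair-rich component so that a \Pairs{}-optimal committee prefers it and hence leaves the demanding groups uncovered, violating $\beta$-JR; (iii) show that any committee satisfying $\beta$-JR must cover a member of every demanding group, forcing it to divert roughly a $\beta$-fraction of its seats away from the pair-rich component; and (iv) bound the \Pairs{} value of any such committee by $(1-\beta+o(1))$ times the optimum, using the submodularity of \Pairs{} (\Cref{prop:submodular}) to account for the displaced seats. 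Step (iii) is where the quantitative $\beta$-dependence enters, exactly mirroring how $\beta$-\AV{} forced $\beta k$ ``central'' seats in \Cref{prop:av_pair_app:ub}.

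The step I expect to be the main obstacle is engineering the representation component so that representing the demanding groups is genuinely \emph{pair-useless} while still forcing a full $\beta$-fraction of the seats. The tension is intrinsic: a group can demand representation only if it shares a common approved candidate, and that candidate is then approved by at least $\tfrac{|V|}{\beta k}$ voters, so it already covers $\Omega\!\left(\binom{|V|/(\beta k)}{2}\right)$ pairs --- demanding groups are inherently pair-rich, and \emph{disjoint} ones number at most $\beta k$ and together exhaust all voters, leaving no room for a dominant pair-rich component. I would overcome this either by using \emph{overlapping} demanding groups (for instance, sliding windows in an interval instance, in the spirit of the paper's CI domain), so that far more than $\beta k$ groups each demand representation yet a hitting set of size $\approx\beta k$ is needed to cover them all, while the pair value is concentrated in a disjoint component; or, for the \EJR{} formulation, by exploiting the $\ell>1$ mechanism, in which a single cohesive group forces $\ell\approx\beta k$ representatives that are mutually \Pairs{}-redundant (all covering the same pairs), so that $\ell-1$ of the forced seats are wasted. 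Verifying that the pair value lost to these forced seats really is a $\beta$-fraction of the optimum --- rather than a smaller constant --- is the delicate calculation on which the tightness hinges.
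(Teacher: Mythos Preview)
Your high-level plan---a two-component family in which $\beta$-JR forces roughly $\beta k$ seats into a representation component, after which the \Pairs{} ratio is computed---matches the paper's proof. However, the obstacle you flag is largely illusory, and neither of your proposed workarounds (overlapping demanding groups in an interval instance, or the $\ell>1$ mechanism of \EJR{}) is needed or used. The paper simply recycles the instance from \Cref{prop:cc_pair_app:ub}, scaling the number of blocks down to $(\beta-\varepsilon)x^4$ while keeping $k=x^4+1$. Each block of $x$ voters is a \emph{disjoint} $\beta$-JR--demanding group (one checks $n/(\beta k)<x$ for large $x$), and its unique candidate must be selected; but that candidate covers only $\binom{x}{2}=\Theta(x^2)$ pairs. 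By contrast, each arm candidate is approved by all $x^3$ central voters plus one fresh arm voter, so its marginal \Pairs{} contribution is $\Theta(x^3)$. The forced $(\beta-\varepsilon)x^4$ block seats are therefore pair-poor by a factor of $x$ relative to arm seats, and the \Pairs{} ratio of any $\beta$-\EJR{} committee is at most $1-\beta+\varepsilon+o(1)$.

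Your specific worry---that disjoint demanding groups number at most $\beta k$ and ``together exhaust all voters, leaving no room for a dominant pair-rich component''---is literally true in the paper's construction (block voters are $(1-o(1))n$), yet it is not an obstacle. The pair-rich component sits on the remaining $o(n)$ voters (central and arm voters) and still dominates \Pairs{}, because \Pairs{} is not proportional to voter count: a single candidate approved by a dense cluster of $x^3$ voters already covers $\Theta(x^6)$ pairs, whereas fragmenting $\Theta(x^5)$ voters into blocks of size $x$ yields only $\Theta(x^2)$ pairs per block candidate. So the resolution is asymmetric scaling of the two components, not overlap or higher $\ell$; no appeal to submodularity is required in step (iv), just a direct count.
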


\begin{proof}
Clearly, the statement is true for $\beta = 0$.
Let $f$ be a voting rule that satisfies $\beta$-EJR for some $\beta \in (0,1]$.
Fix an $\varepsilon>0$ so that $\varepsilon < \beta$ and $\beta-\varepsilon\in\mathbb Q$.
We modify the election $\mathcal E$ from the proof of \Cref{prop:cc_pair_app:ub} by reducing the number of block candidates (resp., block voters) from $x^4$ to $(\beta-\varepsilon) x^4$ (resp., $(\beta-\varepsilon) x^5$) and only considering values of $x$ for which $(\beta-\varepsilon)x$ is integer
(as $\beta-\varepsilon$ is rational, there are infinitely many such values).
As in the proof of \Cref{prop:cc_pair_app:ub}, we want to select $k = x^4+1$ candidates.
Denote the resulting election by ${\mathcal E}'$.
    \begin{figure}
        \centering

        \begin{tikzpicture}

            \tikzset{
            voter/.style={circle,draw,minimum size=0.1cm,inner sep=0, fill = black!05, font=\footnotesize},
          }

            \draw [decorate,thick,decoration={brace,amplitude=5pt,mirror,raise=4ex}]
          (0.35,0.3) -- (0.35,-2.3) node[midway,xshift=-5em,yshift=1.5pt]{$(\beta-\varepsilon)x^4$};

            \draw [decorate,thick,decoration={brace,amplitude=5pt,mirror,raise=4ex}]
      (1.1,-0.2) -- (-0.1,-0.2) node[midway,yshift=3em]{$x$};

            \node[voter] (v1) at (0,0) {};
            \node[voter] (v2) at (0.2,0) {};
            \node[voter] (v3) at (0.4,0) {};
            \node at (0.7, 0) {$...$};
            \node[voter] (v4) at (1,0) {};
            \draw[scale = 1] ($(v1)!.5!(v4)$) ellipse (0.6cm and 0.3cm);

            \node[voter] (v1) at (0,-0.8) {};
            \node[voter] (v2) at (0.2,-0.8) {};
            \node[voter] (v3) at (0.4,-0.8) {};
            \node at (0.7, -0.8) {$...$};
            \node[voter] (v4) at (1,-0.8) {};
            \draw[scale = 1] ($(v1)!.5!(v4)$) ellipse (0.6cm and 0.3cm);

            \node at (0.5, -1.4) {$\vdots$};

            \node[voter] (v1) at (0,-2) {};
            \node[voter] (v2) at (0.2,-2) {};
            \node[voter] (v3) at (0.4,-2) {};
            \node at (0.7, -2) {$...$};
            \node[voter] (v4) at (1,-2) {};
            \draw[scale = 1] ($(v1)!.5!(v4)$) ellipse (0.6cm and 0.3cm);

            \draw [decorate,thick,decoration={brace,amplitude=5pt,mirror,raise=4ex}]
          (2.85,-0.3) -- (2.85,-1.7) node[midway,xshift=-3.4em,yshift=1.5pt]{$x^3$};

            \node[voter] (v1) at (2.8,-0.4) {};
            \node[voter] (v2) at (2.8,-0.6) {};
            \node[voter] (v3) at (2.8,-0.8) {};
            \node at (2.8, -1.2) {$\vdots$};
            \node[voter] (v4) at (2.8,-1.6) {};

            \node[ellipse,draw,minimum width=0.5cm,minimum height=1.5cm] (c1) at (2.8,-1) {};
            \node[voter] (v1) at (4, 0) {};
            \node[ellipse,draw,minimum width=0.4cm,minimum height=0.4cm] (c1_) at (4, -0) {};

            \node[ellipse,draw,minimum width=0.75cm,minimum height=1.75cm] (c2) at (2.8,-1) {};
            \node[voter] (v2) at (4, -0.8) {};
            \node[ellipse,draw,minimum width=0.4cm,minimum height=0.4cm] (c2_) at (4, -0.8) {};

            \node at (4, -1.4) {$\vdots$};

            \node[ellipse,draw,minimum width=1.3cm,minimum height=2.2cm] (c3) at (2.95,-1) {};
            \node at (3.35, -1.25) {$...$};
            \node[voter] (v2) at (4, -2) {};
            \node[ellipse,draw,minimum width=0.4cm,minimum height=0.4cm] (c3_) at (4, -2) {};

            \path[draw]
            (c1) edge (c1_)
            (c2) edge (c2_)
            (c3) edge (c3_)
            ;

            \draw [decorate,thick,decoration={brace,amplitude=5pt,mirror,raise=4ex}]
          (3.7,-2.3) -- (3.7,0.3) node[midway,xshift=4.2em,yshift=1.5pt]{$x^4 + 1$};
        \end{tikzpicture}

        \caption{Illustration of the profile constructed in the proofs of \Cref{prop:EJR_vs_Pairs:ub}.
        Block voters are on the left, central voters in the middle, and arm voters on the right.
        Block candidates are approved by $x$ block voters each, whereas arm candidates are approved by all central voters and one arm voter each.}
        \label{fig:prop:EJR_con_app:ub}
    \end{figure}
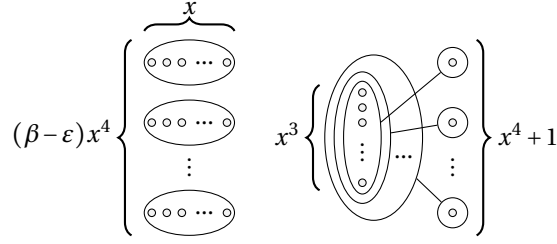

    Fix a committee $W$ that satisfies $\beta$-EJR.
    We claim that for large enough $x$, committee $W$ contains all block candidates.
    To see this, note that there are $(\beta -\varepsilon )x^5$ block voters, $x^3$ central voters, and $x^4+1$ arm voters,
    so $n=(\beta -\varepsilon )x^5+\orderof(x^4)$.
    Thus, $$\frac{n}{\beta k} = \frac{(\beta -\varepsilon )x^5+\orderof(x^4)}{\beta(x^4+1)} = \frac{\beta -\varepsilon} {\beta}\cdot x + \orderof(1)\text.$$
    Hence, we have $\frac{n}{\beta k}<x$ for large enough $x$. Consequently, $\beta$-EJR (with $\ell=1$) demands that a group of $x$ voters who all approve the same candidate has to be represented in $W$, i.e., $W$ must contain a candidate approved by some of these voters.
    In particular, this applies to each block of block voters.
    Since the only way to represent these voters is to include their respective block candidate, $W$ has to contain all block candidates.

    It then follows that $W$
    contains $1 + (1-\beta +\varepsilon) x^4$ arm candidates.
    Consequently, we have
    \begin{align*}
        \Pairs{}(W,\mathcal{E}') &= (\beta - \varepsilon)x^4\binom{x}{2} + \binom{x^3}2 + x^3 \big(1+(1-\beta +\varepsilon)x^4 \big) = (1-\beta +\varepsilon)x^7 + \orderof(x^6)\text.
    \end{align*}
    By contrast, when selecting all arm candidates, we obtain a \Pairs{} score for ${\mathcal E}'$ of $\binom{x^3}2 + x^3(1+x^4) = x^7 + O(x^6)$.
    Considering the ratio of these two values and having $x$ tend to infinity shows that $f$ is
    at most a $(1-\beta+\varepsilon)$-approximation of \Pairs.
    Since $\varepsilon > 0$ with $\varepsilon < \beta$ was chosen arbitrarily, the assertion follows.
\end{proof}

To conclude this section,
we note that the guarantees offered
by \Cref{prop:av_cc_pair_app:lb,prop:pair_ejr_app:lb}
are established by combining separate algorithms for the two objectives in question.
Some of these objectives (in particular, \CC{} and \Pairs{}) do not admit polynomial-time algorithms unless \P$=$\NP.
To achieve a polynomial runtime, we can instead use well-known approximation algorithms for \CC{} and \Pairs{} (for \Pairs{}, we use the sequential Chamberlin--Courant rule on the associated pair instance). This comes, however, at the expense of a factor of up to $1-\frac 1e$ in the approximation ratio~\citep{LacSko-2020-MultiwinnerApproximations}.

\subsection[Cons Objective]{\Connections{} Objective}\label{sec:trade-offs:Connections}

An important reason why we obtained good approximations of \Pairs{}, \AV{}, and \CC{}
was that these objectives are submodular.
In contrast, the \Connections{} objective is not submodular (or even subadditive),
so we cannot use \Cref{prop:submodular}.
In fact, the following result
shows that the trade-off between \Connections{}
and any of \AV{}, \CC{}, or \Pairs{}
is strictly worse (on the side of the \Connections{})
than the trade-offs we have established in \Cref{sec:trade-offs:pairs}.
Notably, our bound
applies even to instances that belong to the VI domain.

\begin{restatable}{proposition}{PairsvsConCovUb}\label{prop:PairsVsConnectionsUb}
    For every $\alpha, \beta \in [0, 1]$,
    if a voting rule satisfies
    $\alpha^2$-\Connections{}
    and $\beta$-\AV, $\beta$-\CC, or $\beta$-\Pairs{},
    then $\alpha + \beta \le 1$.
    This already holds in the VI domain.
\end{restatable}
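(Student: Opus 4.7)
The plan is to construct a VI-domain instance in which the quadratic scaling of \Connections{} with the number of selected chain candidates is pitted against the near-linear scaling of \AV{}, \CC{}, or \Pairs{} with block candidates. Let $x$ be a large integer parameter and set $k = x^3$. Place $k+1$ chain voters $u_1, \ldots, u_{k+1}$ on a line, together with chain candidates $c_1, \ldots, c_k$ where $c_i$ is approved by $\{u_i, u_{i+1}\}$. To the right of the chain, place $k$ independent blocks, each consisting of $x$ fresh block voters and one block candidate approved by exactly those $x$ voters. Since every candidate is approved by a contiguous interval of the voter ordering, the instance lies in the VI domain.

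Next, I would compute the relevant optima. Taking all chain candidates connects every pair of chain voters, giving $\binom{k+1}{2} = \Theta(x^6)$ connected pairs. Any size-$k$ committee can additionally contribute at most $k\binom{x}{2} = \orderof(x^5)$ connections from the blocks, which is of lower order than $x^6$, so the maximum of \Connections{} is $\binom{k+1}{2}(1+\orderof(1/x))$. In contrast, \AV{}, \CC{}, and \Pairs{} are all maximized by selecting all block candidates, yielding $\AV{} = \CC{} = kx = \Theta(x^4)$ and $\Pairs{} = k\binom{x}{2} = \Theta(x^5)$; a chain candidate contributes only $\orderof(1)$ to each of these three scores.

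The crux is the following trade-off analysis. Suppose a committee $W$ of size $k$ contains $k_1$ chain candidates and $k_2 = k - k_1$ block candidates. Because the selected chain candidates induce at most $k_1$ edges along the chain path, the chain's contribution to \Connections{} is at most $\binom{k_1+1}{2}$ (attained when the selection is contiguous), while the block contribution is at most $k_2\binom{x}{2} = \orderof(x^5)$. The $\alpha^2$-\Connections{} guarantee therefore yields $\binom{k_1+1}{2} \ge \alpha^2 \binom{k+1}{2} - \orderof(x^5)$, which forces $k_1 \ge \alpha k \,(1-\orderof(1/x))$. On the other side, for each $X \in \{\AV{}, \CC{}, \Pairs{}\}$, chain candidates contribute only $\orderof(k)$ to $X(W)$, while block candidates contribute exactly $k_2 x$ for \AV{} and \CC{} and $k_2\binom{x}{2}$ for \Pairs{}; since the optimum of $X$ is dominated by the block side, the $\beta$-$X$ guarantee forces $k_2 \ge \beta k \,(1-\orderof(1/x))$. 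Combining the two bounds with $k_1 + k_2 \le k$ gives $\alpha + \beta \le 1 + \orderof(1/x)$, and letting $x \to \infty$ yields the stated inequality.

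The main obstacle is ensuring the quantitative separation between the chain's and the blocks' contributions to \Connections{}, which is exactly what drives the appearance of $\alpha^2$ rather than $\alpha$: because the chain's \Connections{} value grows quadratically in the number of selected chain candidates, devoting a $\beta$-fraction of the budget to the competing objective leaves only a $(1-\beta)$-fraction of the chain, yielding only a $(1-\beta)^2$-fraction of maximum \Connections{}. The choice $k = x^3$ balances the parameters so that the block-side contribution to \Connections{} is asymptotically negligible, while the block candidates still dominate the other three objectives; a less asymmetric choice of parameters would either let the blocks contaminate \Connections{} or let the chain contribute non-trivially to \AV{}/\CC{}/\Pairs{}, either of which would weaken the bound.
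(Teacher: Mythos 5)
Your construction is, up to renaming, exactly the instance the paper uses (a path of $x^3+1$ pairwise-overlapping ``central''/chain voters plus $x^3$ disjoint blocks of $x$ voters, with $k=x^3$), and your analysis — blocks dominate \AV{}/\CC{}/\Pairs{}, the chain dominates \Connections{} quadratically, and the two budget requirements must sum to at most $k$ — is the same argument, merely phrased as two lower bounds on $k_1$ and $k_2$ rather than as an upper bound on the achievable \Connections{} given $\beta$. The proposal is correct and matches the paper's proof.
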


\begin{proof}
    For the proof of all three statements, we consider the identical instance $(V,A,x^3)$ depicted in \Cref{fig:prop:PairsVsConnectionsUb}.
    The candidate set consists of $x^3$ \emph{block candidates}
    $b_1,\dots,b_{x^3}$
    and $x^3$ \emph{central candidates}
    $c_1,\dots,c_{x^3}$,
    i.e., $2x^3$ candidates in total.
    The voter set $V$ consists of $x^4$ {\em block voters}, split into $x^3$ blocks $B_1, \dots, B_{x^3}$
    of size $x$ each, and $x^3+1$ \emph{central voters},
    i.e., $x^5 + x^3 + 1$ voters in total.
    For $i\in [x^3]$, each voter in block $B_i$ approves candidate $b_i$ only.
    Moreover, the first central voter approves candidate $c_1$, the last central voter approves candidate $c_{x^3}$,
    and for $i=2, \dots, x^3$
    the $i$th central voter approves candidates $c_i$ and $c_{i-1}$.
    We set $k=x^3$.

    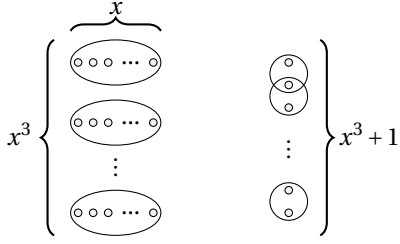
\begin{figure}
        \centering

        \begin{tikzpicture}

            \tikzset{
            voter/.style={circle,draw,minimum size=0.1cm,inner sep=0, fill = black!05, font=\footnotesize},
          }

            \draw [decorate,thick,decoration={brace,amplitude=5pt,mirror,raise=4ex}]
          (0.3,0.3) -- (0.3,-2.3) node[midway,xshift=-3.5em,yshift=1.5pt]{$x^3$};

            \draw [decorate,thick,decoration={brace,amplitude=5pt,mirror,raise=4ex}]
      (1.1,-0.2) -- (-0.1,-0.2) node[midway,yshift=3em]{$x$};

            \node[voter] (v1) at (0,0) {};
            \node[voter] (v2) at (0.2,0) {};
            \node[voter] (v3) at (0.4,0) {};
            \node at (0.7, 0) {$...$};
            \node[voter] (v4) at (1,0) {};
            \draw[scale = 1] ($(v1)!.5!(v4)$) ellipse (0.6cm and 0.3cm);

            \node[voter] (v1) at (0,-0.8) {};
            \node[voter] (v2) at (0.2,-0.8) {};
            \node[voter] (v3) at (0.4,-0.8) {};
            \node at (0.7, -0.8) {$...$};
            \node[voter] (v4) at (1,-0.8) {};
            \draw[scale = 1] ($(v1)!.5!(v4)$) ellipse (0.6cm and 0.3cm);

            \node at (0.5, -1.4) {$\vdots$};

            \node[voter] (v1) at (0,-2) {};
            \node[voter] (v2) at (0.2,-2) {};
            \node[voter] (v3) at (0.4,-2) {};
            \node at (0.7, -2) {$...$};
            \node[voter] (v4) at (1,-2) {};
            \draw[scale = 1] ($(v1)!.5!(v4)$) ellipse (0.6cm and 0.3cm);

            \draw [decorate,thick,decoration={brace,amplitude=5pt,raise=4ex}]
          (2.6,0.3) -- (2.6,-2.3) node[midway,xshift=4.1em,yshift=1.5pt]{$x^3+1$};

    \node[voter] (v1) at (2.8,0) {};
    \node[voter] (v2) at (2.8,-0.3) {};
    \node[voter] (v3) at (2.8,-0.6) {};
    \node at (2.8, -1.15) {$\vdots$};
    \node[voter] (v4) at (2.8,-1.7) {};
    \node[voter] (v5) at (2.8,-2) {};

    \draw[scale=1] ($(v1)!.5!(v2)$) circle (0.25cm);
    \draw[scale=1] ($(v2)!.5!(v3)$) circle (0.25cm);
    \draw[scale=1] ($(v4)!.5!(v5)$) circle (0.25cm);

    \end{tikzpicture}

    \caption{An illustration of the profile constructed in the proof of \Cref{prop:PairsVsConnectionsUb}.
    Block voters are on the left, central voters on the right. Each central candidate covers one pair of voters, but choosing all central candidates yields a large connected component.}
    \label{fig:prop:PairsVsConnectionsUb}
    \end{figure}

    To show that this instance is in VI, we first enumerate the voters in each block, and then the central voters from first to last. By construction, each candidate is approved by an interval of voters.

    The remainder of the proof consists of two parts. The first part shows that, to satisfy $\beta$-\AV, $\beta$-\CC, or $\beta$-\Pairs, we require at least $\beta x^3-\orderof(x^2)$ block candidates. The second part shows that, with the remaining candidates, we can obtain at most a $(1-\beta)^2$ approximation of \Connections.

    \textbf{\AV:} Each block candidate contributes $x$ to the \AV-score, whereas each central candidate only contributes $2$.
    Thus, the maximum \AV-score is $x^4$, achieved by choosing all block candidates.
    If we choose $y\le x^3$ block candidates, then the \AV-score is
    $y x + (x^3 -y)2 $.
    Hence, for a committee of this form to guarantee $\beta$-\AV{}, it has to be the case that
    $y x + (x^3 -y)2 \ge \beta x^4$, or, equivalently,
    $$
    y \ge \frac{\beta x^4 - 2x^3}{x - 2} \ge \frac{\beta x^4 - 2x^3} {x} = \beta x^3 - \orderof(x^2).
    $$

    \textbf{\CC:}
    Each block candidate contributes $x$ to the \CC-score, whereas each central candidate contributes at most~$2$.
    Thus, the maximum \CC-score is $x^4$, achieved by choosing all block candidates.
    If we choose $y\le x^3$ block candidates, then the \CC-score is at most $yx+(x^3-y)2$.
    Hence, for a committee of this form to guarantee $\beta$-\CC{}, it has to be the case that
    $y x + (x^3-y)2\ge \beta x^4$.
    As argued in our analysis for \AV, this implies $y\ge \beta x^3-\orderof(x^2)$.

    \textbf{\Pairs:}
    Each block candidate contributes $\frac{x(x-1)}{2}$ to the \Pairs{}-score, while each central candidate contributes~$1$.
    Thus, the maximum \Pairs-score is $x^3 (\frac{x^2}{2} - \frac{x}{2})$, achieved by choosing all block candidates.
    If we choose $y\le x^3$ block candidates, the \Pairs{} score is at most $x^3 + y (\frac{x^2}{2} - \frac{x}{2})$. Hence, for a committee of this form to guarantee $\beta$-\Pairs{}, it has to be the case that $x^3 + y (\frac{x^2}{2} - \frac{x}{2})\ge \beta x^3 (\frac{x^2}{2} - \frac{x}{2})$ and thus
    $$
    {y} \ge \beta {x^3}- \frac{x^3}{\frac{x^2}{2} - \frac{x}{2}} = \beta {x^3}- \orderof(x).
    $$

    \textbf{\Connections:} We now show that with the remaining $(1-\beta)x^3 + \orderof(x^2)$ candidates, we obtain at best a $(1-\beta)^2$-approximation of \Connections.
    Note that each block candidate contributes $\binom{x}{2}$ to \Connections, hence there are a total of at most $x^3\binom{x}{2}$ connections through block candidates.
    Moreover, $y$ central candidates can connect at most $y+1$ voters.
    Hence, a set of $(1-\beta)x^3 + \orderof(x^2)$ central candidates can cause at most
    $\binom{(1-\beta)x^3 + \orderof(x^2)}{2}$ connections.
    Thus, a committee containing $\beta {x^3}- \orderof(x^2)$ block candidates can achieve a \Connections{} score of at most
    $$x^3\binom{x}{2} + \binom{(1-\beta)x^3 + \orderof(x^2)}{2} = \frac{(1-\beta)^2}{2}\cdot x^6+\orderof(x^5)\text.$$

    By contrast, consider the committee selecting all $x^3$ central candidates.
    This connects all pairs of central voters, and, therefore, achieves a \Connections{} score of $\binom{x^3+1}{2} = \frac {x^6}2 + +\orderof(x^5)$.
    Hence, for $x$ tending to infinity, a committee containing $\beta {x^3}- \orderof(x^2)$ block candidates achieves at most $(1-\beta)^2$-\Connections.
\end{proof}

We derive a similar result for \EJR{} by reducing the number of blocks from $x^3$ to slightly fewer than $\beta x^3$, as in the proof of \Cref{prop:EJR_vs_Pairs:ub}.

\begin{restatable}{proposition}{EJRvsConnectionsTradeoff}\label{prop:EJR_vs_Connections:ub}
    For every $\alpha, \beta \in [0, 1]$,
    if a voting rule satisfies $\alpha^2$-\Connections{}
    and $\beta$-EJR,
    then $\alpha + \beta \le 1$.
    This already holds in the VI domain.
\end{restatable}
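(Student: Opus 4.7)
The plan is to adapt the construction from \Cref{prop:PairsVsConnectionsUb} in exactly the way the proof of \Cref{prop:EJR_vs_Pairs:ub} adapted the construction from \Cref{prop:cc_pair_app:ub}: scale down the number of blocks just enough so that $\beta$-\EJR{} forces the committee to spend nearly a $\beta$-fraction of its seats on block candidates. Concretely, fix $\beta\in (0,1]$ (the case $\beta=0$ is trivial) and $\varepsilon>0$ with $\varepsilon<\beta$ and $\beta-\varepsilon\in\mathbb Q$. I modify the instance from \Cref{prop:PairsVsConnectionsUb} by keeping only $(\beta-\varepsilon)x^3$ blocks of size $x$ together with their corresponding block candidates, leaving the $x^3$ central candidates and $x^3+1$ central voters (with the same path-like approval structure) unchanged, and set $k=x^3$. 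I restrict to values of $x$ for which $(\beta-\varepsilon)x$ is integer, of which there are infinitely many since $\beta-\varepsilon$ is rational. The VI property of the original instance is preserved, because enumerating the voters block by block and then the central voters in path order still makes every candidate's support an interval.

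Next, I would invoke $\beta$-\EJR{} with $\ell=1$ to argue that every block candidate belongs to any committee $W$ returned by the rule. Since $n=(\beta-\varepsilon)x^4+\orderof(x^3)$ and $k=x^3$, we have
\[
\frac{n}{\beta k}=\frac{\beta-\varepsilon}{\beta}\,x+\orderof(1)<x
\]
for $x$ large enough. Hence each block, being a set of $x$ voters sharing a single commonly approved candidate, triggers the $\beta$-\EJR{} requirement at $\ell=1$; as these voters can only be represented through their block candidate, $W$ must contain all $(\beta-\varepsilon)x^3$ block candidates, leaving $(1-\beta+\varepsilon)x^3$ seats for central candidates.

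The final step is to upper-bound $\Connections{}(W,\mathcal E)$. Block candidates contribute only $(\beta-\varepsilon)x^3\binom{x}{2}=\orderof(x^5)$ pairs, and the remaining $(1-\beta+\varepsilon)x^3$ central candidates can connect at most $(1-\beta+\varepsilon)x^3+1$ central voters along the path, contributing at most $\binom{(1-\beta+\varepsilon)x^3+1}{2}=\tfrac{(1-\beta+\varepsilon)^2}{2}x^6+\orderof(x^5)$ pairs. In contrast, selecting all $x^3$ central candidates connects every central voter and yields $\binom{x^3+1}{2}=\tfrac{1}{2}x^6+\orderof(x^3)$ pairs. Taking the ratio and letting $x\to\infty$ gives $\alpha^2\le (1-\beta+\varepsilon)^2$, and since $\varepsilon>0$ was arbitrary, $\alpha+\beta\le 1$.

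The main obstacle is calibrating the scaling so that simultaneously (i) $n/(\beta k)$ stays asymptotically below $x$, which is what makes $\beta$-\EJR{} bind on every block, and (ii) the block contribution to \Connections{} stays in the lower-order $\orderof(x^5)$ regime compared to the $\Theta(x^6)$ scale set by the path of central voters. The quadratic factor $\alpha^2$ (rather than $\alpha$) arises naturally because adding $y$ central candidates yields $\Theta(y^2)$ connected pairs along the path, not $\Theta(y)$, which is precisely the non-submodular behavior highlighted in \Cref{example:connections} and exploited in \Cref{prop:PairsVsConnectionsUb}.
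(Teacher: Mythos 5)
Your proposal is correct and follows essentially the same route as the paper's own proof: the same rescaling of the instance from \Cref{prop:PairsVsConnectionsUb} to $(\beta-\varepsilon)x^3$ blocks, the same use of $\beta$-\EJR{} with $\ell=1$ to force all block candidates into the committee, and the same $\binom{(1-\beta+\varepsilon)x^3+1}{2}$ versus $\binom{x^3+1}{2}$ comparison yielding $\alpha^2\le(1-\beta+\varepsilon)^2$. No substantive differences.
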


\begin{proof}
    Clearly, the statement is true for $\beta = 0$.
    Let $f$ be a voting rule that satisfies $\beta$-EJR for some $\beta \in (0,1]$.
    Fix an $\varepsilon>0$ so that $\varepsilon < \beta$ and $\beta-\varepsilon\in\mathbb Q$.
    We modify the election $\mathcal E$ from the proof of \Cref{prop:PairsVsConnectionsUb} by reducing the number of block candidates (resp., block voters) from $x^3$ to $(\beta-\varepsilon) x^3$ (resp., $(\beta-\varepsilon) x^4$) and only considering values of $x$ for which $(\beta-\varepsilon)x$ is integer
    (as $\beta-\varepsilon$ is rational, there are infinitely many such values).
    The instance is illustrated in \Cref{fig:prop:EJRVsConnectionsUb}.
    As in the proof of \Cref{prop:cc_pair_app:ub}, we want to select $k = x^3$ candidates.
    Denote the resulting election by~${\mathcal E}'$.
    Clearly, ${\mathcal E}'$ remains in VI.

    \begin{figure}
        \centering

        \begin{tikzpicture}

            \tikzset{
            voter/.style={circle,draw,minimum size=0.1cm,inner sep=0, fill = black!05, font=\footnotesize},
          }

            \draw [decorate,thick,decoration={brace,amplitude=5pt,mirror,raise=4ex}]
          (0.3,0.3) -- (0.3,-2.3) node[midway,xshift=-5em,yshift=0.5pt]{$(\beta-\varepsilon) x^3$};

            \draw [decorate,thick,decoration={brace,amplitude=5pt,mirror,raise=4ex}]
      (1.1,-0.2) -- (-0.1,-0.2) node[midway,yshift=3em]{$x$};

            \node[voter] (v1) at (0,0) {};
            \node[voter] (v2) at (0.2,0) {};
            \node[voter] (v3) at (0.4,0) {};
            \node at (0.7, 0) {$...$};
            \node[voter] (v4) at (1,0) {};
            \draw[scale = 1] ($(v1)!.5!(v4)$) ellipse (0.6cm and 0.3cm);

            \node[voter] (v1) at (0,-0.8) {};
            \node[voter] (v2) at (0.2,-0.8) {};
            \node[voter] (v3) at (0.4,-0.8) {};
            \node at (0.7, -0.8) {$...$};
            \node[voter] (v4) at (1,-0.8) {};
            \draw[scale = 1] ($(v1)!.5!(v4)$) ellipse (0.6cm and 0.3cm);

            \node at (0.5, -1.4) {$\vdots$};

            \node[voter] (v1) at (0,-2) {};
            \node[voter] (v2) at (0.2,-2) {};
            \node[voter] (v3) at (0.4,-2) {};
            \node at (0.7, -2) {$...$};
            \node[voter] (v4) at (1,-2) {};
            \draw[scale = 1] ($(v1)!.5!(v4)$) ellipse (0.6cm and 0.3cm);

            \draw [decorate,thick,decoration={brace,amplitude=5pt,raise=4ex}]
          (2.6,0.3) -- (2.6,-2.3) node[midway,xshift=4.1em,yshift=1.5pt]{$x^3+1$};

    \node[voter] (v1) at (2.8,0) {};
    \node[voter] (v2) at (2.8,-0.3) {};
    \node[voter] (v3) at (2.8,-0.6) {};
    \node at (2.8, -1.15) {$\vdots$};
    \node[voter] (v4) at (2.8,-1.7) {};
    \node[voter] (v5) at (2.8,-2) {};

    \draw[scale=1] ($(v1)!.5!(v2)$) circle (0.25cm);
    \draw[scale=1] ($(v2)!.5!(v3)$) circle (0.25cm);
    \draw[scale=1] ($(v4)!.5!(v5)$) circle (0.25cm);

    \end{tikzpicture}

    \caption{An illustration of the profile constructed in the proof of \Cref{prop:EJR_vs_Connections:ub}.
    By reducing the number of blocks in comparison to \Cref{fig:prop:PairsVsConnectionsUb}, each block enforces via $\beta$-EJR that their block candidate is elected.}
    \label{fig:prop:EJRVsConnectionsUb}
    \end{figure}
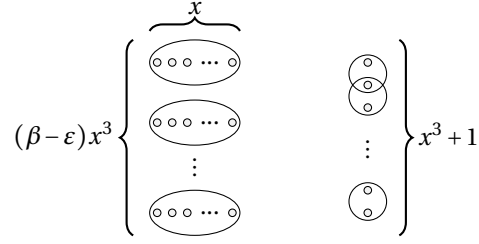

    Fix a committee $W$ that
    satisfies $\beta$-EJR.
    We claim that for large enough $x$, committee $W$ contains all block candidates.
    Each block candidate has a support of $x$ voters and the total number of voters is $n = (\beta-\varepsilon) x^4 + \orderof(x^3)$.
    Hence, for large enough $x$, we have
    $$\frac{n}{\beta k} = \frac{(\beta-\varepsilon) x^4 + \orderof(x^3)}{\beta x^3} = \frac{\beta-\varepsilon}{\beta}\cdot x + \orderof(1)< x\text.$$
    Consequentially, $\beta$-\EJR{} demands that in each block at least one voter approves a candidate of the winning committee.
    Since the voters in each block exclusively approve of the corresponding block candidate, all $(\beta-\varepsilon) x^3$ block candidates have to be contained in $W$.

    It follows that $W$ contains $(1-\beta + \varepsilon)x^3$ central candidates. The highest number of connections caused by central candidates is achieved when they yield one large connected component in the hypergraph induced by the central candidates, i.e., connecting $(1-\beta + \varepsilon)x^3 + 1$ voters.
    Adding this to the connections by block candidates results in

    $$\Connections(W,{\mathcal E}') = (\beta - \varepsilon)x^3\binom{x}{2} + \binom{(1-\beta + \varepsilon)x^3 + 1}{2} = \frac{(1-\beta+\varepsilon)^2 }{2}\cdot x^6 + \orderof(x^5)\text.$$

    By contrast, when selecting all central candidates, we obtain a \Connections{} score of $\frac{x^6}{2} + \orderof(x^5)$.
    Considering the ratio of these two values and having $x$ tend to infinity shows that $f$ is at most a $(1-\beta+\varepsilon)^2$-approximation \Connections. Since $\varepsilon > 0$ with $\varepsilon < \beta$ was chosen arbitrarily, the assertion follows.
\end{proof}

For the \Pairs{} objective, our trade-offs were complemented with straightforward algorithms matching the limitations of the trade-offs (up to rounding).
However, in case of \Connections{}, there is no straightforward method to use an $\alpha$ fraction of candidates to achieve $\alpha^2$-\Connections.
In fact, we will now present a result showing that trade-offs can be even worse
than the ones in \Cref{prop:PairsVsConnectionsUb,prop:EJR_vs_Connections:ub}.

Consider a stepwise function
$s \colon (0,1] \rightarrow [0,1]$
given by $s(\alpha) = 1 / (\lceil 2/\alpha \rceil - 1)$, see \Cref{fig:upper-bounds} for an illustration.
Intuitively, it finds the smallest $p \in \mathbb{N}$
such that $\alpha \ge 2/p$
and returns $1/(p-1)$.
We then have the following trade-off between
\Connections{} and \AV{}.

\begin{proposition}
\label{prop:cons-vs-av:stepwise-ub}
For every $\beta \in [0, 1]$,
    if a voting rule $f$ satisfies $\beta$-\AV, then
     it satisfies at most $s(1-\beta)$-\Connections{}.
\end{proposition}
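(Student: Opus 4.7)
The plan is to exhibit, for every integer $p \ge 2$ and every $\alpha = 1 - \beta$ in the step $[2/p, 2/(p-1))$, a family of election instances (parameterised by a growth parameter $x$) on which every committee satisfying $\beta$-\AV{} attains at most a $1/(p-1) + o(1)$ fraction of the optimal \Connections{} as $x \to \infty$. Letting $x \to \infty$ and ranging over $p$ then yields the stated bound $s(1-\beta) = 1/(\lceil 2/(1-\beta)\rceil - 1)$.

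I would extend the block/central template used in the proof of \Cref{prop:PairsVsConnectionsUb}. On the block side, I place many candidates whose \AV{} values dominate those of every central candidate by a factor growing in $x$, forcing any committee $W$ with $\AV(W) \ge \beta \cdot \AV^*$ to contain at least $\beta k - o(k)$ block candidates and hence to leave at most $\alpha k + o(k)$ seats on the central side. Unlike \Cref{prop:PairsVsConnectionsUb}, which uses a single chain of central candidates and thereby attains only the weaker $\alpha^2$ bound, the central side would be carved into $p - 1$ symmetric ``units,'' each designed so that its contribution to \Connections{} is essentially all-or-nothing: the optimal \Connections{} committee activates all $p - 1$ units and achieves a score of $(p-1)Q$ for some unit value $Q = \Theta(x^{c})$, while under a budget of $\alpha k$ central seats any committee can fully activate at most one unit when $\alpha \in [2/p, 2/(p-1))$, giving at most $Q + o(Q)$ \Connections{}.

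The main obstacle is the joint calibration of block-side parameters. The block candidates must dominate the central candidates in \AV{} (which naturally inflates their per-candidate support and hence their individual \Connections{} contribution), while simultaneously contributing asymptotically negligible \Connections{} so that the overall ratio converges to $1/(p-1)$ rather than a weaker value. I expect the resolution to involve letting many block candidates all share a small common pool of block voters, so that block-side \AV{} accumulates linearly in the number of included block candidates, while block-side \Connections{} saturates at a constant and is absorbed by the $o$-terms. The remaining step is then a direct calculation: count the \Connections{} contributed by the unique activated unit, compare with the full central optimum $(p-1)Q$, and take $x \to \infty$ to obtain the claimed ratio of $1/(p-1) = s(\alpha)$. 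The calculation mirrors those of \Cref{prop:PairsVsConnectionsUb,prop:cc_pair_app:ub,prop:EJR_vs_Pairs:ub}, so once the central side is correctly designed, the rest of the argument is routine.
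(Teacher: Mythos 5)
Your block-side gadget and the outer arithmetic (a step function arising because the leftover central budget cannot pay for two expensive structures) do match the paper's proof, but the central-side gadget you rely on cannot be built, and that is where the entire difficulty of the proposition lies. An ``all-or-nothing'' unit in your sense would be a set $U$ of $u$ candidates whose full selection connects some $N$ voters (so $Q\approx\binom{N}{2}$), while every affordable proper subset yields only $o(Q)$; note that after fully activating one unit, the constrained committee still has roughly $\theta u$ central seats left, where $\theta=\alpha(p-1)-1\ge\frac{p-2}{p}$ is a positive constant (take $p\ge 3$; $p=2$ forces $\beta=0$, where the claim is vacuous). No hypergraph has the property you need. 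First, at most one candidate of $U$ can be ``critical'': if removing $c$ splits the $N$ voters into components $K_1,\dots,K_r$ of size $o(N)$ each, then every other candidate $c'$ has all its supporters inside a single $K_i$ (they are mutually connected through $c'$ itself), while $c$ has a supporter in every $K_j$ (otherwise $K_j$ would already be disconnected in $U$); hence $U\setminus\{c'\}$ still connects all voters outside $K_i$, i.e.\ $N-o(N)$ of them, and retains $(1-o(1))Q$. Second, and fatally even for $\alpha$ in the interior of a step: for any constant $\theta\in(0,1]$ there is always a \emph{connected} subset of at most $\theta u$ candidates of $U$ covering a single component of $\Omega(\theta N)$ voters (build the graph on $U$ joining candidates that share a voter, take a spanning tree and an Euler tour, assign each voter to the first occurrence of a candidate covering it, and keep the heaviest of the $\orderof(1/\theta)$ tour windows of length $\theta u$). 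So a $\theta$-fraction of a unit's seats always recovers $\Omega(\theta^2 Q)$ connections, the constrained committee gets at least $(1+\Omega(1))\cdot Q$ rather than $Q+o(Q)$, and no disjoint-unit construction with additive optimum $(p-1)Q$ can certify a ratio as small as $1/(p-1)=s(1-\beta)$.

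The paper's construction escapes this obstruction by dropping both of your structural assumptions: its ``units'' (arms) are not disjoint---they all meet at one central voter---and it does not try to make partial activation worthless. Each arm consists of a mass of $x^3$ arm voters internally connected by a single cheap arm candidate (this per-arm baseline $Q=\binom{x^3}{2}$ is deliberately conceded to \emph{every} committee), plus a chain of $x^2$ chain candidates leading to the center. What the constrained committee is locked out of is not any arm's intrinsic value but all \emph{cross-arm} connections: joining even two masses requires $2(x^2-1)$ chain seats, which exceeds the leftover budget $(1-\beta)yx^2+\orderof(x)$ precisely because $y=\lceil 2/(1-\beta)\rceil-1$ satisfies $(1-\beta)y<2$. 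The optimum is then super-additive rather than additive: connecting all $y$ masses through the center gives $\binom{yx^3}{2}\approx y^2Q$, not $yQ$. The ratio $1/y=s(1-\beta)$ thus arises as $yQ/(y^2Q)$, an additive baseline divided by a quadratic optimum, rather than as your $Q/\bigl((p-1)Q\bigr)$; repairing your proposal essentially forces you to adopt this shared-center mechanism, so the gap is not a routine calibration issue but the key idea of the proof.
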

\begin{proof}
    Let $y = \frac{1}{s(1-\beta)}=\lceil\frac{2}{1-\beta}\rceil -1$.
    Note that $y$ is an integer.
    For an arbitrary constant $x \in \mathbb{N}$,
    consider the election $\mathcal{E} = (V, A, k)$
    defined as follows (see \Cref{fig:prop:cons-vs-av:stepwise-ub} for an illustration).
    The candidate set contains
    $yx^2+1$ \emph{block} candidates
    $b_1,\dots, b_{yx^2+1}$,
    $y$ \emph{arm} candidates
    $a_1,\dots, a_y$, and
    $yx^2$ \emph{chain} candidates
    $(c_{i,j})_{i \in [x^2], j \in [y]}$.
    The voter set $V$ consists of
    $x^2$ \emph{block} voters,
    $yx^3$ \emph{arm} voters split into $y$ arms
    $A_1, \dots, A_y$ of size $x^3$ each,
    $y(x^2-1)$ \emph{chain} voters $(h_{i,j})_{i \in [x^2-1], j \in [y]}$
    split into $y$ arms $H_1, \dots, H_y$ of size $x^2-1$ each,
    and one \emph{central} voter
    $v$.

    The voters have the following preferences.
    All block voters approve all block candidates.
    For each $j\in [y]$, each voter in arm $A_j$ approves the arm candidate $a_j$,
    and additionally, exactly one voter in $A_j$ approves the chain candidate $c_{x^2,j}$.
    For each $i\in [x^2 -1]$ and $j \in [y]$, the chain voter $h_{i, j}$ approves the chain candidates $c_{i,j}$ and $c_{i+1,j}$.
    Finally, the central voter approves the chain candidates $c_{1, j}$ for each $j\in [y]$.

    We set the target committee
    size to $k=y(x^2+1)+1$.
    The high-level idea of the proof is that
    for \Connections{} it is important to connect the arm voters through the selection of chain candidates.
    However, if we select a $\beta$-fraction of block candidates in order to guarantee $\beta$-\AV{},
    then we cannot connect arm voters from different arms.

    Consider a committee $W$ of size $k$.
    Note that $W$ contains at least one block candidate, as there are only $y+yx^2=k-1$ arm and chain candidates.
    Moreover, suppose there is an arm candidate $a$ not included in $W$.
    Then removing a block candidate from $W$ and adding $a$ instead increases both \AV{} and \Connections{}. Thus, we can assume that $W$ contains all $y$ arm candidates.
    For each $z\in\{0,1,\dots, yx^2\}$,
    let $W_z$ be a committee that selects
    $y$ arm candidates, $z+1$ block candidates, and
    $yx^2 - z$ chain candidates.

    Each arm candidate in $W_z$ contributes $x^3$ to the \AV{} score, each block candidate contributes $x^2$, and each chain candidate contributes $2$.
    Thus, we obtain
    \[
        \AV{}(W_z, \mathcal{E}) = yx^3+zx^2 + x^2 + 2(yx^2 - z).
    \]

    Observe that \AV{} is maximized when $z = yx^2$;
    thus, for $W_z$ to provide $\beta$-\AV{},
    it has to be the case that $z \ge \beta yx^2 - \orderof(x)$.

    Now, for \Connections{}, we claim that for large enough $x$, with the remaining $yx^2 - z$ chain candidates, we cannot connect arm voters from two different arms.
    Indeed,
    recall that $y = \frac{1}{s(1-\beta)}=\lceil\frac{2}{1-\beta}\rceil -1<\frac{2}{1-\beta}$.
    Hence, $(1-\beta)y < 2$.
    Consequently, there is an $\varepsilon > 0$
    such that $(1-\beta)y = 2 - \varepsilon$.
    Since $z \ge \beta yx^2 - \orderof(x)$,
    we choose at most $yx^2 - z \le (1-\beta) yx^2 + \orderof(x) < 2x^2 - \varepsilon x^2 + \orderof(x)$
    chain candidates. Thus, for large enough $x$ we have strictly fewer than $2x^2-2$ chain candidates. This proves the claim,
    as it takes $2(x^2-1)$ chain candidates to connect two arm voters from different arms.

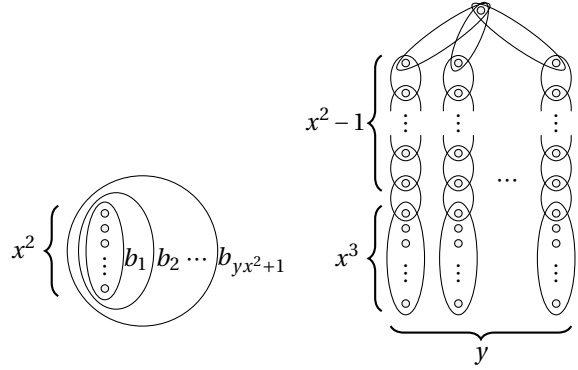
\begin{figure}
    \centering

    \begin{tikzpicture}

        \tikzset{
        voter/.style={circle,draw,minimum size=0.1cm,inner sep=0, fill = black!05, font=\footnotesize},
      }

        \draw [decorate,thick,decoration={brace,amplitude=5pt,mirror,raise=4ex}]
      (0,-0.3) -- (0,-1.5) node[midway,xshift=-3.5em,yshift=1.5pt]{$x^2$};

        \node[voter] (v1) at (0,-0.4) {};
        \node[voter] (v2) at (0,-0.6) {};
        \node[voter] (v3) at (0,-0.8) {};
        \node at (0, -1.1) {$\vdots$};
        \node[voter] (v4) at (0,-1.4) {};

        \node[ellipse,draw,minimum width=0.5cm,minimum height=1.3cm] (c1) at (0,-0.9) {};

        \node[ellipse,draw,minimum width=1cm,minimum height=1.55cm] (c2) at (0.15,-0.9) {};

        \node[ellipse,draw,minimum width=2cm,minimum height=2cm] (c3) at (0.5,-0.9) {};
        \node at (0.41, -1) {$b_1$};
        \node at (1.02, -1) {$b_2 \ \cdots$};
        \node at (1.95, -1.04) {$b_{yx^2+1}$};

        \def\x{4}
        \draw [decorate,thick,decoration={brace,amplitude=5pt,mirror,raise=4ex}]
      (\x + 0.3,-0.3) -- (\x + 0.3,-1.7) node[midway,xshift=-3.5em,yshift=1.5pt]{$x^3$};

        \node[voter] (v_) at (5,2.3) {};

      \draw [decorate,thick,decoration={brace,amplitude=5pt,mirror,raise=4ex}]
      (\x + 0.3,1.7) -- (\x + 0.3,-0.1) node[midway,xshift=-4em,yshift=1.5pt]{$x^2-1$};

        \node[voter] (v1) at (\x,-0.4) {};
        \node[voter] (v2) at (\x,-0.6) {};
        \node[voter] (v3) at (\x,-0.8) {};
        \node at (\x, -1.2) {$\vdots$};
        \node[voter] (v4) at (\x,-1.6) {};

        \node[ellipse,draw,minimum width=0.5cm,minimum height=1.5cm] (c1) at (\x,-1) {};

        \node[ellipse,draw,minimum width=0.4cm,minimum height=0.6cm] (c1) at (\x,-0.2) {};
        \node[voter] (v5) at (\x,0) {};
        \node[ellipse,draw,minimum width=0.4cm,minimum height=0.6cm] (c1) at (\x,0.2) {};
        \node[voter] (v6) at (\x,0.4) {};
        \node[ellipse,draw,minimum width=0.4cm,minimum height=0.6cm] (c1) at (\x,0.6) {};
        \node[voter] (v7) at (\x,0.8) {};
        \node[ellipse,draw,minimum width=0.4cm,minimum height=0.6cm] (c1) at (\x,1) {};
        \node[voter] (v8) at (\x,1.2) {};
        \node[ellipse,white,fill=white,draw,minimum width=0.6cm,minimum height=0.4cm] (c1) at (\x,0.8) {};
        \node[ellipse,draw,minimum width=0.4cm,minimum height=0.6cm] (c1) at (\x,1.4) {};
        \node[voter] (v9) at (\x,1.6) {};
        \node (_) at (\x,0.8) {$\vdots$};
        \node[ellipse,draw,rotate=-55,minimum width=1,minimum height=1.5cm] (c9) at (\x + 0.5, 1.95) {};

        \def\x{4.7}
        \node[voter] (v1) at (\x,-0.4) {};
        \node[voter] (v2) at (\x,-0.6) {};
        \node[voter] (v3) at (\x,-0.8) {};
        \node at (\x, -1.2) {$\vdots$};
        \node[voter] (v4) at (\x,-1.6) {};
        \node[ellipse,draw,minimum width=0.5cm,minimum height=1.5cm] (c1) at (\x,-1) {};

        \node[ellipse,draw,minimum width=0.4cm,minimum height=0.6cm] (c1) at (\x,-0.2) {};
        \node[voter] (v5) at (\x,0) {};
        \node[ellipse,draw,minimum width=0.4cm,minimum height=0.6cm] (c1) at (\x,0.2) {};
        \node[voter] (v6) at (\x,0.4) {};
        \node[ellipse,draw,minimum width=0.4cm,minimum height=0.6cm] (c1) at (\x,0.6) {};
        \node[voter] (v7) at (\x,0.8) {};
        \node[ellipse,draw,minimum width=0.4cm,minimum height=0.6cm] (c1) at (\x,1) {};
        \node[voter] (v8) at (\x,1.2) {};
        \node[ellipse,white,fill=white,draw,minimum width=0.6cm,minimum height=0.4cm] (c1) at (\x,0.8) {};
        \node[ellipse,draw,minimum width=0.4cm,minimum height=0.6cm] (c1) at (\x,1.4) {};
        \node[voter] (v8) at (\x,1.6) {};
        \node (_) at (\x,0.8) {$\vdots$};
        \node[ellipse,draw,rotate=-25,minimum width=0.8,minimum height=1cm] (c9) at (\x + 0.15, 1.95) {};

        \node (_) at (\x + 0.65, 0) {$\cdots$};

        \def\x{6}
        \node[voter] (v1) at (\x,-0.4) {};
        \node[voter] (v2) at (\x,-0.6) {};
        \node[voter] (v3) at (\x,-0.8) {};
        \node at (\x, -1.2) {$\vdots$};
        \node[voter] (v4) at (\x,-1.6) {};
        \node[ellipse,draw,minimum width=0.5cm,minimum height=1.5cm] (c1) at (\x,-1) {};

        \node[ellipse,draw,minimum width=0.4cm,minimum height=0.6cm] (c1) at (\x,-0.2) {};
        \node[voter] (v5) at (\x,0) {};
        \node[ellipse,draw,minimum width=0.4cm,minimum height=0.6cm] (c1) at (\x,0.2) {};
        \node[voter] (v6) at (\x,0.4) {};
        \node[ellipse,draw,minimum width=0.4cm,minimum height=0.6cm] (c1) at (\x,0.6) {};
        \node[voter] (v7) at (\x,0.8) {};
        \node[ellipse,draw,minimum width=0.4cm,minimum height=0.6cm] (c1) at (\x,1) {};
        \node[voter] (v8) at (\x,1.2) {};
        \node[ellipse,white,fill=white,draw,minimum width=0.6cm,minimum height=0.4cm] (c1) at (\x,0.8) {};
        \node[ellipse,draw,minimum width=0.4cm,minimum height=0.6cm] (c1) at (\x,1.4) {};
        \node[voter] (v8) at (\x,1.6) {};
        \node (_) at (\x,0.8) {$\vdots$};
        \node[ellipse,draw,rotate=55,minimum width=1,minimum height=1.5cm] (c9) at (\x - 0.5, 1.95) {};

        \draw [decorate,thick,decoration={brace,amplitude=5pt,mirror,raise=4ex}]
      (3.8,-1.28) -- (6.2,-1.28) node[midway,yshift=-3.3em]{$y$};
    \end{tikzpicture}

    \caption{An illustration of the profile constructed in the proof of \Cref{prop:cons-vs-av:stepwise-ub}.}
    \label{fig:prop:cons-vs-av:stepwise-ub}
\end{figure}

    Let us now calculate the value of the \Connections{} objective.
    The connections among block voters contribute at most $\binom{x^2}{2}$, and the connections among chain voters contribute at most
    $\binom{yx^2}{2}$.
    Connections between chain voters and arm voters contribute at most $yx^5$, as each chain voter can only be connected to arm voters in a single arm.
    We connect all arm voters within the same arm, but, as argued above, we do not connect arm voters from different arms.
    Hence, connections among arm voters contribute $y \binom{x^3}{2}$.
    The central voter can contribute $\orderof(yx^3)$ connections.
    In total,
    \[
        \Connections{}(W_z,\mathcal{E}) =
        \frac{yx^6}{2} + \orderof(x^5)\text.
    \]

    On the other hand, the maximum value of \Connections{} is obtained when we select all $yx^2-y$ chain voters: in this case,
    all $yx^3$ arm voters are connected, and \Connections{} is at least
    \[
       \frac{y^2x^6}{2} + \orderof(x^5)\text.
    \]

    Thus, the fraction of \Connections{} we can obtain while satisfying $\beta$-\AV{} converges to $\frac{1}{y}=s(1-\beta)$ for $x\to\infty$.
\end{proof}

\Cref{prop:cons-vs-av:stepwise-ub} indicates that the trade-offs involving \Connections{} may be rather complex:
for some parameters $\beta$, \Cref{prop:cons-vs-av:stepwise-ub} leads to steeper trade-offs than \Cref{prop:PairsVsConnectionsUb} (see \Cref{fig:upper-bounds}).
This is in contrast to the bound obtained in \Cref{prop:av_pair_app:ub}, where the trade-off between \Pairs{} and \AV{} is tight up to rounding (see \Cref{prop:av_cc_pair_app:lb}).

For example, assume that we want to achieve $\frac 14$-\AV.
Consider an election with a committee size $k$ that is
divisible by~$4$. This means that we can exactly (i.e., without rounding) assign $\frac 14$ of the candidates to achieve $\frac 14$-\AV{} and $\frac 34$ of the candidates to entirely aim at a high \Connections{} score.
If the trade-off in \Cref{prop:PairsVsConnectionsUb} was tight (up to rounding), then we could achieve $\frac 9{16}$-\Connections{}.
However, \Cref{prop:cons-vs-av:stepwise-ub} implies that we can achieve at most $\frac 12$-\Connections{} as $s(1-\frac 14) = \frac 12$.

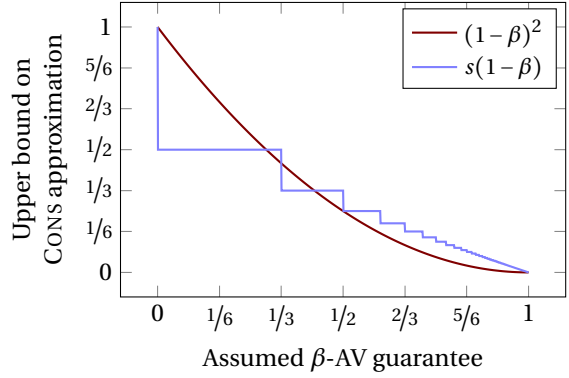
\begin{figure}
    \centering
    \begin{tikzpicture}
    \begin{axis}[
        width=7.5cm,
        height=5.5cm,
        xlabel={Assumed $\beta$-\AV{} guarantee},
        ylabel style={align=center, yshift = 1pt},
        ylabel={Upper bound on \\ \Connections{} approximation},
        xtick={0,0.167,0.333,0.5,0.667,0.833,1},
        xticklabels={$0$, $\nicefrac{1}{6}$, $\nicefrac{1}{3}$, $\nicefrac{1}{2}$, $\nicefrac{2}{3}$, $\nicefrac{5}{6}$, $1$},
        ytick={0,0.167,0.333,0.5,0.667,0.833,1},
        yticklabels={$0$, $\nicefrac{1}{6}$, $\nicefrac{1}{3}$, $\nicefrac{1}{2}$, $\nicefrac{2}{3}$, $\nicefrac{5}{6}$ ,$1$},
        legend style={font=\small},
    ]

    \addplot[color=red!50!black, thick]
    table[x, y index=1] {plotdata.dat};

    \addplot[color=blue!50!white, thick]
    table[x, y index=2] {plotdata.dat};

    \legend{$(1-\beta)^2$, $s(1-\beta)$}

    \end{axis}
    \end{tikzpicture}
    \caption{Two different upper bounds on the possible $\alpha$-approximation of \Connections{} for rules that satisfy $\beta$-\AV{}. The $(1-\beta)^2$ upper bound is the result of \Cref{prop:PairsVsConnectionsUb} and $s(1-\beta)$ is implied by \Cref{prop:cons-vs-av:stepwise-ub}.}
    \label{fig:upper-bounds}
\end{figure}

In fact, our two upper bounds for achieving an $\alpha$-approximation of \Connections{}
given that a voting rule satisfies $\beta$-\AV{}
are based on functions intersecting several times (cf. \Cref{fig:upper-bounds})
and they are of different nature (stepwise versus smooth).
Hence, none of them can yield a tight trade-off.
Consequently, establishing tight trade-offs appears to be a challenging problem.
In particular, because \Connections{} is not submodular,
finding a general lower bound for these trade-offs
seems non-trivial.
Nevertheless, we conclude this section with a positive result on guarantees that we can obtain for a combination of the \Connections{} objective with our other objectives in the VI domain.
It matches a particular intersection point of both of our upper bounds.

\begin{restatable}{proposition}{ConnectionsPositiveForVI}\label{prop:ConnectionsPositiveForVI}
    For every instance $(V, A, k)$ in the VI domain with even $k$, there exists a committee that satisfies $\frac 1 4$-\Connections{} and any one of the criteria $\frac 1 2$-\AV{}, $\frac 1 2$-\CC{}, $\frac 1 2$-\EJR{}, or $\frac 1 2$-\Pairs{}.
\end{restatable}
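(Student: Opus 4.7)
The plan is to use the evenness of $k$ and split the committee budget into two halves of size $k/2$: one half will deliver $\tfrac{1}{2}$-$X$ for $X\in\{\AV,\CC,\Pairs,\EJR\}$, while the other delivers $\tfrac{1}{4}$-\Connections{}. The output committee is the union of the two halves, padded with arbitrary unselected candidates to reach size exactly $k$. This padding step is harmless because every objective we consider is monotone with respect to committee extension: \AV{}, \CC{}, \Pairs{}, and \Connections{} are monotone as set functions, and $\alpha$-\EJR{} is upward-closed since enlarging $W$ can only increase every $|W\cap A_i|$. Hence both guarantees from the two halves carry over to the final size-$k$ committee.

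For the $\tfrac{1}{2}$-$X$ half, I would reuse the techniques from \Cref{sec:trade-offs:pairs}. For $X\in\{\AV,\CC,\Pairs\}$, take a size-$k$ optimum for $X$ and apply \Cref{prop:submodular} to extract a size-$k/2$ sub-committee with $X$-value at least $\tfrac{1}{2}$ of optimum. For $X=\EJR$, invoke \Cref{lemma:equal-shares} with $\alpha=\tfrac{1}{2}$, which directly returns a committee of size $\lfloor k/2\rfloor=k/2$ satisfying $\tfrac{1}{2}$-\EJR{}.

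The main technical step is to show that $k/2$ candidates already suffice for $\tfrac{1}{4}$-\Connections{}. Let $W^*$ be a size-$k$ optimum for \Connections{}. Since replacing any dominated candidate in $W^*$ by its dominator weakly increases \Connections{}, I may assume $W^*$ lies inside a dominated-free sub-instance, which by \Cref{prop:VCI-to-CI} belongs to the CI domain; in particular, there is a canonical ordering of the relevant candidates in which both left and right endpoints of the voter-intervals are non-decreasing. In this order, the connected components of $W^*$ appear as consecutive runs, because the chain condition $L_{(j+1)}\le R_{(j)}$ forces consecutive intervals within a run to overlap while no overlap exists across runs. Now split $W^*$ into $W_1$ (the first $k/2$ canonically-sorted candidates) and $W_2$ (the last $k/2$); this split passes through at most one component.

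The crux is the inequality $\Connections(W_1)+\Connections(W_2)\ge \tfrac{1}{2}\Connections(W^*)$, which immediately gives $\max(\Connections(W_1),\Connections(W_2))\ge \tfrac{1}{4}\Connections(W^*)$ as required. Each component untouched by the cut contributes its full $\binom{n_i}{2}$ to exactly one side. For the at most one split component with $n$ voters and sub-chain coverages $n^{(1)}$ and $n^{(2)}$, the chain condition at the cut position forces $n^{(1)}+n^{(2)}\ge n+1$, and a short convexity argument then yields $\binom{n^{(1)}}{2}+\binom{n^{(2)}}{2}\ge \tfrac{1}{2}\binom{n}{2}$. Summing over components gives the inequality. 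The main obstacle is establishing this local bound cleanly, and in particular justifying the reduction to the dominated-free CI setting: without the canonical sort, components need not appear as contiguous runs and a single cut could fragment several components at once, breaking the proof.
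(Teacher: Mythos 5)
Your proof is correct, and its core step takes a genuinely different route from the paper's. The reduction is identical: use $k/2$ seats for $\frac 12$-\AV{}, $\frac 12$-\CC{}, $\frac 12$-\Pairs{} via \Cref{prop:submodular} or for $\frac 12$-\EJR{} via \Cref{lemma:equal-shares}, and show that the remaining $k/2$ candidates suffice for $\frac 14$-\Connections{}. Both arguments also share the same structural ingredients: after discarding candidates whose support is nested in another's (note that when the dominator already lies in $W^*$ you should delete the dominated candidate rather than ``replace'' it, which changes nothing), the optimum can be sorted so that left and right endpoints of the supports increase together, connected components form contiguous runs in this order, and the binomial inequality $\binom{a}{2}+\binom{b-a+1}{2}\ge\frac12\binom{b}{2}$ (the paper's \Cref{lem:binom2}; your convexity step, since $n^{(1)}+n^{(2)}\ge n+1$). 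The genuine difference is in how the $k/2$ candidates are chosen: the paper partitions the optimum into maximal chains, takes the better half of each even-sized chain, and pairs up the odd-sized chains, applying \Cref{lem:binom2} once per pair via a two-option selection; you instead make a single cut at the midpoint of the whole sorted optimum, observe that the cut severs at most one component, conclude $\Connections(W_1)+\Connections(W_2)\ge\frac12\Connections(W^*)$, and take the better half. Your averaging argument eliminates the paper's parity bookkeeping entirely and invokes the key inequality only once; the paper's construction is more local (it secures $\frac14$ of every chain's contribution separately), but that extra locality buys nothing for this proposition. One citation quibble: \Cref{prop:VCI-to-CI} is not really the right tool for your canonical order --- what you need is the elementary VI-domain fact that non-nested voter-intervals sorted by left endpoint also have increasing right endpoints, which is exactly how the paper justifies the same ordering.
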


\begin{proof}
    Let $(V,A,k)$ be an election instance in the VI domain, as certified by the voter ordering $v_1,\dots, v_n$, and let $k$ be even.
    It suffices to show that with $ \frac{k}{2}$ candidates, we can guarantee $\frac{1}{4}$-\Connections{}. With the other $\frac{k}{2}$ candidates, we can use the methods in \Cref{prop:av_cc_pair_app:lb} and \Cref{lemma:equal-shares} to obtain $\frac{1}{2}$-\AV{}, $\frac{1}{2}$-\CC{}, $\frac{1}{2}$-\EJR, or $\frac{1}{2}$-\Pairs{}.

    The proof idea is to fix an optimal solution for \Connections{}, and split it into subsets so that each subset interconnects an interval of voters. An even-sized subset $U$ can then be split into two equal-size contiguous subsets; we choose one that covers at least half of the voters covered by $U$. If some subset contains an odd number of candidates, there must be an even number of such subsets. We show that, given a pair of candidate subsets $U, U'$ of odd size, we can always pick $\left\lfloor\frac{|U|}{2}\right\rfloor$ candidates from $U$, $\left\lfloor\frac{|U'|}{2}\right\rfloor$ candidates from $U'$, and one additional candidate from $U\cup U'$ so as to guarantee $\frac{1}{4}$-approximation to the number of pairs of voters connected by $U$ and $U'$.

    In more detail, let $W$ be a committee that maximizes \Connections{}. Given a candidate $c$,
    let $\ell(c)$ and $r(c)$ denote the index of the leftmost and the rightmost voter approving $c$, respectively.
    Since for \Connections{} it is never desirable to include a pair of candidates $c, c'$ with $V_c\subseteq V_{c'}$,
    we can assume that for all $c, c'\in W$ it holds that $\ell(c)\neq\ell(c')$, and, moreover,
    $\ell(c)<\ell(c')$ implies $r(c)<r(c')$.

    We say that two candidates $c, c'\in W$ {\em overlap} if there is a voter $v\in V_c\cap V_{c'}$.
    Further, we say that a subset $U\subseteq W$ with $U = \{d_1,\dots, d_{t}\}$ forms a {\em chain} if $\ell(d_1)<\dots<\ell(d_t)$ and
    for each $j=1, \dots, t-1$ it holds that $d_j$ and $d_{j+1}$ overlap.
    Note that, by our assumptions, this implies $\ell(d_{j+1}) \le r(d_j) < r(d_{j+1})$
    for all $j\in [t-1]$.
    Given a chain $U$, let $N(U)$ be the set of voters who approve at least one candidate in $U$; by construction, $N(U)$ forms an interval of the voter order and all voters in $N(U)$ are connected by $U$, so a chain $U$ contributes ${{|N(U)|}\choose 2}$ to the \Connections{} objective.

    Observe that $W$ can be partitioned into pairwise disjoint inclusion-maximal chains $U_1, \dots, U_\mu$, so that the candidates from different chains do not overlap. Moreover, since $k$ is even, the number of odd-sized chains in this partition is even. Divide the set of all odd-sized chains into pairs.

    Consider an even-sized chain $U = \{d_1,\dots, d_{2s}\}$ in this partition. Assume that $|N_U|=x$, i.e, $U$ connects $\binom{x}{2}$ pairs of voters.
    Let $z=\lfloor\frac{x}{2}\rfloor+1$. Consider the voter $r(d_s)$.
    If $r(d_s) -\ell(d_1) +1 > \frac{x}{2}$, then candidates $\{d_1,\dots,d_s\}$ form a chain that covers at least $z$ voters and hence contributes at least $\binom{z}{2}$ to the \Connections{} objective. Else, $\ell(d_{s+1}) \le r(d_s)\le \ell(d_1)+\frac{x}{2}-1$ and hence candidates $\{d_{s+1},\dots, d_t\}$ form a chain that covers at least $z$ voters and contributes at least $\binom{z}{2}$ to the \Connections{} objective. It remains to note that $\binom{z}{2}\ge \frac 14\cdot\binom{x}{2}$.

    On the other hand, suppose that $(U, U')$ is a pair of odd-sized chains from our partition of $W$,
    with $U=\{d_1,\dots, d_{2s+1}\}$,
    $U'=\{e_1,\dots, e_{2s'+1}\}$, and $|N(U)|=x, |N(U')|=y$.
    We will consider two ways of selecting $\frac{|U|+|U'|}{2}$ candidates from $U\cup U'$: (1) choose $s+1$ candidates from $U$ and $s'$ candidates from $U'$ or (2) choose $s$ candidates from $U$ and $s'+1$ candidates from $U'$. We will argue that at least one of these options contributes at least $\frac14$ of the connections provided by $U\cup U'$.

    We will use the following technical lemma.

    \begin{lemma}\label{lem:binom2}
    For every $a, b\in \mathbb N$ with $a\le b$ we have
    $$
    \binom{a}{2}+\binom{b-a+1}{2}\ge \frac12\cdot\binom{b}{2}.
    $$
    \end{lemma}
    \begin{proof}
    Since $b\in\mathbb N$, we have $1-b\le 0$ and hence
    $(2a-b+1)^2\ge 1-b$. Expanding the left-hand side, we obtain
    $$
    4a^2+b^2+1-4ab-2b+4a\ge 1-b.
    $$
    Consequently, we have
    \begin{align*}
    4\cdot\left(\binom{a}{2}+\binom{b-a+1}{2}\right) &= 2a(a-1)+2(b-a)(b-a+1) =
    2a^2-2a+2b^2-2ab-2ab+2a^2+2b-2a \\
    &=(4a^2+b^2+1-4ab-2b-4a)+b^2+4b-1
    \ge (1-b) + b^2+4b-1 \ge b(b-1) =2\cdot\binom{b}{2}.
    \end{align*}
    Dividing both sides by 4, we get the desired result.
    \end{proof}

    Suppose the chain $U_1 = \{d_1,\dots, d_{s+1}\}$ covers $x_1$ voters in $N(U)$. Since $\ell(d_{s+2})\le r(d_{s+1})$, the chain $U_2 = \{d_{s+2},\dots, d_t\}$ covers at least $x-x_1+1$ voters in $N(U)$.
    Similarly, if the chain $U'_1 = \{e_1,\dots, e_{s'+1}\}$ covers
    $y_1$ voters in $N(U')$, then the chain $U'_2 = \{e_{s'+2},\dots, e_{2s'+1}\}$ covers at least $y-y_1+1$ voters in $N(U')$.
    Consequently, the number of connections provided by $U_1\cup U'_2$ is at least $z = \binom{x_1}{2}+\binom{y-y_1+1}{2}$, whereas the number of connections provided by $U'_1\cup U_2$ is at least $z' = \binom{y_1}{2}+\binom{x-x_1+1}{2}$.
    Note that
    $$
    z+z' = \binom{x_1}{2}+\binom{x-x_1+1}{2}+\binom{y-y_1+1}{2}+\binom{y_1}{2} \ge \frac12\cdot\left(\binom{x}{2}+\binom{y}{2}\right),
    $$
    where the inequality follows by applying \Cref{lem:binom2} twice, first with $b=x, a=x_1$ and second with $b=y, a=y_1$.
    Hence, $\max\{z, z'\}\ge \frac14\cdot (\binom{x}{2}+\binom{y}{2})$, i.e.,
    at least one of the sets $U_1\cup U'_2$ and $U'_1\cup U_2$ provides at least $\frac14$ of all connections provided by $U\cup U'$.

    To summarize, we argued that we can select half of the candidates from each even-sized chain and half of the candidates from each pair of odd-sized chains so as to obtain at least $\frac14$ of the connections provided by the respective chain/pair of chains. This concludes the proof.
\end{proof}

For $\frac 14$-\Connections{} and $\frac 12$-EJR, the result of \Cref{prop:ConnectionsPositiveForVI} extends to odd $k$.
Indeed, it suffices to use $\lfloor \frac k2\rfloor$ candidates to achieve $\frac 12$-EJR (\Cref{lemma:equal-shares}).
Moreover, for odd $k$, the proof of \Cref{prop:ConnectionsPositiveForVI} implies that choosing $\lceil \frac k2\rceil$ candidates yields $\frac 14$-\Connections{} in the modified election where we are allowed to select $k+1$ instead of $k$ candidates, i.e., we achieve an approximation of a value that can only be larger than the maximum \Connections{} score in the original election.

\section{Conclusion}

Our paper sheds new light on the interdependency of mass and elite polarization.
We observe that the selection of a representative committee can significantly influence elite polarization independently of mass polarization.
With the aim of avoiding polarization at the level of the representation, we have introduced \Pairs{} and \Connections{}, two numerical objectives that measure how well a committee interlaces the electorate.

We show that, while maximizing both objectives is \NP-hard, a committee maximizing either of them can be computed in polynomial time on the voter-candidate interval domain.
Also, we study the compatibility of our objectives with measures of excellence, diversity, and proportionality.
We identify approximation trade-offs suggesting that, in the worst case, one cannot improve over the simple strategy of dividing the committee seats among different objectives and maximizing each objective with its designated share of the seats: there are instances on which the synergies are negligible.
For almost all objectives we study, a subcommittee yields a fraction of the optimal value that is proportional to its size. Only for \Connections{}, the dependency is quadratic (or even worse), leading to inferior guarantees.

We believe that our work offers an important perspective that has been missing from the social choice literature on multiwinner voting.
As such, it calls for further research; in what follows, we suggest some promising directions.

For future work, an immediate open question is to determine the exact trade-off between \Connections{} and other objectives.
While we have a bound for $\alpha^2$-\Connections{} and $(1-\alpha)$-approximations of other objectives, \Cref{prop:cons-vs-av:stepwise-ub} shows that the picture is more nuanced.

Going beyond our base model,
another direction is to
consider our objectives in the broader context of participatory budgeting (PB), where each candidate has a cost, and the committee needs to stay within a given budget \citep{ReMa23a}.
In this setting, candidates are usually projects, such as a playground, a community garden, or a cycling path.
Interlacing voters by projects in PB has an additional interpretation: the funded projects may lead to interaction among the agents who use them
(e.g., working together in a community garden).
This seems quite desirable in the context of PB, where one of the goals is community building.

Moreover, it would be interesting to explore the compatibility of our objectives and the canonical desiderata in the context of real-life instances: It is plausible that on realistic data one can achieve much better trade-offs than in the worst case.

Finally, while \Pairs{} and \Connections{} offer some insight into the polarization induced by a committee, there are settings where they fail to provide useful information:
For example, if some candidate is approved by all voters, any committee containing this candidate maximizes both objectives.
Therefore, further insights could be gained by studying refined versions of our objectives; e.g., one can consider the strength of the connections or,
in case of the \Connections{} objective,
the length of the (shortest) path between a pair of voters.

\section*{Acknowledgements}
Most of this work was done when Edith Elkind was at the University of Oxford.
Chris Dong was supported by the Deutsche Forschungsgemeinschaft under grants BR 2312/11-2 and BR 2312/12-1.
Martin Bullinger was supported by the AI Programme of The Alan Turing Institute.
Tomasz W\k{a}s and Edith Elkind were supported by the UK Engineering and Physical Sciences Research Council (EPSRC) under grant EP/X038548/1.

\bibliography{abb,interlacing}

\begin{thebibliography}{37}
\providecommand{\natexlab}[1]{#1}
\providecommand{\url}[1]{\texttt{#1}}
\expandafter\ifx\csname urlstyle\endcsname\relax
  \providecommand{\doi}[1]{doi: #1}\else
  \providecommand{\doi}{doi: \begingroup \urlstyle{rm}\Url}\fi

\bibitem[Abramowitz and Saunders(2008)]{AbSa08a}
Alan~I. Abramowitz and Kyle~L. Saunders.
\newblock Is polarization a myth?
\newblock \emph{The Journal of Politics}, 70\penalty0 (2):\penalty0 542--555, 2008.

\bibitem[Aziz et~al.(2017)Aziz, Brill, Conitzer, Elkind, Freeman, and Walsh]{AziBriConElkFreEtal-2017-EJR}
Haris Aziz, Markus Brill, Vincent Conitzer, Edith Elkind, Rupert Freeman, and Toby Walsh.
\newblock Justified representation in approval-based committee voting.
\newblock \emph{Social Choice and Welfare}, 48\penalty0 (2):\penalty0 461--485, 2017.

\bibitem[Barber{\`a} and Coelho(2008)]{BarCoe-2008-CommitteeMonotonicity}
Salvador Barber{\`a} and Danilo Coelho.
\newblock How to choose a non-controversial list with $k$ names.
\newblock \emph{Social Choice and Welfare}, 31\penalty0 (1):\penalty0 79--96, 2008.

\bibitem[Betzler et~al.(2013)Betzler, Slinko, and Uhlmann]{BSU13}
Nadja Betzler, Arkadii Slinko, and Johannes Uhlmann.
\newblock On the computation of fully proportional representation.
\newblock \emph{Journal of Artificial Intelligence Research}, 47:\penalty0 475--519, 2013.

\bibitem[Bol et~al.(2019)Bol, Matakos, Troumpounis, and Xefteris]{BolMatTroXef-2019-VotingEquilibria}
Damien Bol, Konstantinos Matakos, Orestis Troumpounis, and Dimitrios Xefteris.
\newblock Electoral rules, strategic entry and polarization.
\newblock \emph{Journal of Public Economics}, 178:\penalty0 104065, 2019.

\bibitem[Brill and Peters(2023)]{BrPe23a}
Markus Brill and Jannik Peters.
\newblock Robust and verifiable proportionality axioms for multiwinner voting.
\newblock In \emph{Proceedings of the 24th ACM Conference on Economics and Computation (ACM-EC)}, 2023.

\bibitem[Brill and Peters(2024)]{BriPet-2024-MultigoalMultiwinner}
Markus Brill and Jannik Peters.
\newblock Completing priceable committees: Utilitarian and representation guarantees for proportional multiwinner voting.
\newblock In \emph{Proceedings of the 38th AAAI Conference on Artificial Intelligence}, pages 9528--9536, 2024.

\bibitem[Chamberlin and Courant(1983)]{ChaCou-1983-ChamberlinCourant}
John~R. Chamberlin and Paul~N. Courant.
\newblock Representative deliberations and representative decisions: {P}roportional representation and the {B}orda rule.
\newblock \emph{American Political Science Review}, 77\penalty0 (3):\penalty0 718--733, 1983.

\bibitem[Colley et~al.(2023)Colley, Grandi, Hidalgo, Macedo, and Navarrete]{ColGraHidMacNav-2023-Polarization}
Rachael Colley, Umberto Grandi, C{\'e}sar Hidalgo, Mariana Macedo, and Carlos Navarrete.
\newblock Measuring and controlling divisiveness in rank aggregation.
\newblock In \emph{Proceedings of the 32nd International Joint Conference on Artificial Intelligence (IJCAI)}, pages 2616--2623, 2023.

\bibitem[Cox(1985)]{Cox-1985-EquilibriumApproval}
Gary~W. Cox.
\newblock Electoral equilibrium under approval voting.
\newblock \emph{American Journal of Political Science}, pages 112--118, 1985.

\bibitem[Delemazure et~al.(2024)Delemazure, Janeczko, Kaczmarczyk, and Szufa]{DelJanKacSzu-2024-ConflictingPair}
Th{\'e}o Delemazure, {\L}ukasz Janeczko, Andrzej Kaczmarczyk, and Stanis{\l}aw Szufa.
\newblock Selecting the most conflicting pair of candidates.
\newblock In \emph{Proceedings of the 33rd International Joint Conference on Artificial Intelligence (IJCAI)}, pages 2766--2773, 2024.

\bibitem[Do et~al.(2022)Do, Hervouin, Lang, and Skowron]{DHLS22a}
Virginie Do, Matthieu Hervouin, J{\'e}r{\^o}me Lang, and Piotr Skowron.
\newblock Online approval committee elections.
\newblock In \emph{Proceedings of the 31st International Joint Conference on Artificial Intelligence (IJCAI)}, pages 251--257, 2022.

\bibitem[Dong and Peters(2025)]{DoPe25a}
Chris Dong and Jannik Peters.
\newblock Proportional multiwinner voting with dynamic candidate sets.
\newblock In \emph{Proceedings of the 42nd International Conference on Machine Learning}, 2025.
\newblock Forthcoming.

\bibitem[Elkind and Ismaili(2015)]{ElkIsm-2015-CCforOWA}
Edith Elkind and Anisse Ismaili.
\newblock {OWA}-based extensions of the {C}hamberlin--{C}ourant rule.
\newblock In \emph{Proceedings of the 4th International Conference on Algorithmic Decision Theory (ADT)}, pages 486--502, 2015.

\bibitem[Elkind and Lackner(2015)]{EL15}
Edith Elkind and Martin Lackner.
\newblock Structure in dichotomous preferences.
\newblock In \emph{Proceedings of the 24th International Joint Conference on Artificial Intelligence (IJCAI)}, pages 2019--2025, 2015.

\bibitem[Elkind et~al.(2017)Elkind, Faliszewski, Skowron, and Slinko]{ElkFalSkoSli-2017-Multiwinner}
Edith Elkind, Piotr Faliszewski, Piotr Skowron, and Arkadii Slinko.
\newblock Properties of multiwinner voting rules.
\newblock \emph{Social Choice and Welfare}, 48:\penalty0 599--632, 2017.

\bibitem[Elkind et~al.(2024)Elkind, Faliszewski, Igarashi, Manurangsi, Schmidt-Kraepelin, and Suksompong]{ElkFalIgaManSchEtal-2024-PriceOfJR}
Edith Elkind, Piotr Faliszewski, Ayumi Igarashi, Pasin Manurangsi, Ulrike Schmidt-Kraepelin, and Warut Suksompong.
\newblock The price of justified representation.
\newblock \emph{ACM Transactions on Economics and Computation}, 12\penalty0 (3):\penalty0 1--27, 2024.

\bibitem[Fairstein et~al.(2022)Fairstein, Vilenchik, Meir, and Gal]{FaiVilMeiGal-2022-MultigoalPB}
Roy Fairstein, Dan Vilenchik, Reshef Meir, and Kobi Gal.
\newblock Welfare vs. representation in participatory budgeting.
\newblock In \emph{Proceedings of the 21st International Conference on Autonomous Agents and Multiagent Systems (AAMAS)}, pages 409--417, 2022.

\bibitem[Faliszewski et~al.(2017)Faliszewski, Skowron, Slinko, and Talmon]{FSST17a}
Piotr Faliszewski, Piotr Skowron, Arkadii Slinko, and Nimrod Talmon.
\newblock Multiwinner voting: A new challenge for social choice theory.
\newblock In Ulle Endriss, editor, \emph{Trends in Computational Social Choice}, chapter~2. 2017.

\bibitem[Fiorina(2017)]{Fior17a}
Morris~P. Fiorina.
\newblock \emph{Unstable majorities: Polarization, party sorting, and political stalemate}.
\newblock Hoover press, 2017.

\bibitem[Fiorina and Abrams(2008)]{FiAb08a}
Morris~P. Fiorina and Samuel~J. Abrams.
\newblock Political polarization in the {A}merican public.
\newblock \emph{Annual Review of Political Science}, 11\penalty0 (1):\penalty0 563--588, 2008.

\bibitem[Fiorina et~al.(2011)Fiorina, Abrams, and Pope]{FAP11a}
Morris~P. Fiorina, Samuel~J. Abrams, and Jeremy~C. Pope.
\newblock \emph{Culture war? The myth of a polarized {A}merica}.
\newblock Longman, 3rd edition, 2011.

\bibitem[Garey and Johnson(1979)]{GaJo79a}
Michael~R. Garey and David~S. Johnson.
\newblock \emph{Computers and Intractability: A Guide to the Theory of NP-Completeness}.
\newblock W. H. Freeman, 1979.

\bibitem[Godziszewski et~al.(2021)Godziszewski, Batko, Skowron, and Faliszewski]{GodBatSkoFal-2021-2DApprovals}
Micha{\l}~T. Godziszewski, Pawe{\l} Batko, Piotr Skowron, and Piotr Faliszewski.
\newblock An analysis of approval-based committee rules for 2{D}-{E}uclidean elections.
\newblock In \emph{Proceedings of the 35th AAAI Conference on Artificial Intelligence}, pages 5448--5455, 2021.

\bibitem[Kocot et~al.(2019)Kocot, Kolonko, Elkind, Faliszewski, and Talmon]{KocKolElkFalTal-2019-MultigoalMultiwinner}
Maciej Kocot, Anna Kolonko, Edith Elkind, Piotr Faliszewski, and Nimrod Talmon.
\newblock Multigoal committee selection.
\newblock In \emph{Proceedings of the 28th International Joint Conferences on Artificial Intelligence (IJCAI)}, pages 385--391, 2019.

\bibitem[Kurella and Barbaro(2024)]{KurBar-2024-PolarizingProp}
Anna-Sophie Kurella and Salvatore Barbaro.
\newblock On the polarization premium for radical parties in {PR} electoral systems.
\newblock Technical report, Gutenberg School of Management and Economics, Johannes Gutenberg-Universit{\"a}t Mainz, 2024.

\bibitem[Lackner and Skowron(2020)]{LacSko-2020-MultiwinnerApproximations}
Martin Lackner and Piotr Skowron.
\newblock Utilitarian welfare and representation guarantees of approval-based multiwinner rules.
\newblock \emph{Artificial Intelligence}, 288:\penalty0 103366, 2020.

\bibitem[Lackner and Skowron(2023)]{LaSk22b}
Martin Lackner and Piotr Skowron.
\newblock \emph{Multi-Winner Voting with Approval Preferences}.
\newblock Springer Nature, 2023.

\bibitem[Levendusky and Malhotra(2016)]{LeMa16a}
Matthew Levendusky and Neil Malhotra.
\newblock Does media coverage of partisan polarization affect political attitudes?
\newblock \emph{Political Communication}, 33\penalty0 (2):\penalty0 283--301, 2016.

\bibitem[Levin et~al.(2021)Levin, Milner, and Perrings]{levin2021dynamics}
Simon~A Levin, Helen~V Milner, and Charles Perrings.
\newblock The dynamics of political polarization.
\newblock \emph{Proceedings of the National Academy of Sciences}, 118\penalty0 (50):\penalty0 e2116950118, 2021.

\bibitem[Monroe(1995)]{Mon-1995-Monroe}
Burt~L. Monroe.
\newblock Fully proportional representation.
\newblock \emph{American Political Science Review}, 89\penalty0 (4):\penalty0 925--940, 1995.

\bibitem[Myerson and Weber(1993)]{MyeWeb-1993-VotingEquilibria}
Roger~B. Myerson and Robert~J. Weber.
\newblock A theory of voting equilibria.
\newblock \emph{American Political Science Review}, 87\penalty0 (1):\penalty0 102--114, 1993.

\bibitem[Nemhauser et~al.(1978)Nemhauser, Wolsey, and Fisher]{nemhauser1978analysis}
George~L Nemhauser, Laurence~A Wolsey, and Marshall~L Fisher.
\newblock An analysis of approximations for maximizing submodular set functions—{I}.
\newblock \emph{Mathematical programming}, 14:\penalty0 265--294, 1978.

\bibitem[Peters and Skowron(2020)]{PeSk20a}
Dominik Peters and Piotr Skowron.
\newblock Proportionality and the limits of welfarism.
\newblock In \emph{Proceedings of the 21nd ACM Conference on Economics and Computation (ACM-EC)}, pages 793--794, 2020.

\bibitem[Phragm{\'e}n(1899)]{Phra99a}
Edvard Phragm{\'e}n.
\newblock Till fr{\aa}gan om en proportionell valmetod.
\newblock \emph{Statsvetenskaplig Tidskrift}, 2\penalty0 (2):\penalty0 297--305, 1899.

\bibitem[Rey and Maly(2023)]{ReMa23a}
Simon Rey and Jan Maly.
\newblock The (computational) social choice take on indivisible participatory budgeting.
\newblock Technical report, https://arxiv.org/pdf/2303.00621.pdf, 2023.

\bibitem[Thiele(1895)]{Thie95a}
Thorvald~N. Thiele.
\newblock Om flerfoldsvalg.
\newblock \emph{Oversigt over det Kongelige Danske Videnskabernes Selskabs Forhandlinger}, pages 415--441, 1895.

\end{thebibliography}

\clearpage

\appendix

\section*{Appendix}

In the appendix, we present proofs missing from the main part of the paper.
\section{Implications of Submodularity}

In this section, we prove our proposition concerning submodular functions.

\submod*

\begin{proof}
    Let $S=\{x_1, \dots, x_k\}$. We can write $f(S)$ as a telescoping sum
    $$
    f(S) = \sigma_1+\dots+\sigma_k,
    $$
    where $\sigma_i=f(\{x_1, \dots, x_i\})-f(\{x_1, \dots, x_{i-1}\})$ for each $i\in [k]$.
    Let $\sigma_{i_1}, \dots, \sigma_{i_\ell}$, $i_1\le\dots\le i_\ell$, be the $\ell$ largest
    summands in this sum, and set $S'=\{x_{i_1}, \dots, x_{i_\ell}\}$.
    Note that $\sum_{j\in [\ell]}\sigma_{i_j}\ge \frac{\ell}{k}\cdot f(S)$.
    Moreover, we can write $f(S')$ as a telescoping sum
    $$
    f(S')=\sigma'_{i_1}+\dots+\sigma'_{i_\ell},
    $$
    where $\sigma'_{i_j}=f(x_{i_1}, \dots, x_{i_j})-f(x_{i_1}, \dots, x_{i_{j-1}})$ for each $j\in [\ell]$.
    It remains to note that
    $\{x_{i_1}, \dots, x_{i_{j-1}}\}\subseteq \{x_1, x_2, \dots, x_{i_{j-1}}\}$ for all $j\in [\ell]$ and hence by submodularity we have
    $\sigma'_j\ge \sigma_j$ for all $j\in[\ell]$.
    Therefore,
    $$
    f(S')=\sigma'_{i_1}+\dots\sigma'_{i_\ell}\ge \sigma_{i_1}+\dots\sigma_{i_\ell} \ge \frac{\ell}{k}\cdot f(S),
    $$
    which is what we wanted to prove.
\end{proof}

\section[Cons in VCI Domain]{Efficient Maximization of the \Connections{} Objective in the VCI Domain}
\label{app:DP}
In this section, we provide a full specification of the dynamic program for maximizing \Connections{} in the VCI domain.
We then prove \Cref{thm:vci-conn}, restated as follows.

\VCIconn*

\begin{proof}
    Consider an election ${\mathcal E}=(V, A, k)$.
    By \Cref{prop:VCI-to-CI}, we may assume without loss of generality that $\mathcal E$ belongs to the CI domain.
    Moreover, by polynomial-time preprocessing, we obtain a candidate order $c_1, \dots, c_m$ witnessing membership in the CI domain \citep{EL15}.
    For each voter $v\in V$, let $\ell(v):=\min\{i\colon c_i\in A_v\}$ and $r(v):=\max\{i\colon c_i\in A_v\}$ be the leftmost and the rightmost approved candidates of voter $v$, respectively.
    Given candidate indices $1\le j < i\le m$, let $V(\lnot j, i)$ be the set of voters that approve $c_i$, but not~$c_j$,
    and let $n(\lnot j, i)$ be the size of this set.
    Formally, we define
    $$
    V(\lnot j,i) := \{v\in V\colon j < \ell(v) \le i\le r(v)\} \quad\text{ and }\quad n(\lnot j,i):=\lvert V(\lnot j,i)\rvert.
    $$
    Further, we introduce an indicator variable $\mathbb{1}(j\land i)$ defined as
    \begin{equation*}
        \mathbb{1}(j\land i) :=
        \begin{cases}
            \text{true} &\text{if } V_{c_j}\cap V_{c_i} \neq \emptyset\\
            \text{false} &\text{otherwise.}
        \end{cases}
    \end{equation*}
    That is, $\mathbb{1}(j\land i)$ is true if and only if there is a voter that approves both $c_j$ and $c_i$.
    \paragraph{Calculating the number of connected pairs after adding a candidate.}
    Consider adding $c_i$ to a committee $W\subseteq \{c_1, \dots, c_{i-1}\}$, i.e., $c_i$ is to the right of all candidates in $W$ with respect to the candidate order.
    Let $j^*\in [i-1]$ be the index of the rightmost candidate in $W$.
    We will now discuss how to update $\conpair(W\cup \{c_i\})$. The update procedure depends
    on the value of $\mathbb{1}(j^*\land i)$, i.e, on whether there is a voter that approves both $c_{j^*}$ and $c_i$.

    First, suppose that $W=\emptyset$ or $\mathbb{1} (j^*\land i)$ is false.
    Then we claim that no voter in $V_{c_i}$ approves any candidate in $W$. This is obvious if $W=\emptyset$.
    On the other hand, if $\mathbb{1} (j^*\land i)$ is false, suppose for contradiction that some voter $v\in V_{c_i}$ approves some candidate $c_j\in W$. Then by the choice of $j^*$ we have $j<j^*$, and in the CI domain $c_j,c_i\in A_v$ implies $c_{j^*}\in A_v$, a contradiction.

    Thus, adding $c_i$ interconnects the voters approving $c_i$,
    but does not connect any of them to any other voters.
    Hence, in both cases,
    the number of connected pairs after adding $c_i$ to $W$ is
    $$
    \conpair(W\cup \{c_i\}) = \conpair(W) +
    \binom{\lvert V_{c_i}\rvert}{2}.
    $$

    Now, suppose that $\mathbb{1} (j^*\land i)$ is true. By CI, if a voter approves both $c_i$ and some $c_j\in W$, then they also approve $c_{j^*}$.
    The update now depends on the connected component containing $c_{j^*}$ in the hypergraph induced by $W\cup \{c_i\}$.
    Given a set of candidates $W$ and a candidate $c\in W$,
    we define
    $\concomp{c}{W} := \{u\in V\colon u\sim_W v ~\text{for some $v\in V_c$}\}$.
    Note that $\concomp{c}{W}$ is well-defined, because $\sim_W$ is an equivalence relation and $v\sim_W v'$ for all $v, v' \in V_c$.
    In fact, $\concomp{c}{W}$ is exactly the set of voters in the connected component containing the hyperedge~$c$.

    Now, adding $c_i$ creates two types of connections:
    those among the newly connected voters in $V(\lnot j^*, i)$ and those between $V(\lnot j^*, i)$ and the voters in the connected component of $c_{j^*}$, i.e., $\concomp{c_{j^*}}{W}$. Therefore,
    we have
    $$
    \concomp{c_i}{W\cup \{c_i\}} = V(\lnot j^*, i)\cupdot \concomp{c_{j^*}}{W},
    $$
    where $\cupdot$ denotes the disjoint union of two sets. Consequently,
    $$
    \conpair(W\cup \{c_i\}) = \conpair(W) + \binom{n(\lnot j^*,i)}{2} + n(\lnot j^*, i) \cdot \lvert \concomp{c_{j^*}}{W}\rvert.
    $$

    \paragraph{Defining the dynamic program.}
    For each $i\in [m]$, $b\in [k]$, $x\in \{0\}\cup[n]$,
    let $\opt[i,x,b]$ denote the maximum number of voter pairs that can be connected by a committee of size at most $b$ that has $c_i$ as its rightmost candidate, with $c_i$ being in a connected component that contains $x$ voters.
    We use the convention that $\opt[i,x,b] = -1$ if there is no such committee.
    We will define functions $\dyn[i,x,b]$ and $W[i,x,b]$ and argue that for all $i,x,b$ it holds that $\dyn[i,x,b] = \opt[i,x,b]$ and, moreover, if this value is nonnegative, $W[i,x,b]$ is a committee of size at most $b$ with rightmost candidate $c_i$ being in a connected component that contains $x$ voters, which satisfies $\Connections(W[i, x, b])=\opt[i, x, b]$.

    We initialize the dynamic program by handling the case $b=1$. For convenience, we also deal with the case $i=1$ at this point.
    \begin{itemize}
        \item
        For each $i\in [m]$,
        the number of pairs connected by $\{c_i\}$ is $\binom{|V_{c_i}|}{2}$.
        Hence, we set $\dyn[i,\lvert V_{c_i}\rvert,1] = \binom{|V_{c_i}|}{2}$ and
        $W[i,\lvert V_{c_i}\rvert, 1]=\{c_i\}$.
        For $x\neq \lvert V_{c_i}\rvert$, we set
        $\dyn[i,x,1] = -1$.
        \item
        For each $b\in [k]$, we set $\dyn[1,\lvert V_{c_1}\rvert,b] = \binom{|V_{c_1}|}{2}$:
        this is the number of pairs connected by $\{c_1\}$.
        Also, set $W[1,\lvert V_{c_1}\rvert, b]=\{c_1\}$, and $\dyn[1,x,b] = -1$ for $x\neq \lvert V_{c_1}\rvert$.
    \end{itemize}
    Clearly,
    it holds that $\dyn[i,x,b] = \opt[i,x,b]$ for $b=1$ and all $i\in [m]$, $x\in\{0\}\cup[n]$ as well as for $i=1$ and all $x\in\{0\}\cup [n]$, $b\in[k]$. Moreover, for triples $(i, x, b)$ that satisfy $b=1$ or $i=1$
    it holds that
    if $\dyn[i,x,b]\ge 0$, then $W[i,x,b]$ is a committee that provides $\opt[i,x,b]$ connections.

    Having computed $\dyn[i,x, 1]$ and $W[i,x, 1]$ for all $i\in [m]$, $x\in\{0\}\cup[n]$, we proceed in $k-1$ stages: in stage $t\in [k-1]$, we compute $\dyn[i,x, t+1]$ and $W[i,x, t+1]$ for all $i\in [m]$, $x\in\{0\}\cup[n]$.
    Within each stage, we proceed in increasing order of $i$ (note that the case $i=1$ was handled during the initialization stage).
    Thus, it remains to explain how to fill out the cells $\dyn[i, x, b]$ and $W[i, x, b]$ with $i,b\ge 2$, given that we have already filled out $\dyn[j,y,b-1]$ and $W[j, y,b-1]$ for all $j<i$ and all $y\in\{0\}\cup[n]$.

    For a given triple $(i,x, b)$ and each $j\in [i-1]$, $y\in\{0\}\cup[n]$, we
    define $\score(i,x,b,j,y)$ as follows:
    \begin{itemize}
        \item If $\mathbb{1}(j\land i)$ is true, $y+n(\lnot j,i) = x$, and $\dyn[j,y,b-1]\ge 0$, set $\score(i,x,b,j,y) = \dyn[j,y,b-1] + \binom{x-y}{2} +y(x-y)$.
        \item If $\mathbb{1}(j\land i)$ is false, $\lvert V_{c_i}\rvert = x$, and $\dyn[j,y,b-1]\ge 0$, set $\score(i,x,b,j,y) = \dyn[j,y,b-1] + \binom{x}{2}$.
        \item In all other cases, set $\score(i,x,b,j,y) = -1$.
    \end{itemize}

    Note that $\score(i,x,b,j,y)$ calculates the number of pairs that can be obtained by starting with the committee $W[j,y,b-1]$ and then adding $c_i$ so that the connected component of $c_i$ has $x$ voters. We then define $\dyn[i,x,b]$ and $W[i,x,b]$ so as to maximize this quantity.
    \begin{itemize}
        \item Pick $(j^*, y^*)$ from $\arg\max_{(j, y): j\in [i-1], y\in\{0\}\cup[n]} \score(i,x,b,j,y)$.
        \item Set $\dyn[i,x,b] = \score(i,x,b,j^*,y^*)$.
        \item Set $W[i,x,b] = W[j^*,y^*,b-1]\cup \{c_i\}$ if $\dyn[i,x,b]\ge 0$.
    \end{itemize}

    \paragraph{Correctness of the dynamic program.}
    By construction, it holds that if $\dyn[i,x,b]$ is nonnegative, then we have $\Connections(W[i, x, b])=\dyn[i, x, b]$.
    It remains to prove correctness of the update formulas for the dynamic program, i.e., to show that $\dyn[i,x,b] = \opt[i,x,b]$ for all $i\in [m]$, all $x\in\{0\}\cup[n]$ and all $b\in[k]$. To this end, we proceed by induction on $b$, and for each fixed value of $b$, by induction on $i$ (note that we have already argued that our dynamic program is correct for $b=1$ and for $i=1$).
    Thus, we fix a triple $(i,x, b)$ and assume that $\dyn[j, y, b-1]=\opt[j,y, b-1]$ for all $j<i$ and all $y\in [0]\cup\{n\}$; our goal is to show that
    $\dyn[i, x, b]=\opt[i,x, b]$.
    We
    split the proof into two parts.

    First, we will argue that $\dyn[i, x, b]\le\opt[i, x, b]$.
    Clearly, this is true if $\dyn[i, x, b] = -1$.
    Otherwise, $W[i,x,b]=W[j, y, b-1]\cup\{c_i\}$
    for some $j<i$, $y\in\{0\}\cup[n]$ that maximize $\score(i,x,b,j,y)$. Since $\dyn[i, x, b]$ is positive, the quantity $\score(i,x,b,j,y)$ is positive as well. Consequently,
    if $\mathbb{1}(j\land i)$ is false, then necessarily $x=|V_{c_i}|$,
    while if $\mathbb{1}(j\land i)$ is true, then necessarily $x = n(\lnot j,i) + y$.

    By our previous observations, if $\mathbb{1}(j\land i)$ is true,
    the number of pairs interconnected by $W[i,x,b]$ is equal to
    \begin{align*}
      \conpair(W[i,x,b]) &= \conpair(W[j,y, b-1]) + \binom{x-y}{2} +(x-y) y \\
      &=\dyn[j,y,b-1] + \binom{ x-y }{2} +y(x-y)\\
       & = \score(i,x,b,j,y)
       = \dyn[i,x,b]\text,
    \end{align*}
    and if $\mathbb{1}(j\land i)$ is false,
    the number of pairs interconnected by $W[i,x,b]$ is equal to
    \begin{align*}
       \conpair(W[i,x,b]) & = \conpair(W[j,y,b-1]) + \binom{ x}{2}
       = \dyn[j,y,b-1] + \binom{ x}{2}\\
        & = \score(i,x,b,j,y)
        = \dyn[i,x,b]\text;
    \end{align*}
    in both cases, the second transition uses the inductive hypothesis.
    As $W[i,x,b]$ is a committee of size at most $b$ that has $c_i$ as its rightmost candidate, with $c_i$ being in a connected component that contains $x$ voters, we conclude that $\dyn[i,x,b]\le \opt[i,x,b]$.

    Next, we will show that $\dyn[i, x, b]\ge\opt[i, x, b]$.
    First, if there is no committee of size at most $b$ that has $c_i$ as its rightmost candidate, with $c_i$ being in a connected component containing $x$ voters, then $\opt[i, x, b] = -1$, and the inequality is true.
    Otherwise, consider some such committee $W^*$ with $\Connections(W^*)=\opt[i,x,b]$.
    Since $b\ge 2$ and $i\ge 2$, we may assume without loss of generality that $\lvert W\rvert \ge 2$.
    We set $W' = W^*\setminus \{c_i\}$, let $j$ be the rightmost candidate in $W'$, and let $y$ be the size of $j$'s connected component with respect to $W'$.
    If $\mathbb{1}(j\land i)$ is true, we have
    \begin{align*}
        \dyn[j,y,b-1] &=
        \opt[j,y,b-1]
        \ge \conpair(W')
        =   \conpair(W^*) - \binom{x-y}{2} -(x-y) y
        =   \opt[i,x,b] - \binom{x-y}{2} -(x-y) y\text,
    \end{align*}
    and if $\mathbb{1}(j^*\land i)$ is false, we have
    \begin{align*}
        \dyn[j,y,b-1] & =
        \opt[j,y,b-1]
        \ge \conpair(W')
         =  \conpair(W^*) - \binom{x}{2}
         =  \opt[i,x,b] - \binom{x}{2}\text.
    \end{align*}
    In the first case, we have $\dyn[i,x,b] \ge \score(i,x,b,j,y) = \dyn[j,y,b-1] + \binom{x-y}{2} +(x-y) y \ge \opt[i,x,b]$. Similarly, in the second case, we have $\dyn[i,x,b] \ge \score(i,x,b,j,y) = \dyn[j,y,b-1] + \binom{x}{2} \ge \opt[i,x,b]$.

    Together, we obtain $\dyn[i,x,b] = \opt[i,x,b]$.
    Finally, to compute a feasible committee that maximizes \Connections{}, we output an arbitrary committee $W\in\arg\,\max_{i\in [m], x\in \{0\}\cup[n]}\Connections(W[i, x, k])$.

    Note that our dynamic program has $\orderof(mnk)$ cells,
    as $i\in [m]$, $x\in \{0\}\cup[n]$, and $b\in [k]$.
    Moreover, every cell can be filled in polynomial time given the values of previously computed cells.
    Hence, we have obtained a polynomial-time algorithm to compute a committee that maximizes \Connections{}.
\end{proof}

\section{Approximation of EJR via the Method of Equal Shares}\label{app:MES}

We define a parameterized version the method of equal shares, which we will refer to as $\alpha$-MES; the parameter $\alpha \in(0, 1)$ indicates which fraction of the `budget' $k$ is made available to the voters. The standard version of this rule corresponds to $\alpha=1$.

On an instance $(V, A,k)$ with $|V|=n$, under $\alpha$-MES every voter $v\in V$ is assigned a \emph{per-voter budget} $\mathrm{bud}(v) = \alpha\cdot \frac{k}{n}$.
The committee $W$ is initialized as the empty set, and every candidate is assumed to have a cost of $1$.
Informally, at each step, we consider the candidates in $C\setminus W$ that can be afforded by the voters approving them while sharing the costs equally (with the caveat that a voter who cannot afford to pay an equal share can contribute her entire remaining budget instead). We define a candidate's {\em price} $\rho$ as the maximum amount that a supporter of this candidate contributes to its cost.
We then add a candidate with a minimum price to $W$, and proceed to the next step.

More formally, at the start of each step we let
$$
C^* = \{c\in C\setminus W: \sum_{v\colon c\in A_v}\mathrm{bud}(v)\ge 1\}\text.
$$
If $C^*=\varnothing$, the algorithm terminates and returns $W$. Otherwise, for each $c\in C^*$
we set
$$
\rho (c)=\min\{\rho\ge 0: \sum_{v\colon c\in A_v} \min{(\mathrm{bud}(v),\rho)}\ge 1\}\text.
$$
We then pick $c^*$ in $\arg\min_{c\in C^*}\rho(c)$ and add it to $W$.
In addition, we update the budget of each voter $v$ with $c^*\in A_v$ as
$\mathrm{bud}(v) - \min\{\mathrm{bud}(v),\rho(c^*)\}$.

We are ready to prove the EJR guarantee achieved by this method.

\MESalpha*
\begin{proof}
    At the start of the procedure, the sum of voters' budgets is $\alpha k$, and in each
    round the total budget is reduced by $1$. Hence,
    $\alpha$-MES terminates after at most $\lfloor \alpha k \rfloor$ rounds, returning a committee of size at most $\lfloor \alpha k \rfloor$.
    Assume without loss of generality that
    $W=\{c_1, \dots, c_r\}$, where $r\le \lfloor \alpha k \rfloor$ and for each $t\in [r]$ it holds that candidate $c_t$ is added to $W$ in round $t$.

    Assume for contradiction that $W$ violates $\alpha$-\EJR{}.
    Hence, there is a subset of voters $S\subseteq V$ and a positive integer $\ell \le k$ with $\lvert S\rvert \ge \frac{\ell n}{\alpha k}$ and $|\bigcap_{v \in S} A_v| \ge \ell$ such that $\lvert A_v \cap W \rvert < \ell$ for all $v\in S$. Observe that the set $\bigcap_{v\in S} A_v\setminus W$ is non-empty, and
    let $c^*$ be some candidate in this set.

    We claim that whenever a voter $v\in S$ makes a positive contribution towards the cost of some candidate during the execution of the algorithm,
    she pays at most $\frac{\alpha k}{\ell n}$.
    Indeed, suppose that this is not the case,
    and consider the first round~$t$ in which a voter $v\in S$ spends strictly more than $\frac {\alpha k}{\ell n}$.
    By definition of MES, this means that in round $t$ we have $\rho(c_t) > \frac{\alpha k}{\ell n}$. On the other hand, in each round $i<t$, each voter in $S$ spends at most $\frac{\alpha k}{\ell n}$. Moreover, by our assumption, each voter in $S$ approves at most $ \ell-1$ candidates from $W$.
    Thus, before round $t$, the remaining budget of each voter in $S$ is at least $\frac{\alpha k}{n}- (\ell -1) \cdot \frac{\alpha k}{\ell n} = \frac{\alpha k}{\ell n}$. Thus, together they have a budget of at least $|S|\cdot\frac{\alpha k}{n\ell} \ge \frac{\ell n}{\alpha k} \cdot \frac{\alpha k}{\ell n} = 1$. This means that the candidate $c^*\in \bigcap_{v\in S} A_v\setminus W$ belongs to the set $C^*$ at the start of round $t$ and, moreover, $\rho(c^*)\le \frac{\alpha k}{\ell n}<\rho(c_t)$, a contradiction with the choice of~$c_t$.

    Now, since each voter in $S$ contributes at most $\frac{\alpha k}{\ell n}$ towards the cost of each candidate in $W$, and she pays for at most $\ell-1$ candidates in $W$, at the end of round $r$ the remaining budget of each voter in $S$
    is at least $\frac{\alpha k}{\ell n}$, so voters in $S$
    can still afford candidate $c^*$, a contradiction with the assumption that $\alpha$-MES terminates after $r$ rounds.
\end{proof}

\end{document}